\newenvironment{claimproof}[1][Proof of Claim]
{ \begin{proof}[#1]}
{\end{proof}  }
\newcommand{\p}[1]{\mathbb{P}\left[{#1}\right]}
\newcommand{\e}[1]{\mathbb{E}\left[{#1}\right]}
\newcommand{\eq}[1]{(\ref{#1})}
\newcommand{\eps}{\varepsilon}
\newcommand{\geo}{\operatorname{Geo}}
\newcommand{\rad}{\operatorname{ht}}
\newcommand{\wh}{\operatorname{wht}}
\newcommand{\outdeg}{\operatorname{out-deg}}
\theoremstyle{plain}
\newtheorem{theorem}{Theorem}
\newtheorem{lemma}[theorem]{Lemma}
\newtheorem{corollary}[theorem]{Corollary}
\theoremstyle{definition}
\newtheorem*{definition}{Definition}
\theoremstyle{remark}
\newtheorem*{claim}{Claim}
\title{It's a Small World for Random Surfers}
\author{Abbas Mehrabian \\
{\small Department of Combinatorics and Optimization, University of Waterloo} \\
{\small \texttt{amehrabi@uwaterloo.ca}}
\and Nick Wormald\thanks{Supported by Australian Laureate Fellowships grant FL120100125.}\\
{\small School of Mathematical Sciences, Monash University} \\
{\small \texttt{nick.wormald@monash.edu}}
}
\date{}
\begin{document}

\maketitle

\begin{abstract}
We prove logarithmic upper bounds for the diameters of the random-surfer Webgraph model and the PageRank-based selection Webgraph model, confirming the small world phenomenon holds for them.
In the special case when the generated graph is a tree, we provide close lower and upper bounds for the diameters of both models.

\textbf{Keywords:} random-surfer Webgraph model, PageRank-based selection model, small-world phenomenon, height of random trees, probabilistic analysis, large deviations
\end{abstract}

\section{Introduction}
Due to the ever growing interest in social networks, the Webgraph, biological networks, etc., in recent years a great deal of research has been built around modelling real world networks (see, e.g., the monographs~\cite{Bonato,Chakrabarti2012,complexgraphs,durrett}).  
One of the important observations about many real world networks involves the diameter, which is the maximum shortest-path distance between any two nodes. The so-called \emph{small world phenomenon} is that the diameter of a network is significantly smaller than its size, 
typically growing as a polylogarithmic function.

The Webgraph is a directed graph whose vertices are the static web pages, and there is an edge joining two vertices if there is a hyperlink in the first page pointing to the second page. Barab\'{a}si and Albert~\cite{barabasi_albert} in 1999 introduced one of the first models for the Webgraph, widely known as the \emph{preferential attachment model}. Their model can be informally described as follows (see~\cite{diameter_preferential_attachment} for the formal definition).
Let $d$ be a positive integer. We start with a fixed small graph, and in each time-step a new vertex appears and is joined to $d$ old vertices, where the probability of joining to each old vertex is proportional to its \emph{degree}.
Pandurangan, Raghavan and Upfal~\cite{pagerank_model_conference} in 2002 introduced the \emph{PageRank-based selection model} for the Webgraph. This model is similar to the previous model, except the attachment probabilities are proportional to the \emph{PageRanks} of the vertices rather than their degrees.
Blum, Chan, and Rwebangira~\cite{random_surfer} in 2006 introduced a \emph{random-surfer model} for the Webgraph, in which the $d$ out-neighbours of the new vertex are chosen by doing $d$ independent random walks that start from random vertices and whose lengths are geometric random variables with parameter $p$.
It was shown that under certain conditions, the previous two models are equivalent.
See Section~\ref{sec:def} for the formal definitions of these models, and the condition for their equivalence.

The directed models considered here generate directed acyclic graphs (new vertices create edges to old vertices), so it is natural to define the \emph{diameter} of a directed graph as the maximum shortest-path distance between any two vertices in its underlying undirected graph.
The diameter of the preferential attachment model was analysed by Bollob{\'a}s and Riordan~\cite{diameter_preferential_attachment}.
{Previous work on the PageRank-based selection and random-surfer models has focused on their degree distributions.} 
To the best of our knowledge, the diameters of {these} models have not been studied previously, and it is an open question even whether {they} have logarithmic diameter. 
One of the main contributions of this paper is giving logarithmic upper bounds for their diameters.
We also give close lower and upper bounds in the special case $d=1$, namely when the generated graph is (almost) a tree.
It turns out that the key parameter in this case is the \emph{height} of the generated random tree. 
We find the asymptotic value of the  height for all $p \in[0.21,1]$, and for $p \in(0, 0.21)$ we provide logarithmic lower and upper bounds. Our results hold \emph{asymptotically almost surely {(a.a.s.)}}, which means the probability that they are true approaches 1 as the number of vertices grows. 
%As the two models are equivalent, we will focus on the random-surfer model, since it is easier to work with.

\subsection{Our approach and organization of the paper}

In the preferential attachment model and most of its variations (see, e.g.,~\cite{barabasi_albert,diameters_pa,pa_general_1,pa_general_2})
the probability that the new vertex attaches to an old vertex $v$, called the \emph{attraction} of $v$, is proportional to a deterministic function of the degree of $v$.
In other variations (see, e.g.,~\cite{multiplicative_fitness_def,additive_fitness})
the attraction also depends on the so-called `fitness' of $v$, which is a random variable generated independently for each vertex and does not depend on the structure of the graph. For analysing such models when they generate trees, a typical technique is to approximate them with population-dependent branching processes and prove that results on the corresponding branching processes carry over to the original models. A classical example is  Pittel~\cite{random_recursive_trees} who estimated the height of random recursive trees.  Bhamidi~\cite{bhamidi} used this technique to show that the height of a variety of preferential attachment trees is asymptotic to a constant times the logarithm of the number of vertices, where the constant depends on the parameters of the model.

In the random-surfer Webgraph model, however, the attraction of a vertex does not depend only on its degree, but rather on the graph's general structure, so the branching processes techniques cannot apply directly, and new ideas are needed.

The  crucial  novel idea in our proof is to reduce the attachment rule to a simple one, with the help of introducing (possibly negative) `weights' for the edges. 
First, consider the general case, $d\ge 1$. 
Whenever a new vertex appears, it builds $d$ new edges to old vertices; suppose that we mark the first new edge.
Then the marked edges induce a spanning tree
whose diameter we bound, and thus we get an upper bound for the diameter of the random-surfer Webgraph model.

In the special case $d=1$, we obtain a \emph{random recursive tree} with edge weights, and then we  adapt a powerful technique developed by Broutin and Devroye~\cite{treeheight} (that uses branching processes) to study its weighted height. 
This technique is based on large deviations.
Their main theorem~\cite[Theorem~1]{treeheight} is not applicable here for two reasons. Firstly, the weights of edges on the path from the root to each vertex are not independent, and secondly, the weights can be negative.

We define the models and state our main results in Section~\ref{sec:def}. 
In Section~\ref{sec:graph} we give logarithmic upper bounds for the diameters of the random-surfer Webgraph model and the PageRank-based selection Webgraph model in the general case $d\ge 1$.
In Sections~\ref{sec:transform}--\ref{sec:upper} we focus on the special case $d=1$ and prove close lower and upper bounds for the heights and diameters of the models.
Section~\ref{sec:transform} contains the main technical contribution of this paper, where we explain how to transform the random-surfer tree model into one that is easier to analyse. 
The lower and upper bounds are proved in Sections~\ref{sec:lower} and~\ref{sec:upper}, respectively. Concluding remarks appear in Section~\ref{sec:conclude}.
For easing the flow of reading the paper, proofs of some technical lemmas has been put in the appendix.

\section{Definitions and main results}
\label{sec:def}

Given $p\in(0,1]$, let $\geo(p)$ denote a geometric random variable with parameter $p$; namely for every nonnegative integer $k$,
$\p{\geo(p) = k} = (1-p)^kp$.

\begin{definition}[Random-Surfer Webgraph model~\cite{random_surfer}]
Let $d$ be a positive integer and let $p\in (0,1]$.
Generate a random directed rooted $n$-vertex multigraph, with all vertices having out-degree $d$.
Start with a single vertex $v_0$, the root, with $d$ self-loops.
At each subsequent step $s$, where
 $1 \le s \le n-1$, a new vertex $v_s$ appears and $d$ edges are created from it to vertices in $\{v_0,v_1,\dots,v_{s-1}\}$, by doing the following probabilistic procedure $d$ times, independently:
choose a vertex $u$ uniformly at random from $\{v_0,v_1,\dots,v_{s-1}\}$,
and a fresh random variable $X=\geo(p)$;
perform a simple random walk of length $X$ starting from $u$, and join $v_s$ to the last vertex of the walk. 
\end{definition}

The motivation behind this definition is as follows.
Think of the vertex $v_s$ as a new web page that is being set up.
Say the owner wants to put $d$ links in her web page.
To build each link, she does the following: she goes to a random page. With probability $p$ she likes the page and puts a link to that page.
Otherwise, she clicks on a random link on that page, and follows the link to a new page.
Again, with probability $p$ she likes the new page and puts a link to that, 
otherwise clicks on a random link etc., until she finds a desirable page to link to.
The geometric random variables correspond to this selection process.

Our main result regarding the diameter of the random-surfer Webgraph model is the following theorem
(recall that the diameter of a directed graph is defined as the diameter of its underlying undirected graph).
All logarithms are natural in this paper.

\begin{theorem}
\label{thm:diameter_webgraph_new}
Let $d$ be a positive integer and let $p\in(0,1]$. 
A.a.s.\ as $n\to \infty$
the diameter of the random-surfer Webgraph model with parameters $p$ and $d$
is at most $8 e^p (\log n) / p$.
\end{theorem}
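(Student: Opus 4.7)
The plan is to follow the outline from Section~1: for each $s\ge 1$, mark the first out-edge built by $v_s$, namely the edge from $v_s$ to $w_s$ where $w_s$ is the endpoint of the first random walk performed by $v_s$. The marked edges form a spanning arborescence $T$ rooted at $v_0$ (each non-root vertex has exactly one marked out-edge to an older vertex), and since $T$ is a spanning subgraph of $G$, the triangle inequality gives $\mathrm{diam}(G)\le 2\max_s d_G(v_s,v_0)$, where $d_G$ denotes distance in the underlying undirected graph.

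The crux is the deterministic inequality $d_G(v_s,u_s)\le 1+Y_s$, where $u_s$ is the uniform seed and $Y_s\sim\Geo(p)$ is the geometric walk length used to build the marked edge from $v_s$: traversing the marked edge from $v_s$ to $w_s$ and then reversing the length-$Y_s$ random walk from $w_s$ back to $u_s$ produces an undirected path of length $1+Y_s$. Iterating along the chain $s_0:=s$, $s_{i+1}:=u_{s_i}$, which is precisely the ancestor chain of $v_s$ in the random recursive tree $\mathcal{R}$ defined by the uniform parents $u_1,u_2,\ldots$, yields
\[
 d_G(v_s,v_0) \;\le\; \sum_{i=0}^{K_s-1}\bigl(1+Y_{s_i}\bigr),
\]
where $K_s$ is the $\mathcal{R}$-depth of $v_s$. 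In other words, $\max_s d_G(v_s,v_0)$ is at most the weighted height of $\mathcal{R}$ under the i.i.d.\ edge weights $1+Y_t\sim 1+\Geo(p)$, and crucially these weights are independent of the tree $\mathcal{R}$ itself (since the $Y_t$'s are fresh and independent of the $u_t$'s).

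To bound this weighted height, classical results on random recursive trees give $\max_s K_s\le(1+o(1))e\log n$ a.a.s., and conditional on $\mathcal{R}$, along any fixed root-to-vertex path of length $K$ the weighted length is a sum of $K$ i.i.d.\ copies of $X:=1+\Geo(p)$ with mean $K/p$, which concentrates via a Chernoff bound based on the moment generating function $\e{e^{tX}}=pe^t/(1-(1-p)e^t)$. A union bound over the $n$ vertices, after conditioning on $\{\max_s K_s\le K^*\}$ with $K^*=(1+o(1))e\log n$ and using that the sum is monotone in the chain length, then yields the desired bound on the weighted height and hence on the diameter. The main technical obstacle is the Chernoff calculation: one must choose the parameter $t\in(0,-\log(1-p))$ so that $n\cdot e^{-tT}\bigl(\e{e^{tX}}\bigr)^{K^*}=o(1)$ with $T=4e^p\log n/p$, uniformly in $p\in(0,1]$. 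The natural choice $t\asymp p$ should produce the $e^p$ dependence after simplification and balancing against the $e\log n$ factor contributed by the RRT height; handling the behaviour of the MGF as $t$ approaches the boundary $-\log(1-p)$ (i.e., for small $p$) is the most delicate part of the argument.
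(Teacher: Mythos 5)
Your reduction is sound, and it is in essence the paper's own: marking the first edge of each new vertex and charging $1+Y_s$ for the detour $v_s\to w_s\to u_s$ produces exactly the paper's auxiliary tree, namely a random recursive tree on $v_0,\dots,v_{n-1}$ with i.i.d.\ edge weights distributed as $1+\geo(p)$ and independent of the tree shape, whose weighted height dominates the height (and half the diameter) of the graph. The genuine gap is in your final step. Conditioning on $\max_s K_s\le K^*=(1+o(1))e\log n$ and then union-bounding all $n$ vertices against the tail of a sum of $K^*$ i.i.d.\ copies of $X=1+\geo(p)$ cannot give the stated constant for small $p$, and no choice of the Chernoff parameter $t$ repairs this: for fixed $p$ Cram\'er's theorem gives the matching lower bound $\p{X_1+\dots+X_{K^*}>T}=\exp\big(K^*\log f(2-\kappa)+o(\log n)\big)$ with $\kappa=T/K^*=4e^p/(pe)$ and $f$ as in \eq{f_def}, so every quantity of the form $n\,e^{-tT}\e{e^{tX}}^{K^*}$ is at least $n\,\p{X_1+\dots+X_{K^*}>T}=\exp\big((1+e\log f(2-\kappa)+o(1))\log n\big)$. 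As $p\to 0$ one computes $\log f(2-\kappa)\to\log 4-4/e$, so the exponent tends to $e\log 4-3\approx 0.77>0$; numerically it is already positive for $p=0.1$ (about $0.57$) and only becomes negative for $p$ above roughly $0.3$. Thus the quantity you need to be $o(1)$ actually diverges for small $p$ (the theorem is of course still true; it is the bound ``$n$ vertices $\times$ worst-case-depth tail'' that is too lossy). The loss comes from the fact that only $n^{o(1)}$ vertices of the recursive tree have depth near $e\log n$, whereas your union bound charges all $n$ vertices with the maximal depth; in particular the heuristic ``$t\asymp p$ gives the $e^p$ dependence'' does not survive the computation.

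The repair is to keep the joint distribution of depth and weight, which is exactly what the paper does: write the failure probability as $\sum_d n\,\p{D(n)=d}\p{W(n)>\eta\log n\mid D(n)=d}$ with $\eta=4e^p/p$, bound $\p{D(n)\ge d}\le\exp\big(-(d-1)\Upsilon(\log n/(d-1))\big)$ via the representation $P(s)=\lfloor sU_s\rfloor$ and Lemma~\ref{lem_gamma}, bound the conditional tail by $f(2-\eta\log n/d)^d$ via Lemma~\ref{pro:geo}, and then verify that the combined exponent $c-c\log c+c\log f(2-\eta/c)$, $c=d/\log n$, is uniformly negative on $(0,1.1e]$; this verification is precisely Lemma~\ref{lem:technical} and is where the constant $4e^p/p$ (hence $8e^p/p$ for the diameter) is actually earned. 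With that substitution your argument becomes the paper's proof (specialized to $\beta=0$, i.e.\ $\mathcal L=\geo(p)$); as written, it only yields the stated bound for $p$ bounded away from $0$.
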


{
Notice that the upper bound in Theorem~\ref{thm:diameter_webgraph_new} does not depend on $d$ (whereas one would expect that the diameter must decrease asymptotically as $d$ increases).
This independence is because in our argument we employ only the first edge created by each new vertex to bound the diameter.}

{
When $d=1$, we show in Theorem~\ref{thm:diameter} below that the diameter is a.a.s.\ $\Theta(\log n)$.
An interesting open problem is to evaluate the asymptotic value of the diameter when $d>1$. 
%This problem seems to need quite different techniques. 
In this regime the diameter might be of a smaller order, e.g.\ $\Theta(\log n / \log \log n)$, as is the case for the preferential attachment model (see~\cite[Theorem~1]{diameter_preferential_attachment}).
}

A \emph{random-surfer tree} is an undirected tree obtained from a random-surfer Webgraph with $d=1$ by deleting the self-loops of the root and ignoring the edge directions.
The \emph{height} of a tree is defined as the maximum graph distance between a vertex and the root.
Our main result regarding the height of the random-surfer tree model is the following theorem.

\begin{theorem}
\label{thm:main}
For $p\in (0,1)$, let $s{=s(p)}$ be the unique solution in $(0,1)$ to
\begin{equation}
\label{s_def}
s \log \left ( \frac{(1-p)(2-s)}{1-s} \right) = 1 \:.
\end{equation}
Let $p_0 \approx 0.206$ be the unique solution in $(0,1/2)$ to
\begin{equation}
\label{p0_def}
\log \left( \frac{1-p}{p} \right) = \frac{1-p}{1-2p} \:.
\end{equation}
Define the functions $c_L, c_U : (0,1) \to \mathbb{R}$ as
\begin{equation*}
c_L (p) = \exp(1/s) s (2-s) p \:,
\end{equation*}
and
\begin{equation*}
c_U (p) =
\begin{cases}
c_L (p) & \mathrm{if\ } p_0 \le p < 1 \\
\left( \log \left( \frac{1-p}{p} \right) \right)^{-1} & \mathrm{if\ } 0 < p < p_0 \:.
\end{cases}
\end{equation*}
For every fixed $\eps>0$, a.a.s.\ as $n\to \infty$
the height of the random-surfer tree model with parameter $p$
is between $(c_L(p) - \eps) \log n$ and $(c_U(p) + \eps) \log n$.
\end{theorem}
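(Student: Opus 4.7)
The plan is to follow the program sketched in the introduction, carried out over Sections~\ref{sec:transform}--\ref{sec:upper}. In Section~\ref{sec:transform} I would transform the random-surfer tree $T$ into an auxiliary random recursive tree $\widetilde T$ with integer edge weights. At step $s$, the new vertex $v_s$ attaches to the endpoint $w_s$ of a $\Geo(p)$-length walk from a uniformly chosen $u_s$; I declare $u_s$ to be the parent of $v_s$ in $\widetilde T$ (so $\widetilde T$ is a plain random recursive tree) and assign to the edge $(v_s,u_s)$ the integer weight
\[
W_s \;=\; 1 + \mathrm{depth}_T(w_s) - \mathrm{depth}_T(u_s),
\]
which is zero or negative whenever the walk ends weakly closer to the root than $u_s$. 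A one-line induction on $s$ shows that $\mathrm{depth}_T(v_s)$ equals the weighted depth of $v_s$ in $\widetilde T$, i.e.\ the sum of $W$'s along the unique root-to-$v_s$ path, so the height of $T$ equals the weighted height of $\widetilde T$.

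Next I would adapt the large-deviations framework of Broutin and Devroye~\cite{treeheight} to compute the weighted height of $\widetilde T$. Their paradigm, applied heuristically to an RRT with i.i.d.\ weights of log-MGF $\Lambda(\lambda)$, gives a weighted height of $c^*\log n + o(\log n)$ where $c^*$ is the solution of a Cram\'er-type variational problem in which the exponential growth rate ($=1$) of the RRT plays the role of the binding constraint. For our $W_s$, the relevant law is that of $1$ plus the depth displacement of a $\Geo(p)$-step simple random walk on the local tree structure around $u_s$, whose log-MGF has a clean closed form in $p$ after appropriate idealisation. Writing out the first-order optimality condition reduces, after some algebra, to the transcendental equation~\eqref{s_def} defining $s(p)$, and $c_L(p)=e^{1/s}s(2-s)p$ arises as the value of the supremum at the optimum. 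For $p \ge p_0$ the same computation produces a matching upper bound, so $c_U=c_L$ in that range. For $p<p_0$ the unconstrained Legendre optimiser falls outside the domain of finiteness of the MGF, and the upper-bound-extremal paths are instead those containing a single long walk of length $\Theta(\log n)$---these occur with probability $(1-p)^{\Theta(\log n)}$, and optimising over this regime yields $c_U(p)=1/\log((1-p)/p)$. The defining equation~\eqref{p0_def} of $p_0$ is exactly the critical condition at which the two regimes coincide.

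The hard part, as the authors flag, is that this adaptation cannot use Broutin--Devroye as a black box. First, the weights $W_1,W_2,\ldots$ along any fixed root-to-leaf path of $\widetilde T$ are \emph{not} independent, since the walk at step $s$ takes place on $T_s$, which itself depends on all previous $W$'s. Second, the $W_s$ can be negative, which breaks the monotonicity used in the original Broutin--Devroye argument. I expect most of the technical work in Sections~\ref{sec:lower}--\ref{sec:upper} to be dedicated to these two issues. Approximate independence should be recoverable by coupling the walk from $u_s$ to a walk on a suitable infinite idealised subtree with explicit, weight-independent displacement law, and then controlling the coupling error; this is what underlies the heuristic for the law of $W_s$ used above. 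The negative-weight issue can presumably be handled by truncating the tails of $W_s$ so that the modified Chernoff bounds agree with those of $W_s$ on the exponential scale relevant for the height. The remaining quantitative content is then a first-moment computation for the lower bound (Section~\ref{sec:lower}) and a union bound over root-to-leaf paths for the upper bound (Section~\ref{sec:upper}).
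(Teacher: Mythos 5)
Your opening reduction is exactly the paper's: attaching $v_s$ to the uniformly chosen $u$ and giving the edge weight $\max\{1-\geo(p),\,1-w(u)\}$, with depth in the original tree equal to weighted depth in the recursive tree, is precisely the ``second model'' of Section~\ref{sec:transform}. From that point on, however, you defer exactly the steps that constitute the proof, and the two sketches you offer for them would not work as stated. The paper never recovers approximate independence by coupling: for the lower bound it drops the max altogether (stochastic domination, the tree $T'_\infty$) and runs a Galton--Watson survival argument on a binarized, continuous-time version of the model (the third model and the branching process giving $T_t$, with Lemma~\ref{lem:equal_logs} transferring back to the $n$-vertex model), handling negative weights not by truncation but by conditioning on the event that the finitely many edges in the top $b$ levels have bounded $E_e$ and $W_e$; for the upper bound it keeps the dependence exactly, encoding the weights along a root path as $X_1=\max\{Y_1,1\}$, $X_{i+1}=\max\{Y_{i+1},1-(X_1+\dots+X_i)\}$, and proving the tail bound of Lemma~\ref{lem:x_large_dev} via the observation that a large partial sum of the $X_i$ forces a large suffix sum of the i.i.d.\ $Y_i$. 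This dependence is not a controllable perturbation of an i.i.d.\ model: the reset effect changes the upper-tail rate function ($h$ in \eq{h_def} versus $f$ in \eq{f_def}) precisely in the regime that produces the second branch of $c_U$ and the threshold $p_0$; an ``approximate independence'' coupling with negligible error would yield the upper bound $c_L$ for all $p$, which is stronger than the theorem and is exactly what the paper could not prove. So the burden of the argument sits inside the step you have left as a black box.

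Your heuristic for the regime $p<p_0$ is also incorrect. The moment generating function of $1-\geo(p)$ is finite on all of $[0,\infty)$ (weights are bounded above by $1$), so no Legendre optimiser ``falls outside the domain of finiteness''; and a single long $\geo(p)$ walk moves the new vertex \emph{towards} the root, making the corresponding edge weight very negative, so events of probability $(1-p)^{\Theta(\log n)}$ of that type cannot be the extremal mechanism for a large height. In the paper the constant $\left(\log\frac{1-p}{p}\right)^{-1}$ comes from the dependent-sum analysis: for $p<1/2$ and small $a$ the optimum in the proof of Lemma~\ref{lem:x_large_dev} is interior, giving $h(a)=(p/(1-p))^a$, hence $g_U(a)=\left(\frac{1-p}{p}\right)^a/2$, and Case~1 of the proof in Section~\ref{sec:upper} (where $\tau(a^*)=1/2$) yields $c_U=\left(\log\frac{1-p}{p}\right)^{-1}$; roughly, the extremal mechanism is a long stretch of zero-length walks (probability $p$ each) following a reset near the root, not a long walk. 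You also omit the discrete-to-continuous transfer (binarization plus Lemma~\ref{lem:equal_logs}) needed before any Broutin--Devroye-type first- and second-moment argument can be run. In short, the reduction and the list of obstacles match the paper, but the proposal has a genuine gap where the large-deviation lemmas and the variational computation identifying $c_L$, $c_U$ and $p_0$ should be.
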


\begin{figure}
\begin{center}
\includegraphics[scale=0.3]{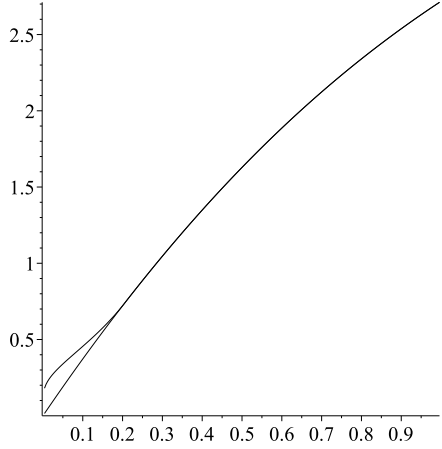}
\end{center}
\caption{The functions $c_L$ and $c_U$ in Theorems~\ref{thm:main} and~\ref{thm:diameter}.}
\label{fig:plot}
\end{figure}

The value $p_0$ and the functions $c_L$ and $c_U$ (plotted in Figure~\ref{fig:plot}) are well defined by Lemma~\ref{lem:uniques} below.
Also, $c_L$ and $c_U$ are continuous, and
$\lim_{p\to 0} c_L(p) = \lim_{p\to 0} c_U(p) = 0$ and
$\lim_{p\to 1} c_L(p)  = e$.
We suspect that the gap between our bounds when $p<p_0$ is an artefact of our proof technique, and we do not expect a phase transition in the behaviour of the height at $p=p_0$.

We also prove lower and upper bounds for the diameter, which are close to being tight.

\begin{theorem}
\label{thm:diameter}
Let $c_L$ and $c_U$ be defined as in Theorem~\ref{thm:main}.
For every fixed $\eps>0$, a.a.s.\ as $n\to \infty$
the diameter of the random-surfer tree model with parameter $p\in(0,1)$
is between $(2 c_L(p) - \eps) \log n$ and $(2 c_U(p) + \eps) \log n$.
\end{theorem}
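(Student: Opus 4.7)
\emph{Upper bound.} In any rooted tree the distance between two vertices is at most the sum of their depths, so the diameter is at most twice the height. Applying Theorem~\ref{thm:main} with $\eps/2$ in place of $\eps$, the height is a.a.s.\ at most $(c_U(p)+\eps/2)\log n$, yielding a diameter bound of $(2c_U(p)+\eps)\log n$ a.a.s.

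\emph{Lower bound.} The plan is to exhibit, a.a.s., two vertices $u_1,u_2$ lying in distinct root-subtrees, each at depth at least $(c_L(p)-\eps/2)\log n$ from $v_0$. The unique $u_1$-to-$u_2$ path then passes through $v_0$ and has length at least $(2c_L(p)-\eps)\log n$. I would first exhibit two children $w_1,w_2$ of the root whose subtrees $T_{w_1},T_{w_2}$ are both of size $n^{1-o(1)}$ a.a.s. Since $v_1$ is always a child of $v_0$ and at step $s\ge 2$ the probability that $v_s$ attaches to $v_0$ is at least $1/s$ (pick $u=v_0$, walk stays there via the self-loop), the root has $\Omega(\log n)$ children a.a.s. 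A P\'olya-urn analysis of the subtree sizes then yields the size bound: each new vertex enters $T_{w_i}$ with probability at least $p|T_{w_i}|/s$, corresponding to picking a uniform vertex in $T_{w_i}$ with geometric-walk length $0$, so the rescaled size is a positive submartingale that converges to a non-degenerate limit, and at least two such children-subtrees have size $n^{\Omega(1)}$ a.a.s. It remains to find a vertex at depth at least $(c_L(p)-\eps/2)\log n$ inside each of $T_{w_1}$ and $T_{w_2}$.

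The principal obstacle is this last step. The subtree $T_{w_i}$ is \emph{not} itself distributed as a random-surfer tree, because a geometric walk originating inside $T_{w_i}$ may overshoot $w_i$ and exit the subtree; within $T_{w_i}$ the walk is effectively truncated at the depth of its starting vertex below $w_i$. Therefore Theorem~\ref{thm:main} cannot be applied verbatim to $T_{w_i}$. However, truncation only shortens walks (and hence only decreases the weights on root-to-descendant paths in the transformed tree of Section~\ref{sec:transform}), so the height lower-bound argument of Section~\ref{sec:lower} should adapt: replace the geometric walk distribution by its truncated version and verify that the correction to the large-deviation rate is $o(1)$, using the fact that a typical vertex of $T_{w_i}$ has depth $\Omega(\log|T_{w_i}|)$ below $w_i$ while $\geo(p)=O(\log n)$ with overwhelming probability. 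This yields the required deep vertices in $T_{w_1}$ and $T_{w_2}$ and completes the lower bound.
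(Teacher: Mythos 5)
Your upper bound is exactly the paper's (diameter at most twice the height, plus Theorem~\ref{thm:main}), but your lower bound has genuine gaps. The decisive step --- producing a vertex of depth $(c_L(p)-\eps/2)\log n$ inside a fixed root-subtree $T_{w_i}$ --- is not the routine adaptation you suggest. Conditioned on a new vertex landing in $T_{w_i}$, the start vertex $u$ is \emph{not} uniform in $T_{w_i}$ (it is size-biased by the retention probability $\p{\geo(p)\le d_u}$, which depends on the realized depth profile of the subtree), and the walk length is a geometric conditioned on not overshooting $w_i$; so the internal dynamics are history-dependent and are not those of a random-surfer tree, and neither Theorem~\ref{thm:main} nor the machinery of Sections~\ref{sec:transform}--\ref{sec:lower} applies to $T_{w_i}$ without a new coupling or domination argument. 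Your proposed fix (``the correction to the large-deviation rate is $o(1)$'') leans on the claim that a typical vertex of $T_{w_i}$ has depth $\Omega(\log|T_{w_i}|)$, which is essentially the statement being proved, so the argument is circular as written; also the direction of the effect is stated backwards (truncating the walk makes the new vertex \emph{deeper}, i.e.\ increases the weights in the transformed tree --- favourable, but a sign you have not pinned the coupling down). Separately, the subtree-size step is flawed: from the lower bound $p\,|T_{w_i}|/s$ alone the rescaled size is \emph{not} a submartingale (that bound by itself only yields expected size of order $s^{p}$, i.e.\ a vanishing fraction), so convergence to a nondegenerate limit needs the retention probability to tend to $1$, again requiring control of the depth profile; and even granting it, you need two subtrees of size $n^{1-o(1)}$, not merely $n^{\Omega(1)}$, or the depth you obtain inside each is only a constant fraction of $c_L(p)\log n$.

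The paper avoids all of this conditioning by proving the two-deep-vertices statement inside the branching process itself: in Lemma~\ref{lem_bd_lower} the event $Q$ demands a surviving Galton--Watson process in \emph{each} of the two branches of the root of $T'_\infty$, which are independent by construction, so Lemma~\ref{lem:main_lower} directly yields two \emph{antipodal} vertices of $T'_t$ with weights at least $c_L(p)(1-\eps)t$. One then defines the semi-diameter (maximum distance between antipodal vertices), notes it lower-bounds the diameter, and transfers the bound back through the model transformations exactly as in Lemma~\ref{lem:equal_logs}. If you want to salvage your route, you would need either a genuine stochastic-domination coupling between the conditional subtree process and a random-surfer tree, or to move the ``two sides of the root'' argument into the transformed tree as the paper does.
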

Immediately, we have the following corollary.
\begin{corollary}
Let $c_L$ and $p_0$  be defined as in Theorem~\ref{thm:main}.
For any  $p\in[p_0, 1)$, the height of the random-surfer tree model with parameter $p$
is a.a.s.\ asymptotic to $c_L(p) \log n$  as $n\to \infty$, and its diameter is a.a.s.\ asymptotic to $2c_L(p) \log n$.
\end{corollary}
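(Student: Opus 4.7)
The plan is very short: this corollary follows immediately by combining Theorem~\ref{thm:main} and Theorem~\ref{thm:diameter} with the explicit definition of $c_U$. First I would note that by the piecewise definition of $c_U$ in Theorem~\ref{thm:main}, we have $c_U(p) = c_L(p)$ for every $p\in[p_0,1)$. Hence for any such $p$ and any fixed $\eps>0$, Theorem~\ref{thm:main} asserts that a.a.s.\ the height lies in the interval $[(c_L(p)-\eps)\log n, (c_L(p)+\eps)\log n]$, which is exactly the statement that the height is a.a.s.\ asymptotic to $c_L(p)\log n$ once $\eps$ is allowed to tend to $0$.

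For the diameter part I would apply the same reasoning to Theorem~\ref{thm:diameter}: on the range $p\in[p_0,1)$, the upper bound becomes $(2c_L(p)+\eps)\log n$ and matches the lower bound $(2c_L(p)-\eps)\log n$ up to an arbitrarily small multiplicative slack, so the diameter is a.a.s.\ asymptotic to $2c_L(p)\log n$.

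The only mild point to address is the quantifier order: Theorems~\ref{thm:main} and~\ref{thm:diameter} are stated ``for every fixed $\eps>0$, a.a.s.\ \dots'', whereas the corollary wants a.a.s.\ asymptotic equivalence. I would handle this by choosing a sequence $\eps_k \downarrow 0$, applying each theorem with $\eps=\eps_k$ to obtain for each $k$ an index $n_k$ beyond which the event of Theorem~\ref{thm:main} (resp.~\ref{thm:diameter}) holds with probability at least $1-1/k$, and then gluing these together in the standard way to get a.a.s.\ convergence of height$/\log n$ to $c_L(p)$ and diameter$/\log n$ to $2c_L(p)$. There is no real obstacle here; the content lies entirely in Theorems~\ref{thm:main} and~\ref{thm:diameter}, and the corollary is merely recording the special case where the lower and upper constants in those theorems coincide.
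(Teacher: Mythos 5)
Your proposal is correct and matches the paper's reasoning: the corollary is stated in the paper as an immediate consequence of Theorems~\ref{thm:main} and~\ref{thm:diameter}, precisely because $c_U(p)=c_L(p)$ for $p\in[p_0,1)$, which is exactly your argument. The additional remark about converting ``for every fixed $\eps>0$, a.a.s.'' into a.a.s.\ asymptotic equivalence via a sequence $\eps_k\downarrow 0$ is a standard and valid bookkeeping step that the paper leaves implicit.
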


{
A natural open problem is to close the gap between the lower and upper bounds in Theorems~\ref{thm:main} and~\ref{thm:diameter}
when $p < p_0$. It seems that for solving this problem new ideas are required.
}

We now define the PageRank-based selection model introduced in~\cite{pagerank_model_conference,pagerank_model_journal}.
\begin{definition}[PageRank and the PageRank-based selection Webgraph model~\cite{pagerank_model_conference,pagerank_model_journal}]
Let $d$ be a positive integer and let $p,\beta\in[0,1]$.
The \emph{PageRank} of a directed graph is a probability distribution over its vertices, which is the stationary distribution of the following random walk. The random walk starts from a vertex chosen uniformly at random. In each step, with probability $p$ it jumps to a vertex chosen uniformly at random, and with probability $1-p$ it walks to a random out-neighbour of the current vertex.

The PageRank-based selection Webgraph model is
% We generate 
a random $n$-vertex directed multigraph with all vertices having out-degree $d$, {generated as follows.}
{It starts} with a single vertex with $d$ self-loops. 
At each subsequent step a new vertex appears, chooses $d$ old vertices and attaches to them (where a vertex can be chosen multiple times).
These choices are independent and the 
head of each edge is a uniformly random vertex with probability $\beta$,
and is a vertex chosen according to the PageRank distribution with probability $1-\beta$.
\end{definition}
The motivation behind this definition is as follows.
Consider the case $\beta=0$.
Think of the vertex $v_s$ as a new web page that is being set up.
Say the owner wants to put $d$ links in her web page.
She finds the destination pages using $d$ independent Google searches.
Since Google sorts the search results according to their PageRank (see~\cite{pagerank_def}), 
the probability that a given page is linked to is close to its PageRank.

Our main result regarding the diameter of the PageRank-based selection model is the following theorem.

\begin{theorem}
\label{thm:diameter_pagerank}
Let $d$ be a positive integer and let $p,\beta\in(0,1]$. 
A.a.s.\ as $n\to \infty$
the diameter of the PageRank-based selection Webgraph model with parameters $d$, $p$, and $\beta$
is at most $8 e^p (\log n) / p$.
\end{theorem}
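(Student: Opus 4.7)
The plan is to reduce Theorem~\ref{thm:diameter_pagerank} to Theorem~\ref{thm:diameter_webgraph_new} via the classical equivalence between sampling the PageRank distribution and the random-surfer walk. Let $\pi$ denote the PageRank distribution (with jump probability $p$) of the directed graph present when a new vertex is about to be born, let $u$ be the uniform distribution on the existing vertices, and let $P$ denote the transition matrix of the simple random walk on that graph. The defining fixed-point equation $\pi = p\, u + (1-p)\, \pi P$ gives, by iteration,
\[
\pi = \sum_{k \ge 0} p(1-p)^{k}\, u P^{k} \:.
\]
Since $p(1-p)^{k} = \p{\Geo(p) = k}$, sampling from $\pi$ is equivalent to the random-surfer procedure: pick a uniformly random vertex, then walk $\Geo(p)$ simple-random-walk steps from it.

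Consequently, in the PageRank-based model every new edge is produced by choosing a uniformly random old vertex and walking for $Y$ steps from it, where $Y = 0$ with probability $\beta$ and $Y = \Geo(p)$ with probability $1-\beta$. The natural coupling (draw a single $Z\sim\Geo(p)$ and set $Y$ equal to either $0$ or $Z$ according to the $\beta$-coin) realises $Y \le Z$ pointwise, so the walks in the PageRank-based model are stochastically dominated step by step by the walks in the random-surfer model with parameter~$p$.

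Given this reformulation, I would run the argument of Theorem~\ref{thm:diameter_webgraph_new} from Section~\ref{sec:graph} essentially verbatim: restrict attention to the spanning tree formed by the first out-edge of each new vertex, and bound its height via the tail behaviour of the walk length combined with the uniformity of the starting vertex. Under the coupling $Y \le Z$, every tail estimate used there can only tighten, so the same height bound, and hence the same diameter bound $8 e^{p}(\log n)/p$, carry over.

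The main obstacle is that the walks, the uniform starting vertices and the evolving first-edge tree are all coupled across time-steps, so ``shorter walks produce a shallower tree'' cannot be invoked as a black-box monotonicity statement on random trees. Instead the comparison must be made inside the proof of Theorem~\ref{thm:diameter_webgraph_new}, at the level of the individual depth-probability estimates that feed into the large-deviation bound on the height. Once the step-by-step coupling $Y \le Z$ is fixed, this verification is routine and yields the claimed bound on the diameter of the PageRank-based selection Webgraph.
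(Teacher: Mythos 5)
Your opening step is exactly the paper's Lemma~\ref{lem:cm}: iterating the fixed-point equation $\pi = p\,u + (1-p)\pi P$ to show that sampling from the PageRank distribution is the same as picking a uniform vertex and walking $\Geo(p)$ steps, and then observing that the actual walk length $\mathcal{L}$ (which is $0$ with probability $\beta$ and $\Geo(p)$ otherwise) is stochastically dominated by $\Geo(p)$. The intended skeleton — work only with the first out-edge of each new vertex, exploit the uniformity of the walk's starting vertex, and control depths via tail bounds on the walk lengths — is also the paper's auxiliary-tree construction, in which $v_s$ is joined to its uniformly chosen start vertex $u$ by an edge of weight $\mathcal{L}+1$, so that the weighted height of a uniform-attachment recursive tree with i.i.d.\ edge weights dominates the height of the graph. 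So in spirit you are on the paper's route.

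The genuine gap is in the second half. You propose to ``run the argument of Theorem~\ref{thm:diameter_webgraph_new} essentially verbatim,'' but in this paper Theorem~\ref{thm:diameter_webgraph_new} has no independent proof: it is deduced from Theorem~\ref{thm:diameter_pagerank} (via the Chebolu--Melsted equivalence, i.e.\ the $\beta=0$ case of Lemma~\ref{lem:cm}), so as a reduction your plan is circular, and as an outline it defers precisely the part that constitutes the proof. The quantitative core is not ``routine verification'' under a coupling: one must (i) bound the depth $D(n)$ in the uniform-attachment skeleton by writing parents as $\lfloor sU_s\rfloor$ and applying the exponential large-deviation bound of Lemma~\ref{lem_gamma}, giving $\p{D(n)\ge d}\le \exp\left(-(d-1)\Upsilon\left(\frac{\log n}{d-1}\right)\right)$; (ii) bound the conditional weight, a sum of $d$ i.i.d.\ $1+\mathcal{L}$ variables, by the Chernoff bound $f(2-\kappa)^m$ for $1+\Geo(p)$ sums (Lemma~\ref{pro:geo}), which is where your stochastic domination actually enters — cleanly, because the skeleton is independent of the edge weights, so no step-by-step coupling of evolving trees is needed; and (iii) verify, via the optimization in Lemma~\ref{lem:technical}, that with $\eta = 4e^p/p$ these two exponents combine to make the union bound over all depths $d\le 1.1e\log n$ vanish. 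The specific constant $8e^p/p$ comes out of step (iii) and cannot be obtained by monotonicity from an assumed black-box statement; without carrying out (i)--(iii), the proposal does not yet prove the theorem.
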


Chebolu and Melsted~\cite[Theorem~1.1]{pagerank_random_surfer} showed  the random-surfer Webgraph model is equivalent to the
PageRank-based selection Webgraph model with $\beta=0$ {(this fact also follows from Lemma~\ref{lem:cm} in Section~\ref{sec:graph})}.
Hence Theorems~\ref{thm:diameter_webgraph_new} follows immediately from Theorem~\ref{thm:diameter_pagerank}.
Moreover, the conclusions of Theorems~\ref{thm:main} and~\ref{thm:diameter} apply to 
the PageRank-based selection Webgraph model with $\beta=0$ and $d=1$.

In Theorems~\ref{thm:main} and~\ref{thm:diameter} we have assumed that $p<1$, since the situation for $p=1$ has been clarified in previous work.
Let $p=1$. Then a random-surfer tree has the same distribution as a so-called random recursive tree, the height of which is a.a.s.\ asymptotic to $e \log n$ as proved by Pittel~\cite{random_recursive_trees}. It is not hard to alter the argument in~\cite{random_recursive_trees} to prove that the diameter is a.a.s.\ asymptotic to $2e\log n$.
The diameter of a random-surfer Webgraph thus has also an asymptotically almost sure upper bound of $2e \log n$.
For the rest of the paper, we fix $p\in(0,1)$.

We include some definitions here.
Define the \emph{depth} of a vertex as the length of a shortest path (ignoring edge directions) connecting the vertex to the root,
and the \emph{height} of a graph $G$, denoted by $\rad(G)$, as the maximum depth of its vertices.
Clearly the diameter is at most twice the height.
In a weighted tree (a tree whose \emph{edges} are weighted), define the \emph{weight} of a vertex to be the sum of the weights of the edges connecting the vertex to the root, and the \emph{weighted height} of tree $T$, written $\wh(T)$, to be the maximum weight of its vertices.
We view an unweighted tree as a weighted tree with unit edge weights, in which case the weight of a vertex is its depth, and the notion of weighted height is the same as the usual height.

We will need two large deviation inequalities,
whose proofs are standard and can be found in
the appendix.

Define the function $\Upsilon:(0,\infty)\to\mathbb{R}$ as
\begin{equation}
\label{upsilon_def}
\Upsilon(x) = \begin{cases}
x - 1 - \log (x) & \mathrm{\ if\ }0 < x \le 1 \\
0 & \mathrm{\ if\ } 1 < x \:.
\end{cases}
\end{equation}
\begin{lemma}
\label{lem_gamma}
Let $E_1,E_2,\dots,E_m$ be independent exponential random variables with mean 1.
For any fixed $x>0$, as $m\to \infty$ we have
$$
\exp \left(-\Upsilon(x) m - o(m)\right) \le
\p{E_1 + E_2 + \dots + E_m \le xm}
\le
\exp (-\Upsilon(x) m) \:.
$$
\end{lemma}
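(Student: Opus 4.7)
The plan is to prove the two inequalities separately. For the upper bound, I would use the standard Chernoff/Cramér argument on the lower tail of a sum of exponentials. The moment generating function of a single mean-1 exponential satisfies $\e{e^{-\lambda E}} = 1/(1+\lambda)$ for $\lambda>-1$, so for $\lambda>0$ and $x\in(0,1]$, Markov's inequality applied to $e^{-\lambda(E_1+\cdots+E_m)}$ gives
\begin{equation*}
\p{E_1+\cdots+E_m\le xm} \le \frac{e^{\lambda x m}}{(1+\lambda)^m}.
\end{equation*}
Choosing $\lambda = 1/x - 1$ (which is nonnegative precisely when $x\le 1$) minimizes this bound and yields $\exp(-(x-1-\log x)m) = \exp(-\Upsilon(x)m)$. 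For $x>1$, the bound is trivial since $\Upsilon(x)=0$ and the probability is at most $1$.

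For the matching lower bound I would work directly with the density, since the sum $S_m = E_1+\cdots+E_m$ has the Gamma density $f_m(t) = t^{m-1}e^{-t}/(m-1)!$ on $(0,\infty)$. When $0<x<1$, this density is unimodal with maximum at $t=m-1 > xm$, so it is increasing on $[0,xm]$, and therefore
\begin{equation*}
\p{S_m\le xm} = \int_0^{xm} f_m(t)\,dt \;\ge\; \int_{xm-1}^{xm} f_m(t)\,dt \;\ge\; f_m(xm-1).
\end{equation*}
Applying Stirling's formula $(m-1)! \sim \sqrt{2\pi/m}\,m^{m-1}e^{-m}$ to $f_m(xm)$ gives an asymptotic value of order $\sqrt{m}\,\exp(-\Upsilon(x)m)$, and the factor $(1-1/(xm))^{m-1}$ arising from replacing $xm$ by $xm-1$ in the base is $e^{-1/x+o(1)}$. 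Both of these corrections are absorbed into the $o(m)$ slack in the exponent, giving
\begin{equation*}
\p{S_m\le xm} \ge \exp(-\Upsilon(x)m - o(m)).
\end{equation*}
For $x\ge 1$ the probability tends to a positive constant (indeed, to $1/2$ at $x=1$ by the central limit theorem, and to $1$ for $x>1$), so the bound $\exp(-o(m))$ holds trivially because $\Upsilon(x)=0$.

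I do not anticipate any substantial obstacle here; both steps are textbook large-deviation calculations and the only care needed is ensuring that the $o(m)$ error absorbs the polynomial-in-$m$ prefactors from Stirling and the interval length used to convert the density estimate at $xm$ into an integral lower bound. Since the lemma only requires the leading exponential rate, no sharper tail estimate (such as a Bahadur–Rao refinement) is needed.
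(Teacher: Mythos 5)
Your proof is correct, and while your upper bound is the same Chernoff computation as the paper's (your $\lambda=1/x-1$ applied to $e^{-\lambda S_m}$ is exactly the paper's tilt $\theta=1-1/x$), your lower bound takes a genuinely different route: the paper simply computes the rate function $\Lambda^*(x)=\Upsilon(x)$ and invokes Cram\'er's Theorem (citing Dembo--Zeitouni), whereas you exploit the fact that $S_m=E_1+\cdots+E_m$ has an explicit Gamma$(m,1)$ density, bound the probability below by the density over a unit interval using monotonicity of $f_m$ on $[0,m-1]$, and extract the exponential rate via Stirling. Your argument is more elementary and self-contained (no black-box large-deviation theorem), at the price of relying on the special exponential/Gamma structure; the paper's route is shorter and would generalize to other i.i.d.\ summands. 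One small slip to fix in a write-up: Stirling gives $(m-1)!\sim\sqrt{2\pi m}\,m^{m-1}e^{-m}$ (not $\sqrt{2\pi/m}\,m^{m-1}e^{-m}$), so $f_m(xm)$ is of order $m^{-1/2}\exp(-\Upsilon(x)m)$ rather than $\sqrt{m}\,\exp(-\Upsilon(x)m)$; since this is only a polynomial (and a constant $e^{1-1/x}$ from shifting $xm$ to $xm-1$) correction, it is still absorbed by the $-o(m)$ in the exponent and the conclusion stands, as does your trivial treatment of $x\ge 1$ where $\Upsilon(x)=0$.
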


Define the function $f : (-\infty,1] \to \mathbb{R}$ as
\begin{equation}
\label{f_def}
f(x) = (2-x)^{2-x} p (1-p)^{1-x} (1-x)^{x - 1} \:.
\end{equation}
\begin{lemma}
\label{pro:geo}
Let $Z_1,Z_2,\dots,Z_m$ be independent $ 1 + \geo(p)$ random variables, and let $\kappa \ge 1/p$.
Then we have
$\p{Z_1 + Z_2 + \dots + Z_m \ge \kappa m}
\le f(2-\kappa)^m$.
\end{lemma}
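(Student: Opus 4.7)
The plan is to apply a standard Chernoff bound (exponential Markov inequality). First I would compute the moment generating function of $Z_1 = 1 + \geo(p)$: for any real $t$ with $e^t < 1/(1-p)$, summing a geometric series yields
$$M(t) := \e{e^{t Z_1}} = \frac{p\, e^t}{1 - (1-p)e^t}.$$
Then, by independence and Markov's inequality applied to the nonnegative variable $\exp\!\bigl(t\sum_{i=1}^{m} Z_i\bigr)$, for every $t \ge 0$ with $e^t < 1/(1-p)$,
$$\p{Z_1 + Z_2 + \dots + Z_m \ge \kappa m} \;\le\; e^{-t\kappa m}\, M(t)^m \;=\; \left(\frac{p\, e^{t(1-\kappa)}}{1 - (1-p)e^t}\right)^{m}.$$

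Next I would optimize the right-hand side in $t$. Writing $y := e^t$ and differentiating $\log\!\bigl(p\, y^{1-\kappa}/(1 - (1-p)y)\bigr)$ with respect to $y$ gives the critical point $y^{\ast} = (\kappa - 1)/(\kappa(1-p))$. The hypothesis $\kappa \ge 1/p$ is precisely what ensures $y^{\ast} \ge 1$, i.e.\ that the corresponding $t$ is nonnegative, which is what Markov's inequality in the upper-tail direction requires; the inequality $y^{\ast} < 1/(1-p)$ holds automatically, so $M$ is finite at the optimum. Plugging $y^{\ast}$ back in, the denominator collapses nicely via $1 - (1-p)y^{\ast} = 1/\kappa$, and then collecting the exponents of $p$, $1-p$, $\kappa$, and $\kappa - 1$ identifies the resulting expression with $f(2-\kappa)$ from~\eq{f_def} (under the substitution $x = 2 - \kappa$, so that $2-x = \kappa$ and $1-x = \kappa - 1$). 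This delivers the claimed bound $f(2-\kappa)^m$.

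I expect no real obstacle here: this is a textbook one-dimensional Chernoff computation for a sum of shifted geometric variables. The single point that deserves care is the role of the hypothesis $\kappa \ge 1/p$. Since $\e{Z_1} = 1/p$, we are bounding an upper tail above the mean, and $\kappa \ge 1/p$ is exactly the condition guaranteeing that the optimal exponential parameter $t$ is nonnegative, so that the Markov step is legitimate and the MGF is finite along the way. Everything else is just algebraic bookkeeping to match the optimized MGF with the definition of $f$.
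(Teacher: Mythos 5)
Your proposal is correct and is essentially the paper's own proof: the paper also uses Chernoff's technique with the tilt $e^{\theta}=\frac{\kappa-1}{\kappa(1-p)}$ (your optimizer $y^{\ast}$), computes the same MGF, and matches the result with $f(2-\kappa)$. Your additional remark that $\kappa \ge 1/p$ guarantees $\theta \ge 0$ (so the Markov step is valid) is a worthwhile explicit check that the paper leaves implicit.
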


\section{Upper bound for the PageRank-based model}
\label{sec:graph}
In this section we prove Theorem~\ref{thm:diameter_pagerank}, which gives an upper bound for the diameter of the PageRank-based selection Webgraph model.
Theorem~\ref{thm:diameter_webgraph_new} follows immediately using~\cite[Theorem~1.1]{pagerank_random_surfer}.
We need a technical lemma, whose proof can be found in the appendix.

\begin{lemma}
\label{lem:technical}
Let $\eta,c$ be positive numbers satisfying $\eta \ge 4e^p / p$ and $c\le p \eta$. Then we have
$-c \Upsilon(1/c) + c \log f ( 2 - \eta / c) < {\max\{\eta(1-p)\log(1-p^3),
-0.15p\eta\}}
-1$.
\end{lemma}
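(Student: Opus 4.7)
The plan is to optimize the left-hand side over $c$, then verify a one-variable inequality in $p$.

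Expanding the definition of $f$ yields the clean identity $c\log f(2 - \eta/c) = -\eta\, D(c/\eta \,\|\, p)$, where $D(\beta\|p) := \beta\log(\beta/p) + (1-\beta)\log((1-\beta)/(1-p))$ is the binary Kullback--Leibler divergence (this reflects $f$ arising as a rate function for sums of $1+\Geo(p)$ variables). So, setting $\psi(c) := -c\Upsilon(1/c) - \eta D(c/\eta\|p)$ and $M := \max\{(1-p)\log(1-p^3),\, -0.15p\}$, we must show $\max_{c \in (0, p\eta]} \psi(c) < \eta M - 1$.

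The hypothesis $\eta \ge 4e^p/p$ forces $p\eta \ge 4$, so the admissible interval crosses $c = 1$. On $(0, 1]$, $\Upsilon(1/c) = 0$ and $\psi'(c) = \log(p(\eta - c)/(c(1-p))) > 0$, so $\psi$ is increasing. On $[1, p\eta]$, differentiation gives $\psi'(c) = \log(p(\eta - c)/(c^2(1-p)))$, which is positive at $c = 1$ and negative at $c = p\eta$; hence $\psi$ has a unique maximizer
$$c^* = \frac{-p + \sqrt{p^2 + 4p\eta(1-p)}}{2(1-p)} \in [1, p\eta],$$
characterized by $(c^*)^2(1-p) = p(\eta - c^*)$. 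Substituting this identity back into $\psi$ to collapse the log terms gives the closed form $\psi(c^*) = c^* - 1 + \eta\log(1-p) - \eta\log(1 - c^*/\eta)$. Setting $\gamma := c^*/\eta$, the $-1$ on each side of the target inequality cancels and, dividing by $\eta$, we reduce to the purely analytic inequality $\gamma - \log(1 - \gamma) + \log(1 - p) < M$, where $\gamma$ is determined by $\gamma^2(1-p)\eta = p(1-\gamma)$ and is therefore bounded above by $\sqrt{p/((1-p)\eta)}$.

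Two algebraic facts show that $M > \log(1-p)$ strictly: the identity $\log(1-p) - (1-p)\log(1-p^3) = p\log(1-p) - (1-p)\log(1+p+p^2)$ is negative (LHS negative, RHS of the subtraction positive), giving $\log(1-p) < (1-p)\log(1-p^3)$; and the standard bound $\log(1-p) < -p \le -0.15p$. So $M - \log(1-p) > 0$ for all $p \in (0,1)$, and the final step is to show that $\gamma - \log(1 - \gamma)$ does not exceed it, using the upper bound on $\gamma$ combined with $\eta \ge 4e^p/p$. This is done by case analysis on which alternative in $M$ is larger: the first term dominates for small and large $p$, while $-0.15p$ dominates in an intermediate range.

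The main obstacle is that the reduced one-variable inequality is tight: as $p \to 0$, both $\gamma - \log(1-\gamma)$ and $M - \log(1-p)$ behave like $p + O(p^2)$, with slack only of order $p^2$, and since the $-1$'s cancel there is no additive room. The constants $4$ (in $\eta \ge 4e^p/p$) and $0.15$ are engineered precisely to produce this slack, so the proof requires careful tracking of the Taylor expansions of $\gamma - \log(1-\gamma)$ and $-\log(1-p)$, probably split into two or three sub-ranges of $p$ corresponding to the switches in the maximum defining $M$.
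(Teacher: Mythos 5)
Your reduction is correct as far as it goes, and it is a genuinely different route from the paper's. The identity $c\log f(2-\eta/c)=-\eta D(c/\eta\,\|\,p)$, the sign analysis of $\psi'$ on $(0,1]$ and on $[1,p\eta]$, the critical equation $(c^*)^2(1-p)=p(\eta-c^*)$, and the closed form $\psi(c^*)=c^*-1+\eta\log(1-p)-\eta\log(1-c^*/\eta)$ all check out, so the lemma is indeed equivalent to the scalar inequality $\gamma-\log(1-\gamma)+\log(1-p)<\max\{(1-p)\log(1-p^3),\,-0.15p\}$, where $\gamma=c^*/\eta$ satisfies $\gamma^2(1-p)\eta=p(1-\gamma)$ and is largest at $\eta=4e^p/p$. (The paper never optimizes in $c$: it splits into the cases $c\ge1$ and $c<1$, aims at a single branch of the maximum in each case, and finishes with crude elementary bounds such as $(r/(r-1))^{r-1}<e$, $1-p<(1-p^3)e^{-p}$, $x^2e^{-x}\le 4e^{-2}$ and $\log(1-p)\le-p$; your route is cleaner but concentrates all of the difficulty into one inequality in $p$.)

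The gap is that this final inequality, which carries the entire quantitative content of the lemma, is asserted and motivated but not proved, and the one tool you specify for it provably does not suffice on all of $(0,1)$. The bound $\gamma\le\sqrt{p/((1-p)\eta)}\le p/\bigl(2\sqrt{e^p(1-p)}\bigr)$ is too weak for large $p$: already at $p=0.8$ it only gives $\gamma\le 0.60$, whence $\gamma-\log(1-\gamma)+\log(1-p)\le -0.09$, which does not beat $M=-0.15p=-0.12$; and past $p\approx 0.92$ the bound exceeds $1$, so it gives no control of $-\log(1-\gamma)$ at all. In that regime you must retain the factor $1-\gamma$, e.g.\ use the defining relation $1-\gamma=(1-p)\eta\gamma^2/p$, under which the $\log(1-p)$ terms cancel and (at $\eta=4e^p/p$) the target becomes $\gamma-p-\log(4e^p\gamma^2/p^2)<M$, which in turn requires a lower bound on $\gamma/p$; none of this appears in the plan. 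At the other end, as $p\to0$ the margin is only $\Theta(p^2)$ (with your bound on $\gamma$ the left side is $-3p^2/8+O(p^3)$ against $M\approx-p^3$), so the Taylor expansions must be converted into uniform two-term estimates with explicit error control, and the intermediate range, where the active branch of the maximum switches (near $p\approx0.56$ and $p\approx0.9$), needs its own explicit verification. Until these computations are actually carried out, what you have is a correct and elegant reduction plus a plausible plan, not a proof of the lemma.
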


We now describe an alternative way to generate the edge destinations in the PageRank-based selection model.
Define the non-negative random variable $\mathcal{L}$ as 
$$
\mathcal{L} = \mathcal{L}(p,\beta)=\begin{cases}
0 & \mathrm{with\ probability\ } \beta \:,\\
\geo(p) & \mathrm{with\ probability\ } 1-\beta \:. 
\end{cases}
$$
Note that $\geo(p)$ stochastically dominates $\mathcal{L}$.

\begin{lemma}
\label{lem:cm}
The head of each new edge in
the PageRank-based selection model can be obtained by sampling a vertex $u$ uniformly from the existing graph 
and performing a simple random walk of length $\mathcal{L}$ starting from $u$.
\end{lemma}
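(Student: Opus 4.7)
The plan is to reduce Lemma~\ref{lem:cm} to the following auxiliary fact: the PageRank distribution $\pi$ on the existing directed multigraph is exactly the law of the endpoint of a simple random walk of length $\geo(p)$ started from a uniformly chosen vertex. Once this is established, the lemma is immediate. By definition, with probability $\beta$ the head of the new edge is uniform, which is precisely the endpoint of a SRW of length $0$ from a uniform start; with the complementary probability $1-\beta$ it is $\pi$-distributed, which by the auxiliary fact matches a SRW of length $\geo(p)$ from a uniform start. Amalgamating these two branches into a single mixture reproduces the definition of $\mathcal{L}$ and hence the lemma.

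To prove the auxiliary fact I would verify the PageRank fixed-point equation directly. Let $U$ denote the uniform distribution on the current vertex set and let $P$ be the transition matrix of the simple random walk on the existing graph. By definition, $\pi$ is the unique distribution satisfying $\pi = pU + (1-p)\pi P$; uniqueness is standard and follows from irreducibility and aperiodicity of the reset walk, both guaranteed by the strictly positive jump probability $p$. Setting
\[
\mu(v) \; := \; \sum_{k=0}^{\infty} p(1-p)^k \, (UP^k)(v),
\]
a one-line computation shows
\[
pU + (1-p)\mu P \;=\; pU + \sum_{k=0}^{\infty} p(1-p)^{k+1} UP^{k+1} \;=\; \sum_{k=0}^{\infty} p(1-p)^k UP^k \;=\; \mu,
\]
so $\mu$ is a fixed point of the PageRank map, and therefore $\mu = \pi$. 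Since by construction $\mu(v)$ is precisely the probability that a simple random walk of $\geo(p)$ steps starting from a uniform vertex ends at $v$, the auxiliary fact follows.

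There is essentially no obstacle; the argument is the classical ``reset-walk'' interpretation of PageRank. The only mild points are to check uniqueness of the PageRank stationary distribution (immediate from ergodicity produced by the restart mechanism) and to observe that the two cases in the definition of $\mathcal{L}$ combine linearly into a single mixture, which requires nothing more than unpacking the definition.
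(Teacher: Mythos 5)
Your proposal is correct and follows essentially the same route as the paper: you verify that the endpoint distribution of a $\geo(p)$-length walk from a uniform start satisfies the PageRank fixed-point equation $\pi = pU + (1-p)\pi P$ and invoke uniqueness of the stationary distribution, exactly as the paper does, with the $\beta$-mixture step for $\mathcal{L}$ being the same immediate observation.
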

The proof is a straightforward generalization of {that of}~\cite[Theorem~1.1]{pagerank_random_surfer}.
\begin{proof}
Let $G$ denote the existing graph, and let $\pi:V(G)\to[0,1]$ denote the PageRank distribution.
Then by definition, $\pi$ is the unique probability distribution satisfying
\begin{equation}
\label{eq:pagerank}
\pi (v) = \frac{p}{|V(G)|} + (1-p) \sum_{u\in V(G)} \frac{\pi(u) \cdot \#(uv)}{\outdeg(u)} \:.
\end{equation}
Here $\#(uv)$ denotes the number of copies of the directed edge $uv$ in the graph (which is zero if there is no edge from $u$ to $v$), and
$\outdeg(u)$ denotes the out-degree of $u$.

It suffices to show that if we sample a vertex uniformly and perform a random walk of length $\geo(p)$, the last vertex of the walk has distribution $\pi$.
Let ${\tau}: V(G) \to [0,1]$ denote the probability distribution of the last vertex, 
let $\mathcal{P}$ denote the probability transition matrix of the simple random walk, and let $\sigma = \big[1/|V(G)|,1/|V(G)|,\dots,1/|V(G)|\big]^T$
be the uniform distribution.
Then we have
$$
\tau 
= \sum_{k=0}^{\infty} (1-p)^k p \mathcal{P}^k \sigma
= p \sigma + 
(1-p)\mathcal{P}
\left(\sum_{k=1}^{\infty} (1-p)^{k-1} p \mathcal{P}^{k-1} \sigma\right)
=
p \sigma
+
(1-p)\mathcal{P} \tau \:.
$$
Comparing with (\ref{eq:pagerank}) and noting that the stationary distribution of an ergodic Markov chain is unique, we find that $\tau=\pi$, as required.
\end{proof}

We now have the ingredients to prove 
Theorem~\ref{thm:diameter_pagerank}.

\begin{proof}[Proof of Theorem~\ref{thm:diameter_pagerank}.]
Let {$\eta = 4e^p / p$}.
We define an {auxiliary tree} whose node set equals the vertex set of the graph generated by the PageRank-based selection Webgraph model, and whose weighted height dominates the height of this graph.
Then we show a.a.s.\ this tree has weighted height at most $\eta \log n$, which completes the proof.

Initially the tree has just one vertex $v_0$.
By Lemma~\ref{lem:cm}, the growth of the PageRank-based selection model at each subsequent step $s \in \{1,2,\dots,n-1\}$ can be described as follows: 
a new vertex $v_s$ appears and $d$ edges are created from it to vertices in $\{v_0,v_1,\dots,v_{s-1}\}$, by doing the following probabilistic procedure $d$ times, independently:
choose a vertex $u$ uniformly at random from $\{v_0,v_1,\dots,v_{s-1}\}$,
and a fresh random variable $\mathcal{L}$;
perform a simple random walk of length $\mathcal{L}$ starting from $u$, and join $v_s$ to the last vertex of the walk.

Consider a step $s$ and the first chosen $u\in\{v_0,\dots,v_{s-1}\}$ and $\mathcal{L}$.
In the tree, we join the vertex $v_s$ to $u$ and set the weight of the edge $v_s u$ to be $\mathcal{L}+1$.
Note that the edge weights are mutually independent.
Clearly, the weight of $v_s$ in the auxiliary tree is greater than or equal to the depth of $v_s$ in the graph.
Hence, it suffices to show that a.a.s.\ the weighted height of the auxiliary tree is at most $\eta \log n$.
We work with the tree in the rest of the proof.

Let us consider an alternative way to grow the tree, used by
Devroye, Fawzi, and Fraiman~\cite{biased_tree}, which results in the same distribution.
Let $U_1,U_2,\dots$ be i.i.d.\ uniform random variables in $(0,1)$.
Then for each new vertex $v_s$, we attach it to the vertex 
$v_{\lfloor s U_s \rfloor}$,
which is indeed a vertex uniformly chosen from 
$\{v_0,\dots,v_{s-1}\}$.

For convenience, we consider the tree when it has $n+1$ vertices $v_0,v_1,\dots,v_n$.
Let $D(s),W(s)$ denote the depth and the weight of vertex $v_{s}$, respectively.
We have
\begin{align*}
\p{\wh(\mathrm{auxiliary\ tree}) > \eta \log n}
& \le \sum_{s=1}^{n} \p{W(s) > \eta \log n} \\
& \le n \p{W(n) > \eta \log n} = 
\sum_{d=1}^{n} \mathcal{A}(d) \:,
\end{align*}
where we define
$$\mathcal{A}(d) = n \p{D(n)=d}
\p{W(n) > \eta \log n | D(n)=d}\:.$$
To complete the proof it is enough to show 
$\sum_{d=1}^{n} \mathcal{A}(d)=o(1)$.

Let $P(0)=0$ and for $s=1,\dots,n$, let $P(s)$ denote the index of the parent of $v_{s}$.
We have
$$\p{D(n)\ge d}
= \p{D(P(n)) \ge d-1}
= \dots
= \p{D(P^{d-1}(n)) \ge 1}
= \p{P^{d-1}(n) \ge 1} \:.
$$
Since $P(m) = \lfloor m U_m \rfloor \le m U_m$ for each $0\le m \le n$ and since the $U_i$ are i.i.d., we have
$$
\p{P^{d-1}(n) \ge 1}
\le \p{n U_1 U_2 \dots U_{d-1} \ge 1}\:.
$$
Let $E_i = - \log U_i$. Then $E_i$ is exponential with mean 1, and moreover,
\begin{align}
\p{D(n)\ge d}
& \le \p{n U_1 U_2 \dots U_{d-1} \ge 1} \nonumber\\
& = \p {E_1 + \dots + E_{d-1} \le \log n} 
\le \exp\left(-(d-1) \Upsilon \left(\frac{\log n}{d-1}\right)\right)\label{first}
\:,
\end{align}
where we have used Lemma~\ref{lem_gamma}.
The right-hand side is $o(1/n)$ for $d = 1.1e \log n$.
Hence to complete the proof we need only show that
\begin{align}
\mathcal{A}(d) = o(1/\log n ) \qquad \mathrm{\forall} d \in(0, 1.1 e \log n) \:.
\label{nos}
\end{align}

Fix an arbitrary positive integer $d \in(0, 1.1 e \log n)$.
The random variable $W(n)$, conditional on $D(n)=d$,  is a sum of $d$ i.i.d.\ $1+\mathcal{L}$ random variables.
Since $\geo(p)$ stochastically dominates $\mathcal{L}$, by Lemma~\ref{pro:geo} and since {$\eta > 1.1e / p$}, we have
\begin{equation}
\p{W(n) > \eta\log n | D(n) = d}
\le 
f(2-\eta \log n / d)^d \:,\label{second}
\end{equation}
where $f$ is defined in \eq{f_def}.

Combining \eq{first} and \eq{second}, we get
\begin{align}
\mathcal{A}(d)\le
\exp\Big[
\log n 
-(d-1) \Upsilon \left(\frac{\log n}{d-1}\right)
+d \log f(2-\eta \log n / d)
\Big] \:. \label{aali}
\end{align}

{
Let $c = d / \log n$ and $c_1 = c - 1/ \log n$.
Let $\vartheta=\max\{\eta(1-p)\log(1-p^3),
-0.15p\eta\}$.
Note that $\vartheta$ is a negative constant.
By Lemma~\ref{lem:technical}
and since the function $c \Upsilon(1/c)$ is uniformly continuous on $[0,1.1e]$, we find that for large enough $n$,
$$-c_1 \Upsilon(1/c_1) + c \log f ( 2 - \eta / c) < 
\vartheta/2 - 1 \:.
$$
Together with \eq{aali}, this gives
$
\mathcal{A}(d)\le\exp(\vartheta \log n /2)
$,
and \eq{nos} follows.}
\end{proof}

\section{Transformations of the random-surfer tree model}
\label{sec:transform}
In Sections~\ref{sec:transform}--\ref{sec:upper} we study the random-surfer tree model.
In this section we show how to transform the random-surfer tree model three times to eventually obtain a new random tree model, which we analyse in subsequent sections. The first transformation is novel. The second one was perhaps first used by Broutin and Devroye~\cite{treeheight}, and the third one probably by Pittel~\cite{random_recursive_trees}.

Let us call the random-surfer tree model the \emph{first model}.
First, we will replace the attachment rule with a simpler one by introducing \emph{weights} for the edges.
In the first model, the edges are unweighted and in every step $s$ a new vertex $v_s$ {appears}, chooses an old vertex $u$, and attaches to a vertex in the path connecting $u$ to the root, according to some rule.
We introduce a second model that is weighted, and such that there is a one to one correspondence between the vertices in the second model and in the first model. For a vertex $v$ in the first model, we denote its corresponding vertex in the second model by $\overline{v}$.
In the second model,  in every step $s$ a new vertex $\overline{v_s}$ {appears}, chooses an old vertex $\overline{u}$ and attaches to $\overline{u}$, and the weight $w(\overline{u}\:\overline{v_s})$ of the new edge $\overline{u}\:\overline{v_s}$ is chosen such that the \emph{weight} of $\overline{v_s}$ equals the \emph{depth} of $v_s$ in the first model. Let $w\left(\overline{u}\right)$ denote the weight of vertex $\overline{u}$.
Then it {follows from the definition of the random-surfer tree} model that $w(\overline{u}\:\overline{v_s})$ is distributed as $\max \{1 - \geo(p), 1 - w\left(\overline{u}\right) \}$.
{The} term $1 - w\left(\overline{u}\right)$ appears {here} solely because the weight of $\overline{v_s}$ is at least 1 (in the first model, the depth of $v_s$ is at least 1, since it cannot attach to a vertex higher than the root).
Because the depth of $v$ in the first model equals the weight of $\overline{v}$ in the second model, the height of the first model equals the weighted height of the second model.

We will need to make the degrees of the tree bounded, so we define a third model. In this model, the new vertex can attach just to the leaves. In step $s$ a new vertex $v_s$ appears, chooses a random leaf $u$ and joins to $u$ using an edge with weight distributed as $\max \{1 - \geo(p), 1 - w\left(\overline{u}\right) \}$. Simultaneously, a new vertex $u'$ appears and joins to $u$ using an edge with weight 0. Then we have $w(u) = w(u')$ and henceforth $u'$ plays the role of $u$, i.e.\ the  next vertex wanting to attach to $u$, but cannot do so because $u$ is no longer a leaf, may attach to $u'$ instead.  Clearly there exists a coupling between the second and third models in which the weighted height of the third model, when it has $2n-1$ vertices, equals the weighted height of the second model with $n$ vertices.
In fact the second model may be obtained from the third one by contracting all zero-weight edges.
We can thus study the weighted height of the first model by studying it in the third model.

All the above models were defined using discrete time steps.
We now define a fourth model using the following continuous time branching process, which we call $\mathcal{P}$.
At time 0 the root is born. From this moment onwards, whenever a new vertex $v$ is born (say at time $\kappa$), it waits for a random time $E$, which is distributed exponentially with mean 1, and after time $E$ has passed (namely, at absolute time $\kappa + E$) gives birth to two children $v_1$ and $v_2$, and dies. The weights of the edges $vv_1$ and $vv_2$ are generated as follows: vertex $v$ chooses $i\in\{1,2\}$ independently and uniformly at random. The weight of $vv_i$ is distributed as $\max \{1 - \geo(p), 1 - w\left(v\right) \}$ and the weight of $vv_{3-i}$ is 0.
 Given $t\ge 0$, we denote by $T_t$ the almost surely finite random tree obtained by taking a snapshot of this process at time $t$. By the memorylessness of the exponential distribution, if one starts looking at this process at any deterministic moment, the next leaf to die is chosen uniformly at random.
Hence for any stopping time $\tau$, the distribution of $T_{\tau}$, conditional on $T_{\tau}$ having $2n-1$ vertices, is the same as the distribution of the third model when it has $2n-1$ vertices.

The following lemma implies that certain results for $T_t$ carry over to results for the random-surfer tree model.

\begin{lemma}
\label{lem:equal_logs}
Assume that there exist constants $\theta_L,\theta_U$ such that for every fixed $\eps>0$,
$$\p{\theta_L(1-\eps) t \le \wh(T_t) \le \theta_U(1+\eps) t} \to 1 $$
as $t\to\infty$.
Then for every fixed $\eps>0$, a.a.s.\ as $n\to\infty$ the height of the random-surfer tree model is between
$\theta_L(1-\eps) \log n$ and $\theta_U(1+\eps) \log n$.
\end{lemma}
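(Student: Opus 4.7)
The plan is to couple the random-surfer tree on $n$ vertices with a snapshot of the branching process $\mathcal{P}$ at a random stopping time. Let $\tau_n$ be the first time at which $T_t$ has $2n-1$ vertices (equivalently, $n$ leaves); this is an almost surely finite stopping time. By the discussion preceding the lemma, conditioning on $T_{\tau_n}$ having $2n-1$ vertices (which holds a.s.), the tree $T_{\tau_n}$ has the same distribution as the third model with $2n-1$ vertices, which has the same weighted height as the second model on $n$ vertices, which in turn equals the height of the random-surfer tree model on $n$ vertices. So it is enough to show that $\wh(T_{\tau_n})$ lies in $[\theta_L(1-\eps)\log n,\ \theta_U(1+\eps)\log n]$ a.a.s.

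First I would pin down $\tau_n$. The number of leaves in $T_t$ evolves as a Yule process of rate one: when there are $k$ leaves, each carries an independent $\mathrm{Exp}(1)$ clock, so the next birth occurs after an $\mathrm{Exp}(k)$-distributed waiting time. Consequently
\[
\tau_n \;=\; \sum_{k=1}^{n-1} X_k, \qquad X_k \sim \mathrm{Exp}(k)\ \text{independent},
\]
so $\e{\tau_n}=H_{n-1}=\log n + O(1)$ and $\mathrm{Var}(\tau_n)=\sum_{k=1}^{n-1} k^{-2}\le \pi^2/6$. Chebyshev's inequality then gives, for every fixed $\delta>0$,
\[
\p{\bigl|\tau_n-\log n\bigr| > \delta\log n} \;=\; o(1).
\]

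Next, given $\eps>0$, I would choose $\delta,\eps'>0$ small enough that $(1-\delta)(1-\eps')\ge 1-\eps$ and $(1+\delta)(1+\eps')\le 1+\eps$, and set $t_\pm=(1\pm\delta)\log n$. Observe that $t\mapsto \wh(T_t)$ is monotone non-decreasing because $T_t$ only gains vertices and edges as $t$ grows (existing root-to-vertex paths and their weights are preserved). Hence on the event $\{t_-\le \tau_n\le t_+\}$ we have $\wh(T_{t_-})\le \wh(T_{\tau_n})\le \wh(T_{t_+})$. Since $t_\pm\to\infty$ as $n\to\infty$, the hypothesis applied at the deterministic times $t_\pm$ with parameter $\eps'$ yields $\wh(T_{t_-})\ge\theta_L(1-\eps')t_-\ge\theta_L(1-\eps)\log n$ and $\wh(T_{t_+})\le\theta_U(1+\eps')t_+\le\theta_U(1+\eps)\log n$ a.a.s. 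A union bound over the concentration event for $\tau_n$ and the two deterministic-time events finishes the argument.

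The only conceptual hurdle is transferring a bound stated for deterministic $t$ to the random time $\tau_n$; monotonicity of $\wh(T_t)$ combined with the $O(1)$ variance of $\tau_n$ (so $\tau_n=\log n+o(\log n)$ a.a.s.) dispatches this cleanly, and I do not expect any serious technical obstacle.
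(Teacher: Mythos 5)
Your proof is correct and follows essentially the same strategy as the paper's: sandwich the random stopping time at which the process has $2n-1$ vertices (your $\tau_n$, the paper's $A_2$) between deterministic times close to $\log n$, use monotonicity of $\wh(T_t)$, and apply the hypothesis at the deterministic times. The only difference is in how the stopping time is located: you compute directly via the Yule-process representation $\tau_n=\sum_{k=1}^{n-1}X_k$, $X_k\sim\mathrm{Exp}(k)$, and Chebyshev, whereas the paper cites Broutin and Devroye's result that $\log|V(T_t)|/t\to 1$ almost surely; both are valid, and yours is self-contained.
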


\begin{proof}
Let $\ell_n = 2n - 1$, and let $\eps>0$ be fixed.
For the process $\mathcal{P}$, we define three stopping times as follows:
\begin{description}
\item $a_1$ is the deterministic time $(1-\eps) \log (\ell_n)$.
\item $A_2$ is the random time when the evolving tree has exactly $\ell_n$ vertices.
\item $a_3$ is the deterministic time $(1+\eps) \log (\ell_n)$.
\end{description}

By  hypothesis, a.a.s.\ as $n\to\infty$ we have
\begin{equation}
\label{byassumption}
(1-\eps) \theta_L  \log (\ell_n) \le \wh\left(T_{a_1}\right) \mathrm{\ and\ }
\wh\left(T_{a_3}\right) \le (1+\eps) \theta_U \log (\ell_n) \:.
\end{equation}

Broutin and Devroye~\cite[Proposition~2]{treeheight}  considered the infinite process $T_t$ as $t\to\infty$ and proved that almost surely
$${\lim_{t\to\infty} \frac{\log|V({T}_t)|}{t} =1}\:,$$
which implies that a.a.s.\  as $t\to\infty$, we have ${\log |V({T}_t)| \sim t}$.
This means that, as $n\to\infty$, a.a.s.
$$\log |V({T}_{a_1})| \sim a_1 = (1 - \eps) \log (\ell_n) \:,$$
and hence  $|V({T}_{a_1})| < \ell_n$, which implies $a_1 < A_2$.
Symmetrically, it can be proved that a.a.s.\ as $n\to\infty$ we have $A_2 < a_3$.
It follows that a.a.s.\ as $n\to\infty$
\begin{equation}
\label{sandwich}
\wh\left(T_{a_1}\right) \leq \wh \left(T_{A_2}\right) \leq \wh \left(T_{a_3}\right) \:.
\end{equation}
On the other hand, as noted above,
$T_{A_2}$ has the same distribution as the third model with $2n-1$ vertices,
 whose weighted height  has the same distribution as  that of  the random-surfer tree model with $n$ vertices.
Chaining \eq{byassumption} and \eq{sandwich}
completes the proof.
\end{proof}

It will be convenient to define $T_t$ in a static way, which is equivalent to the dynamic definition above.

\begin{definition}[$T_{\infty}, T_t$]
Let $T_{\infty}$ denote an infinite binary tree.
To every edge $e$ is associated a random vector $(E_e, W_e)$ and to every vertex $v$ a random variable $W_v$, where the $W_e$'s and $W_v$'s are the \emph{weights}.
The law for $\{E_e\}_{e\in E(T)}$ is easy: first with every vertex $v$ we associate independently an exponential random variable with mean 1, and we let the values of $E$ on the edges joining $v$ to its two children be equal to this variable.
In the dynamic interpretation, this random variable denotes the length of life of $v$.
Generation of the weights is done in a top-down manner,
where we think of the root as the top vertex.
Let the weight of the root be zero.
Let $v$ be a vertex whose weight has been determined, and let $v_1,v_2$ be its two children.
Choose $i\in\{1,2\}$ independently and uniformly at random,
and then choose $Y = 1 - \geo(p)$ independently of previous choices.
Then let
\begin{equation}
\label{weights_T}
W_{vv_i} = \max \{Y, 1 - W_v \},\quad
W_{v_i} = W_v + W_{vv_i} \:,
\end{equation}
and
$$W_{vv_j} = 0,\quad
 W_{v_j} = W_v$$
for $j = 3-i$.

For a vertex $v$, let $\pi(v)$ be the set of edges of the unique path connecting $v$ to the root.
It is easy to check that the weight of any vertex $v$ equals $\sum_{e\in\pi(v)} W_e$.
We define the \emph{birth time} of a vertex $v$, written $B_v$, as
$$B_v = \sum_{e\in\pi(v)} E_e \:,$$
where the birth time of the root is defined as zero.
Finally, given $t\ge 0$ we define $T_t$ as the subtree of $T_{\infty}$ induced by vertices with birth time at most $t$.
Note that $T_t$ is finite almost surely.
\end{definition}

\section{Lower {bounds} for the random-surfer tree model}
\label{sec:lower}
Here we prove the lower {bounds in Theorems~\ref{thm:main} and~\ref{thm:diameter}}.
For this,  we consider another infinite binary tree $T'_{\infty}$
which is very similar to $T_{\infty}$, except for the generation rules for the weights, which are as follows.
Let the weight of the root be zero.
Let $v$ be a vertex whose weight has been determined, and let $v_1,v_2$ be its two children.
Choose $i\in\{1,2\}$ independently and uniformly at random,
and choose $Y = 1 - \geo(p)$ independently of previous choices.
Then let
\begin{equation}
\label{weights_T'}
W_{vv_i} = Y \mathrm{\ and\ } W_{v_i} = W_v + W_{vv_i}
\end{equation}
and
$$W_{vv_j} = 0\mathrm{\ and\ } W_{v_j} = W_v$$
for $j = 3-i$.
Comparing  \eq{weights_T'}  {with} \eq{weights_T}, we find that the weight of every vertex in $T'_{\infty}$
is stochastically less than or equal to that of its corresponding vertex in $T_{\infty}$.
The tree $T'_t$ is defined as before.
Clearly probabilistic lower bounds for $\wh(T'_t)$ are also probabilistic lower bounds for $\wh(T_t)$. Distinct vertices $u$ and $v$ in a tree are called \emph{antipodal} if the unique $(u,v$)-path in the tree passes through the root.

\begin{lemma}
\label{lem_bd_lower}
Consider the tree $T'_{\infty}$.
Let $\gamma_L:(0,1)\to\mathbb{R}$ be such that for every $a\in (0,1)$, each vertex $u$ and each descendent $v$ of $u$ that is $m$ levels deeper,
\begin{equation}
\label{lower_bound_condition}
\p{W_v - W_u \ge a m} \ge \exp (-m \gamma_L (a) - o(m))
\end{equation}
as $m\to \infty$.
Assume that there exist $\alpha^*,\rho^* \in(0,1)$ with
\begin{equation}
\label{alpharho_lower}
\gamma_L(\alpha^*) + \Upsilon(\rho^*) = \log 2 \:.
\end{equation}
Then for every fixed $\eps>0$, a.a.s.\ there exist antipodal vertices $u,v$ of $T'_t$ with weights at least $\frac{a^*}{\rho^*}(1 - \eps) t$.
\end{lemma}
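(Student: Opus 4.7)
The plan is to apply first- and second-moment arguments to the number of deep, heavy, early-born vertices in each subtree of the root of $T'_\infty$, and then promote the resulting constant-probability event to an a.a.s.\ one by exploiting the independence of many pendant subtrees.

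Fix $\eps>0$ and $\delta>0$ small, and set $\rho:=\rho^*(1+\delta)\in(0,1)$ together with $m:=\lfloor t/\rho\rfloor$. Since $\Upsilon$ is strictly decreasing on $(0,1)$ (as $\Upsilon'(x)=1-1/x<0$ there), the assumption $\gamma_L(\alpha^*)+\Upsilon(\rho^*)=\log 2$ yields a strict gap $\eta:=\log 2-\gamma_L(\alpha^*)-\Upsilon(\rho)>0$, and for $\delta$ sufficiently small $\alpha^* m\ge(\alpha^*/\rho^*)(1-\eps)t$. Call a vertex $v$ \emph{good} if it lies at depth $m$, $W_v\ge\alpha^* m$, and $B_v\le t$. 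Fix one of the two subtrees $T^{(i)}$ rooted at a child of the root of $T'_\infty$ and let $N_i$ be the number of good vertices in $T^{(i)}$. Using the independence of $\{W_e\}$ and $\{E_e\}$, the hypothesis~\eqref{lower_bound_condition}, and Lemma~\ref{lem_gamma} (applied to the sum of $m$ i.i.d.\ exponentials along the path to a depth-$m$ vertex),
\[
\mathbb{E}[N_i]\;\ge\;2^{m-1}\,e^{-\gamma_L(\alpha^*)m-o(m)}\,e^{-\Upsilon(\rho)m-o(m)}\;=\;e^{\eta m-o(m)}.
\]

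A standard second-moment computation, decomposing pairs $(u,v)$ of depth-$m$ vertices according to the depth $\ell$ of their lowest common ancestor and using the strict gap $\eta>0$ to damp the contributions from large $\ell$, should yield $\mathbb{E}[N_i^2]\le C(\delta)\,\mathbb{E}[N_i]^2$ for a constant $C(\delta)$ independent of $t$; Paley--Zygmund then gives $\p{N_i>0}\ge 1/C(\delta)$. To upgrade this constant probability to probability tending to $1$, fix $k=k(t)\to\infty$ with $k=o(m)$ (for instance $k=\lfloor\log\log m\rfloor$) and consider the $2^k$ subtrees pendant at depth $k$ in $T^{(i)}$. These subtrees are mutually independent, each distributed as $T'_\infty$ itself up to an additive shift of the root weight. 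Standard concentration ensures that a $1-o(1)$ fraction of the $2^k$ roots have weight and birth time within $O(\sqrt{k})$ of their means; for any such root, applying the preceding argument inside its subtree (with depth target $m-k$ and thresholds shifted by the root's values, which by continuity of $\gamma_L$ only shifts the rate function by $o(1)$) shows that the subtree independently contains a good descendant with probability at least $1/(2C(\delta))$. Hence the probability that $T^{(i)}$ contains no good vertex is at most $(1-1/(2C(\delta)))^{\Omega(2^k)}\to 0$. Applying this to the two (independent) root-subtrees of $T'_\infty$ produces, a.a.s., antipodal good vertices, both with weight at least $\alpha^* m\ge(\alpha^*/\rho^*)(1-\eps)t$.

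The main obstacle is the second-moment bound: for two depth-$m$ vertices $u,v$ whose lowest common ancestor sits at depth $\ell$, the events ``$u$ good'' and ``$v$ good'' share all edge weights and edge-lives on the common prefix, so the joint probability couples the typical behaviour of the prefix with the conditionally independent large-deviation events on the two disjoint suffixes. Bounding the resulting $\ell$-by-$\ell$ sum by a uniform constant multiple of $\mathbb{E}[N_i]^2$ requires exploiting the convexity of the rate functions $\gamma_L$ and $\Upsilon$ (or an equivalent smoothing argument) and crucially uses that $\eta>0$ in a neighbourhood of the chosen parameters. The rest is standard large-deviation bookkeeping in the spirit of~\cite{treeheight}.
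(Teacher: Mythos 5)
There is a genuine gap, and it sits exactly where you flag it: the second-moment bound is not an omissible ``standard computation'' here, and in fact it cannot be carried out from the hypotheses of the lemma. The assumption \eqref{lower_bound_condition} supplies only a \emph{lower} bound on the tail probabilities, and $\gamma_L$ is not assumed convex, continuous, or even monotone; but to control $\e{N_i^2}$ you need \emph{upper} large-deviation bounds on the weight of the shared prefix (and of the two suffixes) of a pair of depth-$m$ vertices, which the lemma simply does not provide. Even if you import the explicit distribution of the increments in $T'_{\infty}$ to get such upper bounds, the plain (untruncated) second moment for the number of displaced particles in a branching random walk is well known to blow up in general: pairs whose common ancestor is deep and whose shared prefix carries an atypically large weight can dominate $\e{N_i^2}$, so that $\e{N_i^2}/\e{N_i}^2\to\infty$ unless one adds a barrier/truncation (requiring the whole path to stay near the optimal slope) or restructures the argument recursively. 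Your appeal to ``convexity of the rate functions or an equivalent smoothing argument'' is precisely the missing proof. A smaller but real issue of the same kind occurs in your boosting step: shifting the thresholds by the (random) weight and birth time of a depth-$k$ root and saying this ``only shifts the rate function by $o(1)$ by continuity of $\gamma_L$'' uses a continuity property of $\gamma_L$ that is not among the hypotheses.

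The paper's proof is designed to avoid exactly this. It embeds a Galton--Watson process in $T'_{\infty}$: a particle $u$ adopts as offspring those descendants $v$ exactly $L$ levels deeper with $W_v-W_u\ge\alpha^* L$ and $B_v-B_u\le\rho L$ (with $\rho$ slightly above $\rho^*$). The hypothesis \eqref{lower_bound_condition} together with Lemma~\ref{lem_gamma} gives expected offspring $\exp[(\log 2-\gamma_L(\alpha^*)-\Upsilon(\rho)-o(1))L]>1$ for large constant $L$, so the process survives with positive probability --- only the first-moment (lower-bound) information is ever used, and survival of the supercritical process replaces your second-moment/Paley--Zygmund step. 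The probability is then boosted by running $2^{b}$ independent such processes from the depth-$b$ vertices (with $b$ a \emph{constant}), requiring one survivor in each branch of the root to get antipodal vertices, and the possibly negative weights and birth times in the top $b$ levels are controlled by the constant-probability event that all of them satisfy $E_e\le a$, $W_e\ge -a$, whose cost is an additive $O(1)$ absorbed in the $(1-\eps)$ slack. If you want to salvage your moment-method route, you would need to prove a truncated second-moment estimate (a barrier computation) using explicit two-sided large-deviation bounds for $T'_{\infty}$, which is substantially more work than the Galton--Watson argument and goes beyond what the stated hypotheses allow.
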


The proof is very similar to the proof of~\cite[Lemma~4]{treeheight} except a small twist is needed at the end to handle the negative weights.

\begin{proof}
Let $c=\frac{a^*}{\rho^*}$, and let $\eps,\delta>0$ be arbitrary. We prove that with probability at least $1-\delta$ 
{for all large enough $t$ there exists a pair $(u,v)$ of antipodal vertices of $T'_{\infty}$ with $\max\{B_u,B_v\} < t $ and $\min\{W_u,W_v\} > \left( 1 - 2\eps\right) ct$.}
%there exists a vertex $v$ of $T'_{\infty}$ with $B_v < t $ and $W_v > \left( 1 - 2\eps\right) ct$.

Let $L$ be a constant positive integer that will be determined later,
and let $\alpha = \alpha^*$ and $\rho = \frac{\alpha}{c(1-\epsilon)} > \rho^*$.
By \eq{alpharho_lower} and since $\rho^*<1$ and $\Upsilon$ is strictly decreasing on $(0,1]$, we have
\begin{equation*}
\gamma_L(\alpha) + \Upsilon(\rho) < \log 2 \:.
\end{equation*}

Build a Galton-Watson process from $T'_{\infty}$ whose particles are a subset of vertices of $T'_{\infty}$, as follows.
Start with the root as the initial particle of the process.
If a given vertex $u$ is a particle of the process, then its potential offspring are its $2^L$ descendants that are $L$ levels deeper. Moreover, such a descendent $v$ is an offspring of $u$ if and only if $W_v - W_u \ge \alpha L$ and $B_v - B_u \le \rho L$.
As these two events are independent, the expected number of children of $u$ is at least
$$ 2^L \p{W_v - W_u \ge \alpha L} \p{B_v - B_u \le \rho L}
\ge \exp\left[ (\log 2 - \gamma_L (\alpha) - \Upsilon(\rho) - o(1)) L\right]
$$
as $L\to \infty$, by \eq{lower_bound_condition} and Lemma~\ref{lem_gamma}.
Since we have $\log 2 - \gamma_L (\alpha) - \Upsilon(\rho)>0$,
we may choose $L$ large enough that this expected value is strictly greater than 1.
Therefore, this Galton-Watson process survives with probability $q>0$.

We now boost this probability up to $1-\delta$, by starting several independent processes, giving more chance that at least one of them {survives}.
Specifically, let $b$ be a constant large enough that
$$(1-q)^{2^{b-1}} < \delta/3 \:.$$
Consider $2^{b}$ Galton-Watson processes, which have the vertices at depth $b$ of $T'_{\infty}$ as their initial particles, and reproduce using the same rule as before.
Let $a$ be a constant large enough that
$$ 2^{b+1} (e^{-a} + (1-p)^{a+2}) < \delta / 3 \:,$$
and let $A$ be the event that all edges $e$ in the top $b$ levels of $T'_{\infty}$ have $E_e \le a$ and $W_e \ge -a$. Then
$$1 - \p{A} \le 2^{b+1} (e^{-a} + (1-p)^{a+2}) < \delta / 3 \:.$$
Also, let $Q$ be the event that in each of the two branches of the root, at least one of the $2^{b-1}$ Galton-Watson processes survives.
Then
$$1 - \p{Q} \le 2 (1-q)^{2^{b-1}} < 2 \delta / 3\:,$$
and so with probability at least $1-\delta$ both $A$ and $Q$ occur.

Assume that both $A$ and $Q$ occur.
Let
$$m = \left \lfloor \frac{t (1-\eps) }{\rho L } \right \rfloor$$
and let $u$ and $v$ be particles at generation $m$ of surviving processes in distinct branches of the root.
Then $u$ and $v$ are antipodal,
$$\max\{B_u,B_v\} \le a b + m \rho L \le t (1-\eps) + O(1) < t \:,$$
and
$$\min\{W_u,W_v\} \ge -a b + m \alpha L \ge \frac{(1-\eps)\alpha}{\rho} \: t - O(1) > c (1-2\eps) t$$
for $t$ large enough, as required.
\end{proof}

Let $Y_1,Y_2, \dots$ be i.i.d.\ with $Y_i = 1 - \geo(p)$.
Recall the definition of $f : (-\infty,1] \to \mathbb{R}$ 
from \eq{f_def}:
\begin{equation*}
f(x) = (2-x)^{2-x} p (1-p)^{1-x} (1-x)^{x - 1} \:.
\end{equation*}
Note that $f(1)=p$ since {by convention} $0^0=1$, and  $f\left(2-p^{-1}\right)=1$.
The following lemma follows by noting that $f$ is positive and the derivative of $\log f$ is $\log \left ( \frac{1-x}{(2-x)(1-p)}\right)$.

\begin{lemma}
\label{lem:f_properties}
The function $f$ is continuous in $(-\infty,1]$ and differentiable in $(-\infty,1)$.
Moreover, $f$ is increasing on $(-\infty,2-p^{-1}]$
and decreasing on
$[2-p^{-1},1]$.
\end{lemma}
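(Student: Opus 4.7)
The plan is to work with $\log f$ rather than $f$ itself. Since $f$ is a product of positive factors on $(-\infty, 1)$ (and $f(1)=p>0$ under the convention $0^0=1$ the paper has adopted), taking logarithms is harmless and will reduce everything to a calculus-on-a-sum computation.

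First I would establish continuity. On $(-\infty,1)$ each of the factors $(2-x)^{2-x}$, $(1-p)^{1-x}$, $(1-x)^{x-1}$ is a continuous positive function of $x$ (for the first, use $(2-x)^{2-x}=\exp((2-x)\log(2-x))$ with $2-x>1$; the third similarly equals $\exp((x-1)\log(1-x))$). For continuity at $x=1$, I would note that $(1-x)^{x-1}=\exp((x-1)\log(1-x))\to e^0=1$ as $x\to 1^-$ because $(x-1)\log(1-x)\to 0$, so $f(x)\to 1\cdot p\cdot 1\cdot 1 = p = f(1)$. Differentiability on $(-\infty,1)$ then follows from the same expressions, since each factor is a composition of smooth functions.

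Next I would verify the formula for the derivative of $\log f$ claimed in the hint. Writing
\begin{equation*}
\log f(x)=(2-x)\log(2-x)+\log p+(1-x)\log(1-p)+(x-1)\log(1-x),
\end{equation*}
straightforward differentiation (using that the derivative of $u\log u$ is $\log u+1$ and the chain rule for the first and last terms) yields
\begin{equation*}
(\log f)'(x)=-\log(2-x)-1-\log(1-p)+\log(1-x)+1=\log\!\left(\frac{1-x}{(2-x)(1-p)}\right),
\end{equation*}
matching the hint exactly.

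Finally I would read off the monotonicity from the sign of this derivative. We have $(\log f)'(x)\ge 0$ iff $(1-x)\ge (2-x)(1-p)$, which after simplification is equivalent to $x\le 2-p^{-1}$; the inequality is strict for $x<2-p^{-1}$ and reverses for $x>2-p^{-1}$. Since $\log$ is strictly increasing and $f>0$, the function $f$ is itself increasing on $(-\infty,2-p^{-1}]$ and decreasing on $[2-p^{-1},1)$; the monotonicity extends to the closed endpoint $x=1$ by continuity. There is no real obstacle here — the only subtlety worth being careful about is the convention $0^0=1$ at $x=1$, which the paper has already flagged and which is used solely to justify continuity at that single point.
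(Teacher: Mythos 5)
Your proposal is correct and follows the same route as the paper, which simply observes that $f$ is positive and that $(\log f)'(x)=\log\left(\frac{1-x}{(2-x)(1-p)}\right)$, reading off the monotonicity from the sign of this expression; your computation of the derivative, the sign analysis giving the threshold $x=2-p^{-1}$, and the continuity check at $x=1$ via the convention $0^0=1$ all match. No gaps — you have merely written out the details the paper leaves implicit.
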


%\begin{proof}
%The first statement is clear.
%The function $f$ is positive and the derivative of $\log f$ is $\log \left ( \frac{1-x}{(2-x)(1-p)}\right)$,
%which gives the second statement.
%\end{proof}

\begin{lemma}
\label{lem:ylargedev}
(a)
There is an absolute constant $C$ such that for any $a \in [2-p^{-1},1]$
and any positive integer $m$ we have
$$ \p{Y_1+\dots+Y_m \ge am} \le C {m} f(a)^m \:.$$

(b)
As $m\to\infty$, uniformly for all $a\in[0,1]$ we have
$$ \p{Y_1+\dots+Y_m \ge am} \ge \left[ f(a) - o(1) \right]^m \:.$$

(c)
If $p\ge 1/2$, then
as $m\to\infty$, uniformly for all $a\in[0,2-\frac{1}{p}]$ we have
$$ \p{Y_1+\dots+Y_m \ge am} \ge \left[ 1-o(1) \right] ^ m \:.$$
\end{lemma}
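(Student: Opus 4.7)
The strategy is to recast the left-hand side as a tail of a negative binomial distribution and then apply matching entropy/Stirling estimates. Let $G_i = 1 - Y_i \sim \geo(p)$ and $S = G_1 + \cdots + G_m$; then
\[
\p{Y_1 + \cdots + Y_m \ge am} = \p{S \le (1-a)m} \quad \text{and} \quad \p{S = j} = \binom{j+m-1}{j}(1-p)^j p^m.
\]
A short calculation verifies that the entropy bound $\binom{n}{k} \le n^n/(k^k(n-k)^{n-k})$, applied with $n = j+m-1$, $k = j$, and $j = (1-a)m$, collapses after matching exponents with \eq{f_def} to $\p{S = j} \le f(a)^m$. The complementary Stirling lower bound $\binom{n}{k} \ge n^n/(k^k(n-k)^{n-k})/\poly(n)$ analogously gives $\p{S = j} \ge f(a)^m/\poly(m)$ whenever $j \ge 1$.

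For part (a), assume $a \in [2-1/p, 1]$. The ratio $\p{S = j+1}/\p{S=j} = (j+m)(1-p)/(j+1)$ is $\ge 1$ iff $jp \le m(1-p) - 1$, and this holds throughout $\{0, 1, \ldots, \lfloor(1-a)m\rfloor\}$ under the hypothesis $a \ge 2-1/p$ (equivalently $(1-a)m \le m(1-p)/p$). Hence the summands defining $\p{S \le (1-a)m}$ are non-decreasing in $j$, giving
\[
\p{S \le (1-a)m} \le ((1-a)m + 1) \cdot \p{S = \lfloor (1-a)m \rfloor} \le (m+1)\, f(a)^m
\]
by the entropy bound, which is the desired estimate. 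The degenerate case $a = 1$ (so $j = 0$) is immediate from $\p{S = 0} = p^m = f(1)^m$.

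For part (b), bound the full sum below by a single term, $\p{S \le (1-a)m} \ge \p{S = j_0}$, where $j_0 = \lfloor (1-a)m \rfloor$. For $a \in [0, 1 - 1/m]$ (so $j_0 \ge 1$) the Stirling lower bound yields $\p{S = j_0} \ge f(a_0)^m/\poly(m)$, where $a_0 = 1 - j_0/m$ differs from $a$ by at most $1/m$. By Lemma~\ref{lem:f_properties} the logarithmic derivative $(\log f)'(a) = \log((1-a)/((2-a)(1-p)))$ has absolute value $O(\log m)$ on this range, so $f(a_0) = f(a)(1 + O(\log m / m))$ uniformly in $a$; taking $m$-th roots gives $\p{S \le (1-a)m}^{1/m} \ge f(a) - o(1)$ uniformly. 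The remaining sliver $a \in (1 - 1/m, 1]$ is handled by the direct identity $\p{S = 0} = p^m = f(1)^m$ together with the continuity of $f$ at $1$.

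For part (c), assume $p \ge 1/2$ and $a \in [0, 2-1/p]$. Since $a \mapsto \p{\sum Y_i \ge am}$ is monotonically non-increasing, its infimum over this range is attained at $a = 2 - 1/p = \e{Y_1}$. By the central limit theorem applied to i.i.d.\ summands with finite positive variance $\mathrm{Var}(Y_1) = (1-p)/p^2$, the probability $\p{\sum Y_i \ge \e{\sum Y_i}}$ converges to $1/2$, hence exceeds $1/3$ for all large $m$ uniformly in $a$, and $(1/3)^{1/m} = 1 - o(1)$ completes the claim. The principal technical subtlety lies in part (b), where both the Stirling polynomial factor and the Lipschitz constant of $f$ deteriorate as $a \to 1$; both are bypassed cleanly via the exact evaluation $\p{S = 0} = p^m$ on an $O(1/m)$-neighbourhood of $a = 1$.
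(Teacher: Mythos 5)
Your argument is correct and is essentially the paper's own proof in different clothing: writing the event as a negative-binomial tail $\p{S\le(1-a)m}$ with $\p{S=j}=\binom{j+m-1}{j}(1-p)^jp^m$ is exactly the paper's coin-flip identity $\p{Y_1+\dots+Y_m=am}=\binom{2m-am-1}{m-1}p^m(1-p)^{m-am}=\Theta\bigl(f(a)^m/\sqrt m\bigr)$, and parts (a) and (b) then proceed, as in the paper, by bounding the tail by $m$ times a single point mass (upper bound) and by one point mass at $j_0=\lfloor(1-a)m\rfloor$ (lower bound). Two small remarks. In (a), the monotonicity of $j\mapsto\p{S=j}$ that you assert requires $jp\le m(1-p)-1$, while the hypothesis $a\ge 2-p^{-1}$ only gives $jp\le m(1-p)$, so the ratio can dip just below $1$ for the last $O(1/p)$ indices; this is harmless, since one can instead bound the sum by $(m+1)\max_{j\le j_0}\p{S=j}$ and invoke the fact that $f$ is decreasing on $[2-p^{-1},1]$ (Lemma~\ref{lem:f_properties}) — which you implicitly need anyway to pass from $f(1-j_0/m)$ to $f(a)$ after taking the floor — but as literally written the monotone-summands claim is slightly off. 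In (c) you replace the paper's local estimate at $a=2-p^{-1}$ (where $f=1$) by monotonicity in $a$ together with the central limit theorem at the mean $\e{Y_1}=2-p^{-1}$; this is a valid, slightly more probabilistic shortcut and gives the same conclusion.
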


\begin{proof}
The conclusions are easy to see for $a=1$, so assume that $a<1$.
First, assume that $am$ is an integer.
Consider a sequence of independent biased coin flips, each of which is heads with probability $p$. A random walker   starts from 0,  takes one step to the  right on seeing heads, and one to the left on seeing   tails.
Then $Y_1+\dots+Y_m$ is the walker's position just after seeing the $m$-th head. Thus $Y_1+\dots+Y_m =am$ if and only if the $(2m-am)$-th coin comes up heads, and in the first $2m-am$ coin flips we see exactly $m$ heads and $m-am$ tails, so we have
\begin{align}
\p{Y_1+\dots+Y_m =am} & = \binom{2m-am-1}{m-1} p^m (1-p)^{m-am} \nonumber \\
& = \Theta \left ( \binom{2m-am}{m} p^m (1-p)^{m-am}  \right )  \nonumber \\
& = \Theta \left ( f(a)^m / \sqrt{m}\right ) \label{theta}\:,
\end{align}
where we have used Stirling's approximation for the last equality.

(a)
Let $a \in [2-{p}^{-1},1)$, and let $C$ be an absolute constant for the upper bound of $\Theta$ in \eq{theta}.
Then
\begin{align*}
\p{Y_1+\dots+Y_m \ge am} & \le m \sup \{\p{Y_1+\dots+Y_m = \alpha m} : \alpha \in[a,1]\}\\
& \le C \sqrt{m} \left[ \sup \{ f(\alpha) :  \alpha \in[a,1]\} \right] ^ m
\le C  m (f(a)) ^ m
\end{align*}
since  $f$ is decreasing on $\left[2-\frac{1}{p},1\right]$ by Lemma~\ref{lem:f_properties} and $m$ is a positive integer.

(b)
Assume that $m\to\infty$. Then
$$\p{Y_1+\dots+Y_m \ge am} \ge\p{Y_1+\dots+Y_m = \lceil am\rceil} = (f(a)-o(1))^m$$
uniformly for all $a\in[0,1)$ by continuity of $f$.

(c)
Assume that $p\ge 1/2$ and that $m\to\infty$.
Then
\begin{align*}
\p{Y_1+\dots+Y_m \ge am} & \ge\p{Y_1+\dots+Y_m = \left\lceil \left(2-{p^{-1}}\right) m \right\rceil}
\\
& = (f\left(2-{p^{-1}}\right)-o(1))^m = (1-o(1))^m
\end{align*}
uniformly for all $a\in[0,2-\frac{1}{p}]$
by continuity of $f$ and since $f(2-p^{-1})=1$.
\end{proof}

We define a two variable function
\begin{equation}
\label{Phi}
\Phi(a,s) =  p(1-p)(2-s)^2 (s-a)  - a(1-s) \:,
\end{equation}
and we define a function $\phi:[0,1]\to[0,1]$ as follows:
given $a\in[0,1]$, $\phi(a)$ is the unique solution in $[a,1]$ to
\begin{equation}
\label{phi}
\Phi(a,\phi(a)) =  p(1-p)(2-\phi(a))^2 (\phi(a)-a) - a(1-\phi(a)) = 0\:.
\end{equation}
Lemma~\ref{lem:s}(a) below shows that $\phi$ is well defined.
The proof of this lemma is straightforward and can be found in the appendix.

\begin{lemma}
\label{lem:s}
(a) Given $a\in[0,1]$, there is a unique solution $s\in[a,1]$ to $\Phi(a,s)=0$.
If $a\in\{0,1\}$ then $\phi(a) = a$.
If $a\in(0,1)$ then $0 < a < \phi(a) < 1$.

(b) If $s=\phi(a)$ then
$$\frac{sf(s)^{a/s}}{a^{a/s}(s-a)^{1-\frac{a}{s}}} = \frac{s}{s-a}\left(\frac{1-s}{(1-p)(2-s)}\right)^{a} \:.$$

(c) The function $\phi$ is increasing on $[0,1]$ and differentiable on $(0,1)$.

(d) The function $\phi$ is invertible and $\phi^{-1}$ is increasing.
If $s \in\{0,1\}$ then $\phi^{-1}(s)=s$.
If $s \in (0,1)$ then $0 < \phi^{-1}(s) < s < 1$.
\end{lemma}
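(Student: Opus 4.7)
The plan is to reduce the entire lemma to the study of a single explicit function. Since $\Phi(a,s)$ is affine in $a$, the equation $\Phi(a,s)=0$ rearranges to $a = F(s)$, where
\begin{equation*}
F(s) \;:=\; s\,G(s), \qquad G(s) \;:=\; \frac{p(1-p)(2-s)^{2}}{p(1-p)(2-s)^{2} + (1-s)}.
\end{equation*}
One reads off immediately that $F(0) = 0$, $F(1) = 1$, and $G(s) < 1$ on $[0,1)$, so $F(s) < s$ there. The idea is to identify $F$ with $\phi^{-1}$ and deduce everything from properties of $F$.

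The only substantive calculation needed for (a), (c), (d) is that $F$ is strictly increasing. Differentiating $G$ via the quotient rule, the two cubic terms $-2[p(1-p)]^{2}(2-s)^{3}$ cancel and the remainder factors as $p(1-p)(2-s)\bigl[(2-s)-2(1-s)\bigr] = p(1-p)\,s\,(2-s)$, giving
\begin{equation*}
G'(s) \;=\; \frac{p(1-p)\,s\,(2-s)}{\bigl[p(1-p)(2-s)^{2} + (1-s)\bigr]^{2}},
\end{equation*}
which is strictly positive on $(0,1]$. Hence $F'(s) = G(s) + sG'(s) > 0$ throughout $[0,1]$, so $F$ is a $C^{1}$ strictly increasing bijection of $[0,1]$ onto itself. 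Setting $\phi := F^{-1}$ resolves (a): the unique root is $s = \phi(a) \in [0,1]$, and the inequality $F(s)\le s$ forces $\phi(a)\ge a$ (with the correct boundary values $\phi(0)=0$, $\phi(1)=1$ and $\phi(a)\in(a,1)$ for $a\in(0,1)$). Part (c) is then the inverse function theorem applied to $F$, and part (d) is the tautology $\phi^{-1}=F$, whose properties have already been recorded.

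Part (b) is a purely algebraic identity. Substituting the definition \eq{f_def} of $f$ into the left-hand side and raising both sides to the $(s/a)$-th power (everything is positive for $a\in(0,1)$ and $s=\phi(a)\in(a,1)$), the asserted identity reduces, after collecting matching powers of $(2-s)$, $(1-p)$, and $(1-s)$, to
\begin{equation*}
\frac{s-a}{a} \;=\; \frac{1-s}{p(1-p)(2-s)^{2}},
\end{equation*}
which is precisely $\Phi(a,s)=0$ rewritten. The only step that really requires care is the simplification of $G'$ to its compact form; once that factorisation is in hand, (a), (c), (d) are immediate and (b) is a short bookkeeping exercise in exponents.
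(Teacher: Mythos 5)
Your proof is correct, but it takes a genuinely different route from the paper. You exploit the affineness of $\Phi$ in $a$ to solve explicitly for $a=F(s)$ with $F(s)=sG(s)$, verify by a direct (and correct) computation that $G'(s)=p(1-p)s(2-s)/\bigl[p(1-p)(2-s)^2+(1-s)\bigr]^2\ge 0$, and conclude that $F$ is a strictly increasing $C^1$ bijection of $[0,1]$ onto itself; then $\phi=F^{-1}$, and parts (a), (c), (d) follow from the inverse function theorem and the elementary observation $F(s)\le s$ (you even get uniqueness of the root over all of $[0,1]$, slightly stronger than the stated uniqueness in $[a,1]$). Your reduction of (b) to $\tfrac{s-a}{a}=\tfrac{1-s}{p(1-p)(2-s)^2}$ checks out and is exactly the content of $\Phi(a,s)=0$. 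The paper instead keeps $\Phi$ implicit: it gets existence from the sign change $\Phi(a,a)<0<\Phi(a,1)$, uniqueness from the fact that $\partial\Phi/\partial s>0$ at any root (shown by substituting the root relation back in), monotonicity and differentiability of $\phi$ from the implicit function theorem, and (d) from linearity of $\Phi$ in $a$. Your version buys explicitness: $\phi^{-1}$ is a concrete formula, (d) becomes a tautology, and no implicit-function machinery is needed. What the paper's route buys is reuse: the sign fact $\partial\Phi/\partial s>0$ at roots (its equation for uniqueness) is invoked again in the proof of Lemma~\ref{thm:large_deviation_upper} to locate the minimum of $\psi_2$ at $\zeta=\phi(a)$, so the implicit analysis does double duty, whereas your explicit $F$ would not directly supply that later ingredient. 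As a standalone proof of Lemma~\ref{lem:s}, yours is complete and arguably cleaner.
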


Next let $\hat{Y}_1,\hat{Y}_2,\dots$ be independent and distributed as follows:
for every $i=1,2,\dots$ we flip an unbiased coin,
if it comes up heads, then $\hat{Y}_i=Y_i$,
otherwise $\hat{Y}_i=0$.

Define the function $g_L : (0,1) \to \mathbb{R}$ as
\begin{equation*}
g_L(a) =
\begin{cases}
1/2 & \mathrm{if\ } p > 1/2 \mathrm{\ and\ } 0 < a < 1 - \frac{1}{2p} \\
\frac{\phi(a)-a}{\phi(a)} \left( \frac{(1-p) (2-\phi(a))}{1-\phi(a)} \right)^a & \mathrm{otherwise.}
\end{cases}
\end{equation*}
Note that $g_L$ is continuous as $\phi(1 - \frac{1}{2p}) = 2 - 1/p$.
The proofs of the following two lemmas are standard and can be found in the appendix.

\begin{lemma}
\label{lem:large_deviation_lower}
We have the following large deviation inequality for every fixed $a\in(0,1)$ as $m\to \infty$.
\begin{equation*}
\p{\hat{Y}_1+\dots+\hat{Y}_m \ge am} \ge (2g_L(a)-o(1))^{-m}\:.
\end{equation*}
\end{lemma}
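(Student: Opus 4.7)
The idea is to exploit the mixture structure of $\hat{Y}_i$: since each $\hat{Y}_i$ equals $Y_i$ with probability $1/2$ and $0$ otherwise, by selecting a specific set of $k$ ``active'' indices (those with $\hat{Y}_i=Y_i$) one obtains
\begin{equation*}
\p{\hat{Y}_1+\dots+\hat{Y}_m\ge am}\ \ge\ 2^{-m}\binom{m}{k}\,\p{Y_1+\dots+Y_k\ge am}
\end{equation*}
for every $k\in\{0,1,\dots,m\}$. The plan is to optimise over $k$, handling separately the two branches in the definition of $g_L$.

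\textbf{Generic case.} In the ``otherwise'' branch, set $s^{\ast}=a/\phi(a)$, which lies in $(0,1)$ by Lemma~\ref{lem:s}(a), and take $k=\lfloor s^{\ast}m\rfloor$, so that $s:=k/m\to s^{\ast}$ and the target mean $am/k\to\phi(a)\in(a,1)$. Stirling gives $\binom{m}{k}\ge\Omega(m^{-1/2})(s^{s}(1-s)^{1-s})^{-m}$, while Lemma~\ref{lem:ylargedev}(b) gives $\p{Y_1+\dots+Y_k\ge am}\ge(f(\phi(a))-o(1))^{k}$. Combining,
\begin{equation*}
\p{\hat{Y}_1+\dots+\hat{Y}_m\ge am}\ \ge\ \Omega(m^{-1/2})\left(\frac{f(\phi(a))^{s^{\ast}}}{2\,(s^{\ast})^{s^{\ast}}(1-s^{\ast})^{1-s^{\ast}}}-o(1)\right)^{m}.
\end{equation*}
The crux of the proof is then the algebraic identity
\begin{equation*}
\frac{f(\phi)^{a/\phi}}{2\,(a/\phi)^{a/\phi}((\phi-a)/\phi)^{(\phi-a)/\phi}}\ =\ \frac{1}{2\,g_L(a)}\qquad(\phi=\phi(a)),
\end{equation*}
which, after simplifying the denominator as $2\,a^{a/\phi}(\phi-a)^{1-a/\phi}/\phi$, reduces exactly to Lemma~\ref{lem:s}(b) together with the definition of $g_L$. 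That $s^{\ast}=a/\phi(a)$ is indeed the minimiser of $s^{s}(1-s)^{1-s}/f(a/s)^{s}$ over $s\in[a,1]$ is checked by differentiating its logarithm using $f'(x)/f(x)=\log\bigl((1-x)/((1-p)(2-x))\bigr)$; the first-order condition collapses precisely to the defining equation $\Phi(a,\phi(a))=0$.

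\textbf{Special case.} If $p>1/2$ and $a<1-\tfrac{1}{2p}$, so that $2g_L(a)=1$, the simpler choice $k=\lfloor m/2\rfloor$ suffices: the target mean $am/k\to 2a<2-1/p$ falls into the regime of Lemma~\ref{lem:ylargedev}(c), giving $\p{Y_1+\dots+Y_k\ge am}\ge(1-o(1))^{k}$, and combined with $\binom{m}{\lfloor m/2\rfloor}\ge 2^{m}/(m+1)$ this yields $\p{\hat{Y}_1+\dots+\hat{Y}_m\ge am}\ge(1-o(1))^{m}/(m+1)$, as required.

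The main obstacle is the algebraic identification in the generic case of $s^{\ast}=a/\phi(a)$ as the minimiser and of the optimum value with $1/(2g_L(a))$; once Lemma~\ref{lem:s}(b) is invoked, the cancellation is immediate, but care is needed in tracking the exponents. The remaining details—absorbing the rounding in $\lfloor s^{\ast}m\rfloor$ and the polynomial prefactor $m^{-1/2}$ into the $o(1)$ term, and verifying that the two cases agree continuously at the boundary $a=1-1/(2p)$ (where $\phi(a)=2-1/p$, so that the generic formula also yields $g_L(a)=1/2$)—are routine.
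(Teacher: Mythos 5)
Your proposal is correct and takes essentially the same route as the paper's proof: the same binomial decomposition over the number of indices with $\hat{Y}_i=Y_i$, the same choice $k\approx am/\phi(a)$ in the generic case (resp.\ $k\approx m/2$ when $p>1/2$ and $a<1-\frac{1}{2p}$), and the same combination of Stirling's formula, Lemma~\ref{lem:ylargedev}(b)/(c), and the identity of Lemma~\ref{lem:s}(b) to recognise the rate as $1/(2g_L(a))$. The added discussion of $s^{*}$ being the optimal choice is not needed for a lower bound, but it is harmless.
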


\begin{lemma}
\label{lem:uniques}
(a) There exists a unique solution $p_0 \in (0,1/2)$ to
$$\log \left( \frac{1-p}{p} \right) = \frac{1-p}{1-2p} \:.$$
Also,
if $p\le p_0$ then
$\log \left( \frac{1-p}{p} \right) \ge \frac{1-p}{1-2p}$.

(b)
Given $p\in(0,1)$, there exists a unique solution $s_0\in(0,1)$ to
\begin{equation*}
(1-p)(2-s) = \exp(1/s) (1-s)\:.
\end{equation*}
Moreover, if $p>1/2$ then $s_0 >2-p^{-1}$,
and if $p_0<p\le 1/2$ then $s_0 > \frac{1-2p}{1-p}$.
\end{lemma}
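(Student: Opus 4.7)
For part (a), the plan is to study the function $h(p) := \log\frac{1-p}{p} - \frac{1-p}{1-2p}$ on the interval $(0,1/2)$. As $p\to 0^+$ the first term diverges to $+\infty$ while the second stays bounded, giving $h(0^+)=+\infty$; as $p\to 1/2^-$ the first term vanishes while the second diverges to $+\infty$, giving $h(1/2^-)=-\infty$. A direct computation yields $h'(p)=-\frac{1}{1-p}-\frac{1}{p}-\frac{1}{(1-2p)^2}$, where the last summand comes from the identity $-(1-2p)+2(1-p)=1$ in the numerator of $\frac{d}{dp}\frac{1-p}{1-2p}$. All three summands are strictly negative on $(0,1/2)$, so $h$ is strictly decreasing; continuity and the intermediate value theorem then produce a unique root $p_0\in(0,1/2)$, and monotonicity immediately yields $h(p)\ge 0$ precisely when $p\le p_0$, which is the stated comparison.

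For part (b), I would reformulate the equation $(1-p)(2-s)=\exp(1/s)(1-s)$ by taking logarithms: it is equivalent to $F(s):=\log\frac{(1-p)(2-s)}{1-s}-\frac{1}{s}=0$ on $(0,1)$. Differentiating term by term gives $F'(s)=\frac{-1}{2-s}+\frac{1}{1-s}+\frac{1}{s^2}=\frac{1}{(2-s)(1-s)}+\frac{1}{s^2}>0$, so $F$ is strictly increasing. Since $F(s)\to -\infty$ as $s\to 0^+$ (driven by $-1/s$) and $F(s)\to +\infty$ as $s\to 1^-$ (driven by $-\log(1-s)$), there is a unique root $s_0\in(0,1)$.

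For the two refined lower bounds on $s_0$, monotonicity of $F$ reduces each inequality to a sign check of $F$ at the claimed test value. Substituting $s=2-p^{-1}$ one computes $(1-p)(2-s)/(1-s)=1$, hence $F(2-p^{-1})=-p/(2p-1)$, which is strictly negative when $p>1/2$; by monotonicity this forces $s_0>2-p^{-1}$. Substituting $s=(1-2p)/(1-p)$ and simplifying via $1-s=p/(1-p)$ and $2-s=1/(1-p)$ gives $(1-p)(2-s)/(1-s)=(1-p)/p$ and $1/s=(1-p)/(1-2p)$, so $F\bigl((1-2p)/(1-p)\bigr)=\log\frac{1-p}{p}-\frac{1-p}{1-2p}=h(p)$, which by part (a) is strictly negative exactly when $p>p_0$; this forces $s_0>(1-2p)/(1-p)$ for $p_0<p\le 1/2$. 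The only mild obstacle is the bookkeeping that confirms the derivatives of $h$ and $F$ simplify to the clean forms above and that the test value $(1-2p)/(1-p)$ feeds $F$ back into exactly the function $h$ from part (a); this last coincidence is what tightly couples the two parts of the lemma, and no further technical difficulty is expected.
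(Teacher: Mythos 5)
Your proposal is correct and follows essentially the same route as the paper: you study the same auxiliary functions (the paper's $r(p)$ and $\mu(s)$, which are your $h$ and $F$), establish strict monotonicity together with the boundary limits to get a unique root in each case, and then evaluate at the test points $2-p^{-1}$ and $\frac{1-2p}{1-p}$, the latter reducing exactly to the function from part (a). No gaps beyond the same trivial edge case $p=1/2$ that the paper also glosses over (where the bound $s_0>\frac{1-2p}{1-p}=0$ holds automatically).
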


\begin{lemma}
\label{lem:main_lower}
Given $\eps>0$, a.a.s\ as $t\to\infty$ there exist two antipodal vertices $u,v$ of $T'_t$ with weights at least $c_L(p)(1 - \eps) t$. In particular, a.a.s.\ the weighted height of $T'_t$ is at least $c_L(p)(1 - \eps) t$.
\end{lemma}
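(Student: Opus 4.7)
The plan is to apply Lemma~\ref{lem_bd_lower} directly to $T'_\infty$ with a suitably chosen function $\gamma_L$ and constants $\alpha^*,\rho^*$. First I observe that if $v$ is a descendant of $u$ lying $m$ levels deeper in $T'_\infty$, then along the unique path from $u$ to $v$ each edge independently carries weight $Y=1-\geo(p)$ with probability $1/2$ (when the random index $i\in\{1,2\}$ happens to select the child on the path) and weight $0$ otherwise. Hence $W_v-W_u$ has the same distribution as $\hat{Y}_1+\dots+\hat{Y}_m$, so Lemma~\ref{lem:large_deviation_lower} lets me take
$$\gamma_L(a)=\log\bigl(2g_L(a)\bigr),$$
and then \eq{lower_bound_condition} holds.

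Second, with $s=s(p)$ defined by \eq{s_def} and $\beta=p(1-p)(2-s)^2/(1-s)$, I propose the choice $\alpha^*=\phi^{-1}(s)$ and $\rho^*=1/(\beta+1)$. Rearranging \eq{phi} gives $\alpha^*=s\beta/(\beta+1)$, so that $\alpha^*/\rho^*=s\beta=sp(2-s)\cdot(1-p)(2-s)/(1-s)=sp(2-s)\exp(1/s)=c_L(p)$, where I used the identity $(1-p)(2-s)/(1-s)=\exp(1/s)$ coming from \eq{s_def}. It remains to check the constraint $\gamma_L(\alpha^*)+\Upsilon(\rho^*)=\log 2$, i.e.\ $\Upsilon(\rho^*)=-\log g_L(\alpha^*)$. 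Using Lemma~\ref{lem:s}(b) together with the same identity, I expect $-\log g_L(\alpha^*)$ to simplify to $\log(\beta+1)-\beta/(\beta+1)$, which is exactly $\Upsilon\bigl(1/(\beta+1)\bigr)$.

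A preliminary sub-step is to confirm that the generic branch in the definition of $g_L$ is in force (rather than the flat $1/2$ branch valid when $p>1/2$ and $a<1-\tfrac{1}{2p}$). By Lemma~\ref{lem:uniques}(b), when $p>1/2$ we have $s(p)>2-1/p$, and since $\phi\bigl(1-\tfrac{1}{2p}\bigr)=2-1/p$ with $\phi^{-1}$ increasing by Lemma~\ref{lem:s}(d), this forces $\alpha^*>1-\tfrac{1}{2p}$, placing us in the generic case. The main obstacle is the algebraic verification of the constraint displayed above: the precise cancellation between the implicit equation defining $\phi$, the identity in Lemma~\ref{lem:s}(b), and the equation \eq{s_def} is the heart of the argument, though I expect no deeper difficulty once the parametrization by $\beta$ is in hand.

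Finally, Lemma~\ref{lem_bd_lower} yields antipodal vertices $u,v$ of $T'_t$ with weights at least $c_L(p)(1-\eps)t$. Since antipodal vertices are by definition distinct from the root, $\wh(T'_t)\ge \max\{W_u,W_v\}\ge c_L(p)(1-\eps)t$, which gives the ``in particular'' statement.
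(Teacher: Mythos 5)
Your proposal is correct and follows essentially the same route as the paper: it invokes Lemma~\ref{lem:large_deviation_lower} to take $\gamma_L(a)=\log(2g_L(a))$ in Lemma~\ref{lem_bd_lower}, chooses $\alpha^*=\phi^{-1}(s)$, and your $\rho^*=1/(\beta+1)$ coincides with the paper's $\rho=1-\alpha^*/s$ via $\Phi(\alpha^*,s)=0$, with the same branch check that $\alpha^*>1-\tfrac{1}{2p}$ when $p>1/2$. The algebraic cancellation you flag as the remaining step does go through exactly as you anticipate (using $\phi(\alpha^*)=s$ and $\log\bigl(\tfrac{(1-p)(2-s)}{1-s}\bigr)=1/s$; Lemma~\ref{lem:s}(b) is not actually needed), so the argument is complete.
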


\begin{proof}
By Lemma~\ref{lem:uniques}(b), there is a unique solution $s\in(0,1)$ to
\begin{equation*}
(1-p)(2-s) = \exp(1/s) (1-s)\:.
\end{equation*}
By the definition of $c_L$,
\begin{equation*}
c_L = c_L(p) = \exp(1/s) s (2-s) p \:.
\end{equation*}
Lemma~\ref{lem:large_deviation_lower} implies that the assumption~\eq{lower_bound_condition} of Lemma~\ref{lem_bd_lower} holds for $\gamma_L(a) = \log (2g_L(a))$. Let $a = \phi^{-1}(s)$ and let $\rho = 1 - \frac{a}{s}$. Since $s\in(0,1)$ we have $0 < a < s < 1$ by Lemma~\ref{lem:s}(d), and thus $\rho \in (0,1)$ as well.
Moreover, since $\Phi(a,s)=0$, we have $c_L = a / \rho$.

We now show that
$g_L(a) = \frac{s-a}{s} \exp (a/s)$.
This is clear if $p\le 1/2$, so assume that $p>1/2$.
It is easy to verify that $\Phi(1-\frac{1}{2p},2-\frac{1}{p}) = 0$. 
Since $p > 1/2$, by Lemma~\ref{lem:uniques}(b) we have $s > 2 - \frac{1}{p}$. Since $\phi^{-1}$ is increasing,
we have $a = \phi^{-1}(s) \ge 1- \frac{1}{2p}$. 

From $g_L(a) = \frac{s-a}{s} \exp (a/s)$ we get
$$\log (2 g_L(a)) + \rho -1 - \log (\rho) = \log 2 \:,$$
and Lemma~\ref{lem_bd_lower} completes the proof.
\end{proof}

The lower bound in Theorem~\ref{thm:main} follows easily from Lemmas~\ref{lem:main_lower} and~\ref{lem:equal_logs}.

\begin{proof}[Proof of the lower bound in Theorem~\ref{thm:diameter}.]
Fix $\eps>0$. Let us define the \emph{semi-diameter} of a tree as the maximum weighted distance between any two antipodal vertices. Clearly, semi-diameter is a lower bound for the diameter, so we just need to show a.a.s.\ as $n\to\infty$ the semi-diameter of the random-surfer model with $n$ vertices is at least $(2 c_L(p) - \eps) \log n$.
By Lemma~\ref{lem:main_lower}, a.a.s\ as $t\to\infty$ the semi-diameter of $T'_t$ is at least $(2 c_L(p) - \eps) t$. {Using} an argument similar to the proof of Lemma~\ref{lem:equal_logs} we may conclude that a.a.s.\ as $n\to\infty$ the semi-diameter of the third model (of Section~\ref{sec:transform}) with $2n-1$ vertices is at least $(2 c_L(p)-\eps)\log n$. 
It is easy to observe that this statement is also true for the random-surfer model with $n$ vertices, and the proof is complete.
\end{proof}

\section{Upper bounds for the random-surfer tree model}
\label{sec:upper}
In this section we prove the upper bounds in Theorems~\ref{thm:main} and~\ref{thm:diameter}.

\begin{lemma}
\label{lem:bd_upper}
Let $\gamma_U : [0,1] \to [0,\infty)$ be a continuous function such that for every fixed $a\in [0,1]$ and every vertex $v$ of $T_{\infty}$ at depth $m$,
\begin{equation}
\label{upper_bound_condition}
\p{\sum_{e\in \pi(v) } W_e > a m } \le \exp (-m \gamma_U (a) + o(m))
\end{equation}
as $m\to \infty$.
Define
\begin{equation}
\label{theta_def}
\theta = \sup \left \{ \frac{a}{\rho} : \gamma_U(a) + \Upsilon(\rho) = \log 2 :
a\in[0,1],\rho\in(0,\infty) \right \} \:.
\end{equation}
Then for every fixed $\eps>0$,
$$\p{\wh(T_t) > \theta(1 + \eps) t} \rightarrow 0$$
as $t\to \infty$.
\end{lemma}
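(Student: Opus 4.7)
The plan is to apply the first-moment method. Let $N_t$ denote the number of vertices $v$ of $T_t$ with $W_v > \theta(1+\eps)t$; by Markov's inequality, $\p{\wh(T_t) > \theta(1+\eps)t} \le \e{N_t}$, so it suffices to show $\e{N_t} \to 0$. Since the birth times depend only on the exponentials $\{E_e\}$, whereas the weights depend only on the $\{W_e\}$, the families $\{B_v\}$ and $\{W_v\}$ are independent. Using that $T_\infty$ has exactly $2^m$ (identically distributed) vertices at depth $m$, I write
\begin{equation*}
\e{N_t} = \sum_{m\ge 0} 2^m \, \p{B_v \le t}\, \p{W_v > \theta(1+\eps)t},
\end{equation*}
where $v$ denotes any fixed vertex of depth $m$. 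Lemma~\ref{lem_gamma} gives $\p{B_v \le t} \le \exp(-m\,\Upsilon(t/m))$, and \eq{upper_bound_condition} gives $\p{W_v > am} \le \exp(-m\,\gamma_U(a) + o(m))$ for each fixed $a \in [0,1]$.

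Write $\theta' = \theta(1+\eps)$. A quick induction on depth (using $W_{v_i} = \max\{W_v + Y,\,1\}$ when $v_i$ is the non-zero child, with $Y \le 1$) shows that $W_v \le m$ for every $v$ at depth $m$, so terms with $m < \theta' t$ contribute zero. For the tail $m > Ct$ with $C$ a large constant, $\Upsilon(t/m) \sim \log(m/t)$ dominates $\log 2$ and the tail is a rapidly decaying geometric series, contributing $o(1)$. In the critical range $m \in [\theta' t, Ct]$, parameterized by $\rho = t/m \in [1/C, 1/\theta']$ and $a = \theta'\rho \in [\theta'/C, 1]$, a single summand is at most
\begin{equation*}
\exp\bigl(m(\log 2 - \Upsilon(\rho) - \gamma_U(a) + o(1))\bigr).
\end{equation*}

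The key claim is that $\gamma_U(a) + \Upsilon(\rho) > \log 2$ for every admissible $(a,\rho)$ in this range. Equality would contradict the supremum definition \eq{theta_def} of $\theta$ directly (as $a/\rho = \theta' > \theta$), while a strict inequality $\gamma_U(a) + \Upsilon(\rho) < \log 2$ would, using that $\Upsilon$ decreases continuously from $+\infty$ to $0$ on $(0,1]$, allow one to choose $\rho' \in (0,\rho)$ with $\gamma_U(a) + \Upsilon(\rho') = \log 2$, and then $a/\rho' > a/\rho = \theta'$ again contradicts the supremum. Continuity of $\gamma_U$ and $\Upsilon$ together with compactness of the $(a,\rho)$-range then yield a constant $\delta > 0$ with $\gamma_U(a) + \Upsilon(\rho) \ge \log 2 + \delta$ uniformly.

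Since the $o(m)$ in \eq{upper_bound_condition} is only guaranteed pointwise in $a$, the final step is a discretization. I fix a finite grid $0 = a_0 < a_1 < \dots < a_K = 1$ whose mesh is small enough (by continuity of $\gamma_U$) that replacing $a$ by the nearest grid point below loses at most $\delta/4$ in $\gamma_U$. The hypothesis then applies simultaneously at the $K+1$ grid points with a single $o(m)$, so the per-summand exponent becomes at most $-m\delta/2$ for all sufficiently large $t$. Summing over the $\Theta(t)$ values of $m$ in the critical range gives $\e{N_t} = O(t)\exp(-\Omega(t)) = o(1)$, completing the proof. The main obstacle is this uniformization step combined with verifying the strict inequality in the key claim; both lean essentially on continuity of $\gamma_U$, which is precisely why that continuity is built into the hypothesis of the lemma.
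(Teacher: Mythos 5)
Your proof is correct and follows essentially the same route as the paper's: a first-moment/union bound over depths, independence of $B_v$ and $W_v$, Lemma~\ref{lem_gamma} together with hypothesis \eq{upper_bound_condition}, and a uniform gap $\delta>0$ extracted from the definition of $\theta$ in \eq{theta_def} via continuity of $\gamma_U$ and $\Upsilon$ (the paper packages this as a preliminary Claim proved by a subsequence argument, you obtain it by a pointwise contradiction plus compactness). Your explicit grid discretization in $a$ and the separate treatment of depths $m>Ct$ only make explicit uniformity details that the paper's proof handles implicitly, so there is no substantive difference in approach.
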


The proof is similar to the proof of~\cite[Lemma~3]{treeheight}, in which the assumption \eq{upper_bound_condition} is not needed. In fact, in the model studied in~\cite{treeheight}, the weights $\{W_e : e\in \pi(v) \}$ are mutually independent, and the authors use Cram\'{e}r's Theorem to obtain a large deviation inequality for $\sum_{e\in \pi(v)} W_e$, which is similar to \eq{upper_bound_condition}.

\begin{proof}
We first prove a claim.
\begin{claim}
For every $\eps>0$ there exists $\delta>0$ such that for all $\rho \in \left(0,\frac{1}{\theta(1+\eps)}\right]$,
$$ \Upsilon(\rho) + \gamma_U (\theta(1+\eps)\rho) - \log 2 \ge \delta \:.$$
\end{claim}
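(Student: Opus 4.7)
The plan is to set $h(\rho) := \Upsilon(\rho) + \gamma_U(\theta(1+\eps)\rho) - \log 2$ on the interval $I := (0, 1/(\theta(1+\eps))]$ and show that $h$ is continuous, never zero, and tends to $+\infty$ at the left endpoint, so that it is uniformly bounded below by a positive $\delta$. The range of $\rho$ has been chosen precisely so that $a := \theta(1+\eps)\rho$ lies in $[0,1]$, which is where $\gamma_U$ is defined and continuous; together with the obvious continuity of $\Upsilon$ on $(0,\infty)$, this gives continuity of $h$ on $I$.

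The key step is to use the definition of $\theta$ as a supremum to rule out $h(\rho_0)=0$ for any $\rho_0\in I$. Indeed, if $h(\rho_0)=0$ then the pair $(a,\rho_0)$ with $a=\theta(1+\eps)\rho_0\in[0,1]$ satisfies $\gamma_U(a)+\Upsilon(\rho_0)=\log 2$, and one computes $a/\rho_0 = \theta(1+\eps) > \theta$, contradicting the definition of $\theta$ in \eqref{theta_def}. Thus $h$ has no zero on $I$.

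To determine the sign and to control $h$ near $0$, I would examine the limit as $\rho\to 0^+$: since $\Upsilon(\rho) = \rho - 1 - \log\rho \to +\infty$ while $\gamma_U(\theta(1+\eps)\rho)\to \gamma_U(0)$ is a finite constant (here using that $\gamma_U$ is continuous on $[0,1]$), we have $h(\rho)\to+\infty$. Combined with continuity and $h\not=0$ on $I$, the intermediate value theorem forces $h>0$ everywhere on $I$.

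Finally, to extract a uniform lower bound, pick any $\rho^*\in I$ small enough that $h(\rho)\ge 1$ for all $\rho\in(0,\rho^*]$; on the remaining closed interval $[\rho^*, 1/(\theta(1+\eps))]$, which is compact, the continuous positive function $h$ attains its positive minimum $\delta_1 > 0$, and one may take $\delta := \min\{1, \delta_1\}$. The only subtlety is ensuring that $\gamma_U(0)$ is finite, which is automatic from the assumed continuity of $\gamma_U$ on the closed interval $[0,1]$; no deeper obstacle arises.
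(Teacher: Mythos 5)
Your proof is correct, but it reaches the conclusion by a different route than the paper. The paper argues by contradiction with a sequence $(\rho_i)$ violating the bound, extracts a convergent subsequence with limit $\rho^*>0$ (using $\Upsilon(x)\to\infty$ as $x\to 0^+$ and $\gamma_U\ge 0$), and then faces the difficulty that the limit only yields $\Upsilon(\rho^*)+\gamma_U(\theta(1+\eps)\rho^*)-\log 2\le 0$, whereas the definition of $\theta$ in \eq{theta_def} is a supremum over pairs satisfying the \emph{exact} equality; it resolves this by the adjustment step of choosing $\rho'\le\rho^*$ with $\Upsilon(\rho')+\gamma_U(\theta(1+\eps)\rho^*)=\log 2$ (using that $\Upsilon$ is continuous, decreasing, and onto $[0,\infty)$), so that $\theta(1+\eps)\rho^*/\rho'\ge\theta(1+\eps)>\theta$ gives the contradiction. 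You avoid that adjustment entirely: you only need to rule out exact zeros of $h(\rho)=\Upsilon(\rho)+\gamma_U(\theta(1+\eps)\rho)-\log 2$, where the eligible pair $(\theta(1+\eps)\rho_0,\rho_0)$ with ratio $\theta(1+\eps)>\theta$ contradicts the supremum directly, and then you let the topology do the rest: $h$ is continuous and nowhere zero on the interval, blows up at $0^+$, hence has constant positive sign, and the uniform $\delta$ comes from splitting off a neighbourhood of $0$ where $h\ge 1$ and applying the extreme value theorem on the remaining compact interval. Both arguments use the same inputs (continuity and non-negativity of $\gamma_U$, blow-up of $\Upsilon$ at $0$, and the exact-equality nature of the supremum defining $\theta$); yours is somewhat more elementary in that it does not need the surjectivity/monotonicity of $\Upsilon$ or a sequential-compactness extraction, while the paper's sequence argument establishes the quantitative claim directly from its negation without discussing the sign structure of $h$. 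One implicit assumption common to both (and to the statement itself) is that $0<\theta<\infty$, so that $\theta(1+\eps)>\theta$ is a strict inequality; this holds in the application.
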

\begin{claimproof}
Assume that this is not the case for some $\eps>0$.
This means there exists a sequence $(\rho_i)_{i=1}^{\infty}$ such that for all $i=1,2,\dots$,
$$ \Upsilon(\rho_i) + \gamma_U (\theta(1+\eps)\rho_i) - \log 2 < 1/i \:.$$
Then $(\rho_i)_{i=1}^{\infty}$ has a convergent subsequence. Let $\rho^* \in\left[0,\frac{1}{\theta(1+\eps)}\right]$ be the limit. It cannot be the case that
$\rho^* = 0$ since $\Upsilon(x) \to \infty$ as $x\to 0$, and $\gamma_U$ is non-negative. By continuity of $\Upsilon$ and $\gamma_U$ we have
$$ \Upsilon(\rho^*) + \gamma_U (\theta(1+\eps)\rho^*) - \log 2 \le 0 \:.$$
Since $\Upsilon$ is continuous, decreasing, and attains all values in $[0,\infty)$, we can choose $\rho' \le \rho^*$ so that
$$ \Upsilon(\rho') + \gamma_U (\theta(1+\eps)\rho^*) - \log 2 = 0 \:,$$
But then
$$\frac{\theta(1+\eps)\rho^*}{\rho'} \ge \theta(1+\eps) > \theta\:,$$
contradicting the definition of $\theta$ in \eq{theta_def}.
\end{claimproof}

Fix $\eps > 0$ and let $A_k$ be the event that there exists a vertex at depth $k$ of $T_t$ with weight larger than $\theta (1+ \eps)t$. By the union bound,
$$\p{\wh(T_t) > \theta(1 + \eps) t} \le \sum_{k=1}^{\infty} \p{A_k}
=\sum_{k > \theta (1+ \eps)t } \p{A_k}\:,$$
as the weights of all edges are at most 1.

Let $k > \theta (1+ \eps)t$. A vertex $v$ at depth $k$ of $T_{\infty}$  is included in $T_t$ and has weight larger than $\theta(1 + \eps)t$ if and only if $B_v \le t$ and $W_v > \theta(1 + \eps)t$. These two events are independent by the definition of $T_{\infty}$.
The random variable $B_v$ is distributed as a sum of $k$ independent exponential random variables with mean 1, and so
\begin{align*}
\p { B_v \le t, W_v > \theta (1+ \eps)t } & \le
\exp \left[ \left(-\Upsilon(t/k) - \gamma_U \left(\frac{\theta (1+ \eps)t}{k}\right) + o(1)\right)k \right ] \\
& \le \exp \left[(-\log 2 - \delta + o(1))k \right ] \:,
\end{align*}
where we have used Lemma~\ref{lem_gamma} and \eq{upper_bound_condition} for the first inequality, and $\delta>0$ is the constant provided by the claim.
{Since} there are $2^k$ vertices at depth $k$ of $T_\infty$, by the union bound
$$\p{A_k} \le 2^k \exp \left[ (-\log 2 - \delta + o(1))k \right ] \le
\exp \left[ (-\delta + o(1))k \right ] \:.$$
For $t$ large enough the $o(1)$ term is less than $\delta/2$, and thus
\begin{align*}
\p{\wh(T_t) > \theta(1 + \eps) t} & \le \sum_{k > \theta (1+ \eps)t} \p{A_k} \le \sum_{k> \theta (1+ \eps)t} \exp \left[ (-\delta/2) k \right ]
\\
& = O \left( \exp \left[ (-\delta/2) \theta (1+ \eps)t \right ] \right) =o(1)\:.\qedhere
\end{align*}
\end{proof}

Let $Y_1,Y_2, \dots$ be i.i.d.\ with $Y_i = 1 - \geo(p)$, and define random variables $X_1,X_2,\dots$ as follows:
$$X_1 = \max \{Y_1, 1\} \:,$$
and for $i\ge 1$,
$$X_{i+1} = \max \{ Y_{i+1}, 1 - (X_1+\dots+X_i)\} \:.$$
Define the function $h:[0,1]\to \mathbb{R}$ as
\begin{equation}
\label{h_def}
h(x) = \begin{cases}
1 & \mathrm{\ if\ } p \ge \frac{1}{2} \mathrm{\ and\ } 0 \le x \le 2 - \frac{1}{p} \\
\left(\frac{p}{1-p}\right)^x & \mathrm{\ if\ } p < \frac{1}{2} \mathrm{\ and\ } 0 \le x \le \frac{1-2p}{1-p} \\
(2-x)^{2-x} p (1-p)^{1-x} (1-x)^{x - 1} & \mathrm{otherwise} \:.
\end{cases}
\end{equation}
Note that in the third case we have $h(x)=f(x)$,
where $f$ is defined in \eq{f_def}.
It is easy to see that $h$ is continuous.
The proof of the following lemma can be found in the appendix.

\begin{lemma}
\label{lem:x_large_dev}
There exists an absolute constant $C$ such that
for every $a\in[0,1]$ and every positive integer $m$ we have
$\p{X_1+\dots+X_m > am} \le C m^2 h(a)^m$.
\end{lemma}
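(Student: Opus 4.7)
My plan is to reduce this tail bound for the non-Markovian sequence $(X_i)$ to a maximal inequality for a standard random walk. Write $S_i := X_1 + \dots + X_i$ and set $R_i := S_i - 1$; since $X_1 = 1$ we have $R_1 = 0$, and the recursion $X_{i+1} = \max\{Y_{i+1}, 1 - S_i\}$ transforms to the Lindley recursion $R_{i+1} = \max\{R_i + Y_{i+1}, 0\}$ for $i \ge 1$. By the classical Lindley identity, $R_m$ has the same distribution as $\max_{0 \le k \le m-1} T_k$, where $T_k = Y'_1 + \dots + Y'_k$ is the partial-sum process of i.i.d.\ copies of $Y = 1 - \geo(p)$. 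Hence
\[
\p{X_1 + \dots + X_m > am} = \p{\max_{0 \le k \le m-1} T_k > am - 1},
\]
and it suffices to bound the right-hand side.

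If $am \le 1$ the event is trivial and $h(a)^m$ is bounded below by a positive constant depending on $p$, so the claim holds for $C$ large enough. For $am > 1$, I would apply Doob's maximal inequality to the non-negative process $e^{\lambda T_k}$, choosing $\lambda$ according to the region. The moment generating function $M_Y(\lambda) = pe^{\lambda}/(1-(1-p)e^{-\lambda})$ is convex and equals $1$ at both $\lambda = 0$ and $\lambda = \log((1-p)/p)$, so $M_Y \le 1$ exactly on $[0, \log((1-p)/p)]$ when $p < 1/2$ and only at $\lambda = 0$ when $p \ge 1/2$. The Chernoff-optimal exponent is $\lambda^*(a) := \log((2-a)(1-p)/(1-a))$, and a direct computation gives the key identities $e^{-\lambda^*(a) a} M_Y(\lambda^*(a)) = f(a)$ and $e^{\lambda^*(a)}/M_Y(\lambda^*(a)) = 1/((2-a)p)$.

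The argument then splits into the three regions defining $h$. In the region $(2-a)^2 p(1-p) > 1-a$ (equivalently $p \ge 1/2$ with $a > 2 - 1/p$, or $p < 1/2$ with $a > (1-2p)/(1-p)$), we have $M_Y(\lambda^*(a)) > 1$, so $e^{\lambda^*(a) T_k}$ is a non-negative submartingale, and Doob's submartingale inequality yields
\[
\p{\max_{0 \le k \le m-1} T_k > am-1} \le M_Y(\lambda^*(a))^{m-1} e^{-\lambda^*(a)(am-1)} = \frac{f(a)^m}{(2-a)p},
\]
matching $h(a) = f(a)$. When $p < 1/2$ and $a \le (1-2p)/(1-p)$, I take $\lambda = \log((1-p)/p)$, for which $M_Y(\lambda) = 1$ and $e^{\lambda T_k}$ is a non-negative martingale; Doob's inequality then gives $\p{\max_k T_k > am-1} \le e^{-\lambda(am-1)} = \frac{1-p}{p}\,(p/(1-p))^{am}$, matching $h(a) = (p/(1-p))^a$. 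The remaining case $p\ge 1/2$ with $a \le 2 - 1/p$ has $h(a) = 1$, covered by the trivial bound $\p{\cdot} \le 1$.

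The main obstacle is establishing the Lindley identity for the non-Markovian sequence $(X_i)$ (whose increments depend on the past through the scalar $1 - S_{i-1}$) and then correctly classifying, in each region, whether the exponential process is a submartingale or a supermartingale so that the appropriate form of Doob's inequality applies. Since the prefactor $1/((2-a)p)$ is bounded above by $1/p$ uniformly in $a \in [0,1]$, this approach in fact delivers the stronger bound $\p{X_1 + \dots + X_m > am} \le C(p)\, h(a)^m$, which a fortiori implies the claimed $C m^2 h(a)^m$.
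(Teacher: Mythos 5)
Your argument is correct, but it takes a genuinely different route from the paper's. The paper proceeds combinatorially: if $X_1+\dots+X_m>am$, then (looking at the last step where the barrier term $1-S_i$ was active) some suffix $Y_{m-k+1},\dots,Y_m$ has large sum; a union bound over the $O(m)$ possible lengths $k$ reduces matters to the pointwise estimate of Lemma~\ref{lem:ylargedev}(a), and the resulting supremum $\sup\{f(a/r)^r:r\in[a,1]\}$ is evaluated by an explicit calculus case analysis of the quadratic $p(1-p)(2r-a)^2-r(r-a)$, which is where the three cases of $h$ come from. You instead note $S_{i+1}=\max\{S_i+Y_{i+1},1\}$ with $S_1=1$, invoke the Lindley/reflection identity $S_m-1\stackrel{d}{=}\max_{0\le k\le m-1}T_k$, and control the maximum by Doob's inequality for $e^{\lambda T_k}$; the tilt $\lambda^*(a)$ performs the optimization automatically, and the trichotomy of $h$ emerges from whether $M_Y(\lambda^*(a))\ge 1$ (your identities $e^{-\lambda^*a}M_Y(\lambda^*)=f(a)$, $e^{\lambda^*}/M_Y(\lambda^*)=1/((2-a)p)$, and the equivalence of $M_Y(\lambda^*)>1$ with the third case of $h$ all check out). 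Your route avoids the union bound, the $m^2$ factor, and the optimization over $r$; the paper's route is more elementary and reuses Lemma~\ref{lem:ylargedev}.

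Two small caveats. Treat $a=1$ separately ($\p{X_1+\dots+X_m>m}=0$ since every $X_i\le 1$), as $\lambda^*(1)$ is infinite. More importantly, your constant $\max\{1/p,(1-p)/p\}$ depends on $p$, so your bound does not literally yield a $p$-independent ``absolute'' constant; but no proof can, since for $m=1$ and $a$ near $1$ we have $\p{X_1>a}=1$ while $h(a)\to 0$ as $p\to 0$. A constant depending only on $p$ is what the paper's applications actually require (only the exponential rate matters, with $p$ fixed), and under that reading your bound is even stronger than $Cm^2h(a)^m$.
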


Next we define random variables $\hat{X}_1,\hat{X}_2,\dots$ as follows: for every $i=1,2,\dots$ we flip an independent unbiased coin, if it comes up heads, then $\hat{X}_i=X_i$,
otherwise $\hat{X}_i=0$.

We define the function $g_U : [0,1] \to \mathbb{R}$ as
\begin{equation}
\label{gu}
g_U(a) =
\begin{cases}
1/2 & \mathrm{if\ }p\ge 1/2 \mathrm{\ and\ } 0 \le a \le 1 - 1/2p \\
(\frac{1-p}{p})^a/2 & \mathrm{if\ } p < 1/2 \mathrm{\ and\ } 0 \le a \le \frac{1-2p}{2-2p} \\
\frac{1}{p} & \mathrm{if\ } a=1 \\
\frac{\phi(a)-a}{\phi(a)} \left( \frac{(1-p) (2-\phi(a))}{1-\phi(a)} \right)^a & \mathrm{otherwise} \:,
\end{cases}
\end{equation}
where $\phi$ is defined by \eq{phi}.
Note that by Lemma~\ref{lem:s}(a),
we have $ 0 < a < \phi(a) < 1$ for $a\in(0,1)$,
so $g_U$ is well defined for all $a\in[0,1]$.
The proof of the following lemma can be found in the appendix.

\begin{lemma}
\label{thm:large_deviation_upper}
(a) We have the following large deviation inequality for every $a\in[0,1]$ and every {positive integer} $m$, where $C'$ is an absolute constant:
\begin{equation}
\label{large_deviation_upper}
\p{\hat{X}_1+\dots+\hat{X}_m > am} \le
C' m^3 (2g_U(a))^{-m}\:.
\end{equation}

(b) The function $g_U$ is continuously differentiable on $(0,1)$ and
\begin{equation}
\label{gUder}
g'_U(a) =
\begin{cases}
0 &  \mathrm{if\ }p\ge 1/2 \mathrm{\ and\ } 0 < a \le 1 - 1/2p \\
\log (\frac{1-p}{p}) g_U(a) &  \mathrm{if\ }p < 1/2 \mathrm{\ and\ } 0 < a \le \frac{1-2p}{2-2p} \\
\log \left(\frac{(1-p)(2-\phi(a))}{1-\phi(a)}\right) g_U(a) & \mathrm{otherwise} \:.
\end{cases}
\end{equation}

(c)
The function $\log g_U(a)$ is increasing and convex. It is strictly increasing when $g_U(a)>1/2$.
\end{lemma}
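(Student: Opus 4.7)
The plan is to prove the three parts in order, using Lemma~\ref{lem:x_large_dev} as the main engine for (a), then deriving (b) by direct differentiation and (c) from (b).

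For part (a), the strategy is a conditional Chernoff bound. Let $K=\sum_{i=1}^m\mathbf{1}[\hat{X}_i=X_i]\sim\mathrm{Bin}(m,1/2)$, and decompose
\begin{equation*}
\p{\hat{X}_1+\cdots+\hat{X}_m>am}=\sum_{k=0}^{m}\binom{m}{k}2^{-m}\,\p{\hat{X}_1+\cdots+\hat{X}_m>am\mid K=k}.
\end{equation*}
The subtle step is reducing the conditional subset-sum $\sum_{i:\epsilon_i=1}X_i$ to $X_1+\cdots+X_k$; this should follow from a coupling argument exploiting the reflected-walk identity $S_i=\max(S_{i-1}+Y_i,1)$. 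Once reduced, Lemma~\ref{lem:x_large_dev} contributes the factor $Ck^2 h(am/k)^k$. Applying Stirling to $\binom{m}{k}$ and writing $\lambda=k/m$, the exponential rate becomes
\begin{equation*}
-\log 2-\lambda\log\lambda-(1-\lambda)\log(1-\lambda)+\lambda\log h(a/\lambda).
\end{equation*}
Maximizing over $\lambda\in[a,1]$, the first-order condition in the regime where $h=f$ turns out, after a natural change of variable, to be exactly $\Phi(a,s)=0$ from~\eqref{phi}; substituting back recovers the exponent $-\log(2g_U(a))$ in the third case of~\eqref{gu}, while the first two cases arise when the unconstrained optimum lies outside the $h=f$ regime so the extremum is pinned at a boundary. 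The polynomial factor $C'm^3$ absorbs $Ck^2$ and the summation over $k$.

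For part (b), the first two pieces of $g_U$ are elementary, so I focus on the third. Writing
\begin{equation*}
\log g_U(a)=\log(\phi(a)-a)-\log\phi(a)+a\log\tfrac{(1-p)(2-\phi(a))}{1-\phi(a)},
\end{equation*}
I would differentiate with respect to $a$ by the chain rule and eliminate $\phi'(a)$ using implicit differentiation of $\Phi(a,\phi(a))=0$ (differentiability guaranteed by Lemma~\ref{lem:s}(c)). A short calculation shows that all $\phi'(a)$ terms cancel, leaving the claimed formula $g_U'(a)/g_U(a)=\log((1-p)(2-\phi(a))/(1-\phi(a)))$. Continuity of $g_U'$ at the two boundaries is checked by substitution: at $a=1-1/(2p)$ (for $p\ge1/2$), $\phi=2-1/p$ so the log equals $0$; and at $a=(1-2p)/(2-2p)$ (for $p<1/2$), $\phi=(1-2p)/(1-p)$ so the log equals $\log((1-p)/p)$. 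In each case the one-sided derivative matches the adjacent piece.

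For part (c), everything reduces to reading off $(\log g_U)'$ from part (b): on the first piece it is $0$, on the second it is the positive constant $\log((1-p)/p)$, and on the third it is $\log((1-p)(2-\phi(a))/(1-\phi(a)))$, which is nonnegative precisely when $\phi(a)\ge2-1/p$---automatic throughout the third piece by its starting boundary value and monotonicity of $\phi$ from Lemma~\ref{lem:s}(c). Since $(2-s)/(1-s)$ is increasing in $s\in(0,1)$ and $\phi$ is increasing, $(\log g_U)'$ is nondecreasing on the third piece; combined with constancy on the first two pieces and the boundary matching from part (b), $(\log g_U)'$ is globally nondecreasing, giving convexity of $\log g_U$. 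Strict monotonicity where $g_U(a)>1/2$ follows because $g_U=1/2$ exactly on the first (flat) piece, and $(\log g_U)'>0$ strictly off it. The main obstacle I anticipate is the subset-sum-to-prefix-sum reduction in part (a): the $X_i$'s are coupled through the reflected running sum and not independent, so the Chernoff bound requires a careful coupling (or a direct MGF manipulation exploiting the reflection structure). The rest of part (a) is a calculus optimization whose Lagrange condition providentially matches the defining equation $\Phi(a,s)=0$ of $\phi$, tying everything together.
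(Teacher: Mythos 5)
Your proposal is correct and follows essentially the same route as the paper: for (a) the paper makes the identical binomial decomposition over the number of nonzero $\hat{X}_i$'s, applies Lemma~\ref{lem:x_large_dev} and Stirling, and reduces to an optimization (in the variable $\zeta=a/\lambda$) whose stationarity condition in the $h=f$ regime is exactly $\Phi(a,\zeta)=0$, so the optimum is $\zeta=\phi(a)$ and the value is $g_U(a)$; for (b) your implicit-differentiation cancellation of $\phi'(a)$ is precisely the paper's envelope-theorem step (using $\partial\psi/\partial\zeta=0$ at $\zeta=\phi(a)$), and (c) is read off from (b) just as in the paper. The only imprecision is your description of how the first two cases of $g_U$ arise in (a): there the extremum is not pinned at a regime boundary but is the interior stationary point $\lambda=1/2$ (equivalently $\zeta=2a$) of the piece where $h$ is not $f$, which is what the paper's case analysis establishes.
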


\begin{lemma}
\label{lem:nu}
Let $\omega > 0$ and let $\tau(x) : [0,\omega] \rightarrow \mathbb{R}$ be a positive function
that is differentiable on $(0,\omega)$ and satisfies
\begin{equation}
\label{tau}
\alpha(x) + \chi(\tau(x)) = 0 \qquad \forall\:x\in[0,\omega]
\end{equation}
for convex functions $\alpha,\chi$,
with $\alpha$ increasing and $\chi$ decreasing.
Assume there exists $x^*\in(0,\omega)$ such that
$\tau'(x^*) = \tau(x^*) / x^*$.
Then we have
\begin{equation}
\label{x*y}
\frac{x^*}{\tau(x^*)} \ge \frac{y}{\tau(y)}
\end{equation}
for all $y \in [0,\omega]$.
\end{lemma}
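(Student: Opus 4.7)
The plan is to show that the single-variable function $q(x) := x/\tau(x)$, which is well-defined on $[0,\omega]$ since $\tau>0$, is unimodal on $[0,\omega]$, nondecreasing on $[0,x^*]$ and nonincreasing on $[x^*,\omega]$. Since
\[
q'(x) = \frac{\tau(x)-x\tau'(x)}{\tau(x)^2},
\]
the hypothesis $\tau'(x^*)=\tau(x^*)/x^*$ is exactly $q'(x^*)=0$, and unimodality then yields $q(x^*)\ge q(y)$ for every $y\in[0,\omega]$, which is \eq{x*y}.

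First I would extract from the implicit relation \eq{tau} that $\tau$ is nondecreasing and convex. Implicit differentiation gives $\tau'(x) = -\alpha'(x)/\chi'(\tau(x))$, which is nonnegative because $\alpha'\ge 0$ ($\alpha$ convex and increasing) and $\chi'\le 0$ ($\chi$ decreasing). For convexity, I would use the composition identity $\tau=\chi^{-1}\circ(-\alpha)$: the inverse $\chi^{-1}$ of a convex decreasing function is itself convex and decreasing (check $(\chi^{-1})''=-\chi''/(\chi')^3\ge 0$, or argue directly from the definitions), while $-\alpha$ is concave, and the composition of a convex decreasing function with a concave function is convex. This gives convexity of $\tau$ without needing twice differentiability of $\alpha$ or $\chi$.

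Next I would analyse the numerator $R(x):=\tau(x)-x\tau'(x)$ of $q'(x)$. The critical-point hypothesis is $R(x^*)=0$. I claim $R$ is nonincreasing on $(0,\omega)$: for $0<x_1<x_2<\omega$, convexity and differentiability of $\tau$ give both $\tau(x_2)-\tau(x_1)\le \tau'(x_2)(x_2-x_1)$ and $\tau'(x_1)\le\tau'(x_2)$, so
\[
R(x_2)-R(x_1)\le \tau'(x_2)(x_2-x_1)-x_2\tau'(x_2)+x_1\tau'(x_1)=x_1\bigl(\tau'(x_1)-\tau'(x_2)\bigr)\le 0.
\]
Combined with $R(x^*)=0$, this yields $R\ge 0$ on $(0,x^*]$ and $R\le 0$ on $[x^*,\omega)$, so $q$ is nondecreasing on $(0,x^*]$ and nonincreasing on $[x^*,\omega)$. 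Continuity of $q$ (inherited from $\tau=\chi^{-1}\circ(-\alpha)$ and continuity of convex functions on the interior of their domain, together with $q(0)=0\le q(x^*)$) extends this to $[0,\omega]$, giving \eq{x*y}.

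The main obstacle I anticipate is establishing convexity of $\tau$ under the stated regularity (only $\tau$ is assumed differentiable, while $\alpha,\chi$ are only assumed convex); the clean route is the composition argument, which reduces everything to a formal property of convex monotone functions and their inverses. Once convexity of $\tau$ is in hand, the rest is a one-variable calculus observation about the sign of $R=\tau-x\tau'$.
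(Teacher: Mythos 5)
Your proposal is correct, and it rests on the same structural pillar as the paper's proof -- extracting convexity of $\tau$ from the implicit relation \eq{tau} -- but both halves are executed differently. For convexity, the paper argues directly from the definition: it applies convexity of $\chi$, then \eq{tau}, then convexity of $\alpha$, then \eq{tau} again, and concludes from the monotonicity of $\chi$; this avoids any mention of $\chi^{-1}$. Your composition route $\tau=\chi^{-1}\circ(-\alpha)$ works too, but it needs $\chi$ to be injective (strictly decreasing) for $\chi^{-1}$ to exist -- admittedly the same strictness the paper's final "since $\chi$ is decreasing" step tacitly uses, and harmless in the application where $\chi(\rho)=\rho-1-\log\rho$ on $(0,1]$. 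Your preliminary claim that $\tau$ is nondecreasing via the formula $\tau'(x)=-\alpha'(x)/\chi'(\tau(x))$ assumes differentiability of $\alpha$ and $\chi$, which is not granted; fortunately your main argument never uses monotonicity of $\tau$, so this is dead weight rather than a gap. For the conclusion, the paper uses the mean value theorem plus the convexity inequality $\tau'(z)\le\tau'(x^*)$ to compare the chord slope over $[y,x^*]$ (resp.\ $[x^*,y]$) with $\tau(x^*)/x^*$ and rearranges, a two-line finish; you instead show that $R(x)=\tau(x)-x\tau'(x)$ is nonincreasing (your chain of inequalities is correct, using $x_1>0$ and monotonicity of $\tau'$), so that $q(x)=x/\tau(x)$ is unimodal with its maximum at the critical point $x^*$. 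The two finishes are of comparable length and both are elementary consequences of convexity; yours makes the "unimodality of $x/\tau(x)$" picture explicit, while the paper's MVT argument needs no differentiation of the quotient. Both arguments share the same negligible endpoint subtlety at $y\in\{0,\omega\}$ (trivial at $0$, handled by a limit at $\omega$), so nothing essential separates them in rigor.
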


\begin{proof}
We first prove that $\tau$ is convex and increasing.
Pick $x_1,x_2 \in [0,\omega]$ and $\lambda_1, \lambda_2 \in [0,1]$ with $\lambda_1+\lambda_2=1$.
We need to show that
\begin{equation}
\label{ntstau}
\tau(\lambda_1 x_1 + \lambda_2 x_2) \le \lambda_1 \tau(x_1) + \lambda_2 \tau(x_2) \:.
\end{equation}
We have
\begin{align*}
\chi ( \lambda_1 \tau(x_1) + \lambda_2 \tau(x_2) ) & \le
\lambda_1 \chi(\tau(x_1)) + \lambda_2 \chi(\tau(x_2)) \\
& = - \lambda_1 \alpha(x_1) - \lambda_2 \alpha(x_2) \\
& \le - \alpha (\lambda_1 x_1 + \lambda_2 x_2) \\
& = \chi(\tau(\lambda_1 x_1 + \lambda_2 x_2))
\end{align*}
by convexity of $\chi$, then \eq{tau}, then convexity of $\alpha$, and then \eq{tau} again. The equation \eq{ntstau} follows since $\chi$ is decreasing. Hence $\tau$ is convex. Also, $\tau$ is increasing since $\alpha$ is increasing and $\chi$ is decreasing.

Now, let $y \in [0,\omega]$.
We prove \eq{x*y} for $y < x^*$. The proof for $y>x^*$ is similar.
By the mean value theorem, there exists $z\in (y, x^*)$ with
\begin{equation*}
\tau'(z) = \frac{\tau(x^*) - \tau(y)}{x^* - y} \:.
\end{equation*}
On the other hand, since $z < x^*$ and $\tau$ is convex, we have
$$
\tau'(z) \le \tau'(x^*) = \frac{\tau(x^*)}{x^*} \:. $$
The inequality \eq{x*y} follows from these two results.
\end{proof}

We are ready to prove the upper bound in Theorem~\ref{thm:main}. The upper bound in Theorem~\ref{thm:diameter} follows {immediately} as in every tree the diameter is at most twice the height.

\begin{proof}[Proof of the upper bound in Theorem~\ref{thm:main}.]
Let $c_U = c_U(p)$.
By Lemma~\ref{lem:equal_logs} we just need to show that given $\eps>0$, a.a.s\ as $t\to\infty$ the weighted height of $T_t$ is at most $ (1+ \eps) c_U t$. For proving this we use Lemma~\ref{lem:bd_upper}.  Lemma~\ref{thm:large_deviation_upper} implies that condition \eq{upper_bound_condition} of Lemma~\ref{lem:bd_upper} holds with $\gamma_U(a) = \log (2g_U(a))$, so we need only show that
\begin{equation}
\label{ntscu}
c_U = \sup \left \{ \frac{a}{\rho} :
\log(g_U(a)) + \rho -1 - \log (\rho) = 0:
a\in[0,1],\rho\in(0,\infty) \right \}.
\end{equation}
The function $\rho -1 - \log (\rho)$ attains all values in $[0,\infty)$ for $\rho\in(0,1]$.
Moreover, it is strictly decreasing for $\rho\in(0,1]$ and equals 0 for $\rho \in [1,\infty)$.
So $\log(g_U(a)) + \rho -1 - \log (\rho) = 0$ has a unique solution (for $\rho$) if
$0<g_U(a)< 1$,
and no solution if $g_U(a) > 1$.
Since 
$g_U(0)=1/2$ and $g_U(1) = 1/p$, and the function
$g_U(x)$ is continuous and strictly increasing when $g_U(x)>1/2$, there is a unique $x$ with $g_U(x) = 1$.
Denote this point by $a_{\max}$.
Define the function $\tau : [0,a_{\max}] \to (0,1]$ as follows. Let $\tau(a_{\max}) = 1$ and for $x<a_{\max}$ let $\tau(x)$ be the unique number satisfying
\begin{equation}
\label{def_nu}
\log(g_U(x)) + \tau(x) -1 - \log \tau(x) = 0 \:.
\end{equation}
Hence to prove \eq{ntscu} it is enough to show that
\begin{equation}
\label{ntsnu}
c_U = \sup \left \{ \frac{x}{\tau(x)} : x\in[0,a_{\max}] \right \} \:.
\end{equation}

We prove \eq{ntsnu} using Lemma~\ref{lem:nu}.
The function $\log(g_U(a))$ is increasing and convex by Lemma~\ref{thm:large_deviation_upper}(c), and it is easy to check that the function $\rho -1 - \log (\rho)$ is decreasing and convex.
Moreover, differentiating \eq{def_nu} gives
$$
\frac{g'_U(x)}{g_U(x)} + \tau'(x) - \frac{\tau'(x)}{\tau(x)} = 0 \:.
$$
So by the implicit function theorem $\tau$ is differentiable in $x\in(0,a_{\max})$ and
$$\tau'(x) = \frac{\tau(x)}{1-\tau(x)} \: \frac{g'_U(x)}{g_U(x)} \:.$$
By Lemma~\ref{lem:nu}, we just need to show the existence of
$x^* \in (0, a_{\max})$ with
\begin{equation}
\label{ntslast}
c_U = \frac{x^*}{\tau(x^*)} = \frac{1-\tau(x^*)}{\tau(x^*)} \: \frac{g_U(x^*)}{g'_U(x^*)}\:.
\end{equation}

We consider two cases.
Recall that $p_0 \approx 0.206$ is the solution to
\begin{equation*}
\log \left( \frac{1-p}{p} \right) = \frac{1-p}{1-2p} \:,
\end{equation*}
which has a unique solution by Lemma~\ref{lem:uniques}(a).

\noindent\textbf{Case 1: $0 < p \le p_0$.}
In this case we have
$$c_U = \left( \log \left( \frac{1-p}{p} \right) \right)^{-1} \:.$$
Let
$$a^* = \left [ 2 \log \left( \frac{1-p}{p} \right) \right ]^{-1} \:.$$
By Lemma~\ref{lem:uniques}(a)
$$\log \left (\frac{1-p}{p}\right) \ge \frac{1-p}{1-2p} \:,$$
which gives $a^* \le \frac{1-2p}{2-2p}$, thus
$$g_U(a^*) = \left(\frac{1-p}{p}\right)^{a^*}\Big/2 = \exp\left(\frac{1}{2} - \log 2\right)<1$$
by the definition of $g_U$ in \eq{gu}, and
$$g'_U(a^*) = \log \left ( \frac{1-p}{p} \right) g_U(a^*)$$
by Lemma~\ref{thm:large_deviation_upper}(b).
The definition of $\tau$ in \eq{def_nu} implies $\tau(a^*) = 1/2$.
Moreover,
$$ \frac{a^*}{\tau(a^*)}
= \left( \log \left( \frac{1-p}{p} \right) \right)^{-1}
= \frac{1-\tau(a^*)}{\tau(a^*)} \: \frac{g_U(a^*)}{g'_U(a^*)}
\:,$$
which gives \eq{ntslast}.
Finally, since $g_U(a^*) < 1$, we have $a^* \in (0,a_{\max})$, and the proof is complete.

\noindent\textbf{Case 2: $p_0 < p < 1$.}
In this case we have
$$ c_U = p s^* (2-s^*) \exp (1/s^*) \:,$$
where $s^*\in(0,1)$ is the unique solution for
\begin{equation}
\label{ss}
s^* \log \left ( \frac{(1-p)(2-s^*)}{1-s^*} \right) = 1 \:.
\end{equation}
Lemma~\ref{lem:uniques}(b) implies that $s^*$ is well defined.
Let $a^* = \phi^{-1}(s^*)$.

We first show that
\begin{equation}
\label{correctrange}
g_U(a^*) = \frac{s^*-a^*}{s^*} \left( \frac{(1-p) (2-s^*)}{1-s^*} \right)^{a^*} \:.
\end{equation}

If $p > 1/2$, then by Lemma~\ref{lem:uniques}(b) we have $s^* > 2 - \frac{1}{p}$.
It is easy to verify that $\Phi(1-\frac{1}{2p},2-\frac{1}{p}) = 0$.
Since $\phi^{-1}$ is increasing, we have $a^* = \phi^{-1}(s^*) > 1- \frac{1}{2p}$,
so \eq{correctrange} agrees with the definition of $g_U$ in \eq{gu}.

If $p_0 < p \le 1/2$, then by Lemma~\ref{lem:uniques}(b) we have $s^* > \frac{1-2p}{1-p}$.
It is easy to verify that $\Phi(\frac{1-2p}{2-2p},\frac{1-2p}{1-p}) = 0$.
Since $\phi^{-1}$ is increasing, we have $a^* = \phi^{-1}(s^*) > \frac{1-2p}{2-2p}$,
so \eq{correctrange} agrees with the definition of $g_U$ in \eq{gu}.

Using \eq{ss}, the equation \eq{correctrange} simplifies into
\begin{equation}
\label{newone}
g_U(a^*) = \left( 1 - \frac{a^* }{s^* } \right) \exp \left (a^*/s^*\right) <
\exp ( - \frac{a^* }{s^* }) \exp \left (a^*/s^*\right) = 1\:,
\end{equation}
and by Lemma~\ref{thm:large_deviation_upper}(b) we have
$$g'_U(a^*) = g_U(a^*)\log \left(\frac{(1-p)(2-s^*)}{1-s^*}\right) = g_U(a^*)/s^* \:.$$
It follows from \eq{newone} and the definition of $\tau$ in \eq{def_nu} that $\tau(a^*) = 1 - \frac{a^*}{s^*}$.
Using \eq{ss} and  $\Phi(a^*,s^*)=0$, we get
$$ \frac{a^*}{\tau(a^*)}
= p s^* (2-s^*) \exp (1/s^*)
=\frac{1-\tau(a^*)}{\tau(a^*)} \: \frac{g_U(a^*)}{g'_U(a^*)}\:,$$
which gives \eq{ntslast}.
Finally, since $g_U(a^*) < 1$, we have $a^* \in (0,a_{\max})$, and the proof is complete.
\end{proof}

\section{Concluding Remarks}
\label{sec:conclude}
%Perhaps the idea of using turning the complexity of attachment rules into edge-weights, which could possibly be negative, could be helpful in analysing other random trees as well.
%Negative weight used for Broutin-Devroye
%what does it mean that lower bound and upper bound are equal for $p>p_0$? it means there exists a deep vertex whose path always consists of positive-weight vertices! Whereas for smaller $p$, all long paths go back to 1 at least once!

There is a common generalization of random recursive trees, preferential attachment trees, and random-surfer trees. Consider i.i.d.\ random variables $X_1,X_2,\ldots \in \{0,1,2,\dots\}$. Start with a single vertex $v_0$. At each step $s$ a new vertex $v_s$ appears, chooses a random vertex $u$ in the present graph, and then walks $X_s$ steps from $u$ towards $v_0$, joining to the last vertex in the walk (if it reaches $v_0$ before $X_s$ steps, it joins to $v_0$).
Random recursive trees correspond to $X_i = 0$,
preferential attachment trees correspond to $X_i=\operatorname{Bernoulli}(1/2)$ (see, e.g., \cite[Theorem~3.1]{random_surfer}), and
random-surfer trees correspond to $X_i=\geo(p)$.
Using the ideas of this paper, it is possible to obtain lower and upper bounds for the height and the diameter of this general model (similar to Theorems~\ref{thm:main} and~\ref{thm:diameter}), provided one can prove large deviation inequalities (similar to Lemma~\ref{lem:ylargedev}) for the sum of $X_i$'s  and also large deviation inequalities
(similar to Lemma~\ref{lem:x_large_dev}) for the sum of random variables $X'_i$, defined as
$$X'_1 = 1, \qquad X'_{i+1} = \max \{1 - X_i, 1 - (X'_1 + \dots + X'_i)\} \:.$$

\bibliographystyle{plain}
\bibliography{webgraph}	

\begin{thebibliography}{10}

\bibitem{barabasi_albert}
A.-L. Barab\'{a}si and R.~Albert.
\newblock Emergence of scaling in random networks.
\newblock {\em Science}, 286(5439):509--512, 1999.

\bibitem{bhamidi}
S.~Bhamidi.
\newblock Universal techniques to analyze preferential attachment trees: global
  and local analysis.
\newblock preprint, available via \url{http://www.unc.edu/~bhamidi/}, 2007.

\bibitem{multiplicative_fitness_def}
G.~Bianconi and A.-L. Barab\'asi.
\newblock Competition and multiscaling in evolving networks.
\newblock {\em Europhys. Lett.}, 54(4):436--442, 2001.

\bibitem{random_surfer}
A.~Blum, T.-H.~H. Chan, and M.~R. Rwebangira.
\newblock A random-surfer web-graph model.
\newblock In {\em Proc. of 8th {W}orkshop on {A}lgorithm {E}ngineering and
  {E}xperiments and 3rd {W}orkshop on {A}nalytic {A}lgorithmics and
  {C}ombinatorics}, pages 238--246, 2006.

\bibitem{diameter_preferential_attachment}
B.~Bollob{\'a}s and O.~Riordan.
\newblock The diameter of a scale-free random graph.
\newblock {\em Combinatorica}, 24(1):5--34, January 2004.

\bibitem{Bonato}
A.~Bonato.
\newblock {\em A course on the web graph}, volume~89 of {\em Graduate Studies
  in Mathematics}.
\newblock American Mathematical Society, Providence, RI, 2008.

\bibitem{treeheight}
N.~Broutin and L.~Devroye.
\newblock Large deviations for the weighted height of an extended class of
  trees.
\newblock {\em Algorithmica}, 46(3-4):271--297, 2006.

\bibitem{Chakrabarti2012}
D.~Chakrabarti and C.~Faloutsos.
\newblock {\em Graph Mining: Laws, Tools, and Case Studies}.
\newblock Synthesis Lectures on Data Mining and Knowledge Discovery. Morgan
  {\&} Claypool Publishers, 2012.

\bibitem{pagerank_random_surfer}
P.~Chebolu and P.~Melsted.
\newblock Pagerank and the random surfer model.
\newblock In {\em Proceedings of the 19th annual ACM-SIAM symposium on Discrete
  algorithms}, SODA '08, pages 1010--1018, Philadelphia, PA, USA, 2008.

\bibitem{complexgraphs}
F.~Chung and L.~Lu.
\newblock {\em Complex graphs and networks}, volume 107 of {\em CBMS Regional
  Conference Series in Mathematics}.
\newblock Published for the Conference Board of the Mathematical Sciences,
  Washington, DC, 2006.

\bibitem{dembo}
A.~Dembo and O.~Zeitouni.
\newblock {\em Large deviations techniques and applications}, volume~38 of {\em
  Stochastic Modelling and Applied Probability}.
\newblock Springer-Verlag, Berlin, 2010.
\newblock Corrected reprint of the second (1998) edition.

\bibitem{biased_tree}
L.~Devroye, O.~Fawzi, and N.~Fraiman.
\newblock Depth properties of scaled attachment random recursive trees.
\newblock {\em Random Structures Algorithms}, 41(1):66--98, 2012.

\bibitem{diameters_pa}
S.~Dommers, R.~van~der Hofstad, and G.~Hooghiemstra.
\newblock Diameters in preferential attachment models.
\newblock {\em Journal of Statistical Physics}, 139(1):72--107, 2010.

\bibitem{pa_general_1}
E.~Drinea, A.~Frieze, and M.~Mitzenmacher.
\newblock Balls and bins models with feedback.
\newblock In {\em Proceedings of the thirteenth annual ACM-SIAM symposium on
  Discrete algorithms}, SODA '02, pages 308--315, Philadelphia, PA, USA, 2002.

\bibitem{durrett}
R.~Durrett.
\newblock {\em Random graph dynamics}.
\newblock Cambridge Series in Statistical and Probabilistic Mathematics.
  Cambridge University Press, Cambridge, 2010.

\bibitem{additive_fitness}
G.~Erg\"{u}n and G.J. Rodgers.
\newblock Growing random networks with fitness.
\newblock {\em Physica A: Statistical Mechanics and its Applications},
  303(1--2):261--272, 2002.

\bibitem{pa_general_2}
P.~L. Krapivsky and S.~Redner.
\newblock Organization of growing random networks.
\newblock {\em Phys. Rev. E}, 63:066123, May 2001.

\bibitem{pagerank_def}
L.~Page, S.~Brin, R.~Motwani, and T.~Winograd.
\newblock The pagerank citation ranking: Bringing order to the web.
\newblock Technical Report 1999-66, Stanford InfoLab, 1999.

\bibitem{pagerank_model_conference}
G.~Pandurangan, P.~Raghavan, and E.~Upfal.
\newblock Using pagerank to characterize web structure.
\newblock In {\em Proceedings of the 8th Annual International Conference on
  Computing and Combinatorics}, COCOON '02, pages 330--339, London, UK, UK,
  2002.

\bibitem{pagerank_model_journal}
G.~Pandurangan, P.~Raghavan, and E.~Upfal.
\newblock Using pagerank to characterize web structure.
\newblock {\em Internet Mathematics}, 3(1):1--20, 2006.

\bibitem{random_recursive_trees}
B.~Pittel.
\newblock Note on the heights of random recursive trees and random $m$-ary
  search trees.
\newblock {\em Random Structures and Algorithms}, 5(2):337--347, 1994.

\end{thebibliography}
%\appendix
\section*{Appendix: omitted proofs}
\label{sec:omit}

\begin{proof}
[Proof of Lemma~\ref{lem_gamma}.]
We first prove the upper bound.
If $x>1$ then $\exp (-\Upsilon(x) m)=1$, so we may assume that $0 < x \le 1$.
We use Chernoff's technique.
Let $\theta = 1 - 1/x$. Then we have
\begin{align*}
\p{E_1 + E_2 + \dots + E_m \le xm} &
=
\p{\exp(\theta E_1  + \dots + \theta E_m) \ge \exp(\theta xm)} \\
& \le
\e{\exp(\theta E_1  + \dots + \theta E_m)} / \exp(\theta x m)
\\
& 
= 
\e{\exp(\theta E_1)}  
\e{\exp(\theta E_2)}  
\dots
\e{\exp(\theta E_m)}  / \exp(\theta x m) \\
& = 
(1-\theta)^{-m} \exp(-\theta x m) = 
\exp (-\Upsilon(x) m) \:.
\end{align*}
We now prove the lower bound.
If $x>1$, then the result follows from Markov's inequality, so we may assume that $0<x\le 1$.
Let $\Lambda^*(x) = \sup \{ \lambda x - \log (\e{e^{\lambda E_1}}) : \lambda \le 0\} $.
Since
$\e{e^{\lambda E_1}}=1/(1-\lambda)$ for all $\lambda<1$,
the supremum here occurs at $\lambda = 1 - 1/x$,
which implies $\Lambda^*(x)=\Upsilon(x)$.
Then by Cram\'{e}r's Theorem (see, e.g., \cite[Theorem~2.2.3, p.~27]{dembo})
we have
$$\p{E_1 + E_2 + \dots + E_m \le xm} = \exp (-\Lambda^*(x) m + o(m)) = \exp (-\Upsilon(x) m + o(m))\:,$$
as required.
\end{proof}

\begin{proof}[Proof of Lemma~\ref{pro:geo}.]
We use Chernoff's technique.
Let $\theta$ satisfy 
$$e^{\theta} = \frac{\kappa-1}{\kappa(1-p)} \:.$$
We have
$$
 \e{\exp(\theta Z_1)}
 =\sum_{k=1}^{\infty}p(1-p)^{k-1}e^{\theta k}
 = \frac{pe^{\theta}}{1-e^{\theta}(1-p)} \:.
$$
Thus we have
\begin{align*}
\p{Z_1 + Z_2 + \dots + Z_m \ge \kappa m} &
=
\p{\exp(\theta Z_1  + \dots + \theta Z_m) \ge \exp(\theta \kappa m)} \\
& \le
\e{\exp(\theta Z_1  + \dots + \theta Z_m)} / \exp(\theta \kappa m)
\\
& 
= 
\e{\exp(\theta Z_1)}  
\e{\exp(\theta Z_2)}  
\cdots
\e{\exp(\theta Z_m)}  / \exp(\theta \kappa m) \\
& = 
\left(\frac{pe^{\theta-\theta\kappa}}{1-e^{\theta}(1-p)}\right)^m
= f(2-\kappa)^m \:.\qedhere
\end{align*}
\end{proof}

\begin{proof}[Proof of Lemma~\ref{lem:technical}.] We consider two cases.\\ 
\noindent\textbf{Case 1: $c\ge 1$.}
In this case we prove
$$-c \Upsilon(1/c) + c \log f ( 2 - \eta / c) < \eta(1-p)\log(1-p^3)-1 .$$
Notice that we have
$1-c\Upsilon(1/c) = c + c \log(1/c)$,
so, using the definition of $f$ and since $\eta(1-p)\le\eta-c$, the conclusion is implied by
$$
c + c \log(1/c) + \eta \log (\eta(1-p)/(\eta-c)) + c \log (p(\eta-c)/((1-p)c)) < (\eta-c)\log(1-p^3) \:.
$$
Letting $r = \eta / c$ and since $c>0$, this statement is equivalent to
$$
ep (1-p)^{r-1} r^2 (r/(r-1))^{r-1} < \eta (1-p^3)^{r-1}\:.
$$
Since $(r/(r-1))^{r-1} < e$, and $1-p < (1-p^3)e^{-p}$, for this inequality to hold it suffices to have
\begin{equation*}
\label{stp}
e^{2+p} r^2 p \exp(-pr) \le 4e^p/p \qquad \forall r\in[p^{-1},\infty) \:,
\end{equation*}
which follows from the fact that 
$x^2 e^{-x} \le 4e^{-2}$ for all $x\ge 1$.

\noindent\textbf{Case 2: $c< 1$.}
In this case we prove
$$-c \Upsilon(1/c) + c \log f ( 2 - \eta / c) < -0.15p\eta-1 .$$
Since $\Upsilon(1/c)=0$, this is equivalent to
\begin{equation}
1+0.15p \eta  + c \log f ( 2 - \eta / c) < 0 .
\label{case2nts}
\end{equation}
Note that 
$$
\left(\frac{\eta/c}{\eta/c - 1}\right)^{\eta/c - 1} < e\:,
$$
so we have
\begin{align*}
c \log f (2- \eta / c)
&
= \log \left((\eta/c)^{\eta}p^c(1-p)^{\eta-c}(\eta/c-1)^{c-\eta} \right)\\
& <
\log\left((e \eta p  / c)^c (1-p)^{\eta - c}  \right)
\le c \log (e \eta p / c) + c p - p \eta \:,
\end{align*}
where we have used $\log(1-p)\le -p$ in the last inequality.
Hence to prove \eq{case2nts}, since $c>0$, it suffices to show that
\begin{equation}
\label{lastlast}
\frac{1}{c}  + 1 + \log (\eta p / c) + p
< 0.85 p \eta / c \:.
\end{equation}
Since $ p\eta \ge 4e^p\ge 4>4c$, we have
\begin{align*}
\frac1c & < 0.25 p \eta / c,\\
1+p < \frac{1+p}{c} < \frac{e^p}{c} & \le 0.25 p\eta/c ,\\
\log (\eta p / c) & < 0.35 p \eta / c \:,
\end{align*}
which imply~\eq{lastlast}.
\end{proof}

\begin{proof}[Proof of Lemma~\ref{lem:s}.]
(a)
The conclusion is clear for $a\in\{0,1\}$, so we may assume that $a\in(0,1)$.
Since $\Phi(a,a)<0$ and $\Phi(a,1)>0$, there exists at least one $s\in(a,1)$ with
$\Phi(a,s)=0$. We now show that there is a unique  such $s$.
Fixing $a$, since $\Phi$ is differentiable with respect to $s$, it is enough to show that
\begin{equation}
\label{ntsphi}
\mathrm{if\ } \Phi(a,s)=0\mathrm{,\ then\ } \frac{\partial \Phi}{\partial s} > 0
\end{equation}
Let $\sigma = p(1-p) $. We have
$$
\frac{\partial \Phi}{\partial s} = \sigma (2-s)^2 + a - 2 \sigma (2-s)(s-a) \:.
$$
At a point $(a,s)$ with $\Phi(a,s)=0$, we have
$$ \sigma (2-s)^2 = \frac{a(1-s)}{s-a}\mathrm{,\ and\ } \sigma (2-s)(s-a) = \frac{a(1-s)}{2-s} \:,$$
so at this point,
\begin{equation*}
\frac{\partial \Phi}{\partial s} =
a \left( \frac{1}{s-a} + \frac{1}{1-s} - \frac{2}{2-s} \right) \:,
\end{equation*}
which is strictly positive because
$$\min \left\{ \frac{1}{s-a} , \frac{1}{1-s} \right \} > 1 > \frac{1}{2-s} \:,$$
and this proves \eq{ntsphi}.

(b) Plugging the definition of $f$ from \eq{f_def} and using $\Phi(a,s)=0$ gives this equation.

(c)
We first show that $\phi$ is differentiable and increasing on $(0,1)$.
Let $a\in(0,1)$ and let $s=\phi(a)$.
We have
$$\frac{\partial \Phi}{\partial a} = s-1-p(1-p)(2-s)^2 < 0\:,$$
and $\partial \Phi / \partial s$ is positive as proved in part (a).
Hence by the implicit function theorem
${\mathrm{d}s}/{\mathrm{d}a}$ exists and is positive,
so $\phi$ is differentiable and increasing on $(0,1)$.
Since $\phi(0)=0$ and $\phi(1)=1$, $\phi$ is increasing on $[0,1]$.

(d)
Let $s\in[0,1]$.
Then $\Phi(0,s) \Phi (s,s) \le 0$ and so there exists at least one $a_0\in[0,s]$
with $\Phi(a_0,s)=0$. The function $\Phi(a,s)$ is linear in $a$ and the coefficient of $a$ is non-zero, hence this  root $a_0$  is unique.
The function $\phi^{-1}$ is increasing since $\phi$ is increasing.
The last two statements follow from similar statements proved for $\phi$ in (a).
\end{proof}

\begin{proof}[Proof of Lemma~\ref{lem:large_deviation_lower}.]
 We have
$$
\p{\hat{Y}_1+\dots+\hat{Y}_m \ge am} = \sum_{k=\left\lceil am \right\rceil}^m \binom{m}{k} 2^{-m} \times \p{Y_1+\dots+Y_k\ge am} \:,
$$
where $k$ denotes the number of $\hat{Y}_i$'s whose value was determined to be equal to $Y_i$.

If $p> 1/2$ and $0 < a < 1 - \frac{1}{2p}$, then letting $k = \lceil m/2 \rceil$ gives
$$\binom{m}{k} 2^{-m}  = \Omega \left( \frac{1}{\sqrt m} \right)$$
by Stirling's approximation, and
$$ \p{Y_1+\dots+Y_k\ge am} \ge (1-o(1))^k$$
by Lemma~\ref{lem:ylargedev}(c).
This gives
$$
\p{\hat{Y}_1+\dots+\hat{Y}_m \ge am} \ge (1-o(1))^m\:,
$$
as required.

Otherwise, let $s=\phi(a)$.
Then letting $k = \lceil am/s \rceil$ gives
$$\binom{m}{k} 2^{-m}  = \Omega \left( \left[\frac{s(s-a)^{a/s}}{2(s-a)a^{a/s}} \right]^m\right) \Big/ m^2$$
by Stirling's approximation, and
$$ \p{Y_1+\dots+Y_k\ge am} \ge (f(s)-o(1)) ^ k$$
by Lemma~\ref{lem:ylargedev}(b) and since $f$ is continuous.
Lemma~\ref{lem:s}(b) completes the proof.
\end{proof}

\begin{proof}[Proof of Lemma~\ref{lem:uniques}.]
(a)
The function $r(p)=\log \left( \frac{1-p}{p} \right) - \frac{1-p}{1-2p}$
approaches $+\infty$ when $p\to 0^+$ and approaches $-\infty$ when $p\to { \frac{1}{2} }^{-}$.
Moreover,
$$r'(p)=\frac{-1}{p(1-p)}-\frac{1}{(1-2p)^2} < 0$$
for $p\in(0,1/2)$.
Hence $r(p)$ has a unique root $p_0$, and
$r(p) \ge 0$
if and only if $p\le p_0$.

(b)
%We start by showing there is a unique solution in $(0,1)$ to
%\begin{equation*}
%(1-p)(2-s) = \exp(1/s) (1-s)\:.
%\end{equation*}
The function
$$\mu(s) = \log (1-p) + \log(2-s) - \log(1-s) - \frac{1}{s}$$
approaches $-\infty$ as $s \to 0^+$,
and approaches $+\infty$ as $s \to 1^{-}$,
and its derivative is positive in $(0,1)$,
hence it has a unique root {$s_0$} in $(0,1)$.
Also we have $\mu(2-p^{-1}) = p/(1-2p)$,
which means that if $p>1/2$ then {$s_0>2-p^{-1}$}.
Moreover, if $p_0 < p \le 1/2$, then by part (a),
$$\mu\left(\frac{1-2p}{1-p}\right) = \log\left(\frac{1-p}{p}\right) - \frac{1-p}{1-2p}=r(p)<0\:,$$
which means $s_0 > \frac{1-2p}{1-p}$.
\end{proof}

\begin{proof}[Proof of Lemma~\ref{lem:x_large_dev}.]
The conclusion is obvious if $p\ge \frac{1}{2}$ and $a \le 2 - \frac{1}{p}$, or if $a=0$, since in these cases $h(a)=1$.
Also, $\p{X_1+\dots+X_m > m}=0$ so the conclusion is true if $a=1$,
so we may assume that $\max \{ 0, 2 - \frac{1}{p} \} < a < 1$.

Observe that if $X_1+\dots+X_m > a m$, there is a subsequence of the form
$Y_{m-k+1},\dots,Y_m$ whose sum is at least $am$, and this subsequence contains at least $am$ elements since $Y_i \le 1$ for all $i$. Hence we have
\begin{equation*}
\p{X_1+\dots+X_m > am} \le m \max\{ \p{Y_1+\dots+Y_k \ge am}: k \in [am,m] \cap \mathbb{Z}\}
\end{equation*}
as the $Y_i$'s are i.i.d.

For any integer $k \in [am, m]$, by Lemma~\ref{lem:ylargedev}(a) we have
$$ \p{Y_1+\dots+Y_k \ge am} \le C k (f(am/k))^k$$
for an absolute constant $C$, since $am/k \ge a > 2 - \frac{1}{p}$.
Let $r = k/m \in [a,1]$. So we find that
$$\p{X_1+\dots+X_m > am} \le C m^2 \big(\sup \{ f(a/r)^r  : r\in[a,1]\} \big)^m \:.$$
Let us define
$$ \xi(r) = f(a/r)^r = (2r-a)^{2r-a}p^r(1-p)^{r-a}(r-a)^{a-r}r^{-r} \:.$$
So to {complete the proof} we just need to show that
\begin{equation}
\label{xxdev}
\sup \{ \xi(r)  : r\in[a,1] \} \le h(a) \qquad \forall \: a \in \left(\max\left\{0,2 - \frac{1}{p}\right\},1\right) \:.
\end{equation}

The function $\xi(r)$ is positive and differentiable for each $a\in(0,1)$, hence the supremum here occurs either at a boundary point or at a point with zero derivative.
The derivative of $\log (\xi(r))$ equals
$$\log \left ( \frac{p(1-p)(2r-a)^2}{r(r-a)} \right ) \:.$$
Thus $\xi'(r)$ has the same sign as
$\overline{\xi}(r) = p(1-p)(2r-a)^2 - r(r-a)$ in $r\in[a,1]$.
Notice that $\overline{\xi}(r)$ has two roots
$$r_1 = \frac{ap}{2p-1} \mathrm{,\ and\ } r_2 = \frac{a(1-p)}{1-2p} \:.$$
We may consider several cases.

\noindent\textbf{Case 0: $p=1/2$.} The function $\overline{\xi}$ is positive, so $\xi$ is increasing in $[a,1]$, hence the supremum in~\eq{xxdev} happens at $r=1$ and its value is $f(a)$.

\noindent\textbf{Case 1: $p>1/2$.} Since $a > 2 - \frac{1}{p}$,
we find that $r_1>1$ and $r_2<0$. Moreover, $\overline{\xi}(a)\ge 0$.
Thus $\overline{\xi}$ is non-negative in $[a,1]$,
which implies $\xi$ is increasing in $[a,1]$.
Thus the supremum in~\eq{xxdev} happens at $r=1$ and its value is $f(a)$.

\noindent\textbf{Case 2: $p<1/2$ and $a \le \frac{1-2p}{1-p}$.}
In this case $r_1<0$ and $a \le r_2 \le 1$.
Since $\overline{\xi}(a) \ge 0$ and
$\overline{\xi}(r_1)=\overline{\xi}(r_2)=0$ and $\overline{\xi}$ is quadratic,
the function $\overline{\xi}$ goes from positive to negative at $r_2$.
Therefore, the function $\xi$ attains its supremum at $r_2$ and the supremum value in~\eq{xxdev} equals
$$\xi(r_2) = \left(\frac{p}{1-p}\right)^a \:.$$

\noindent\textbf{Case 3: $p<1/2$ and $a > \frac{1-2p}{1-p}$.}
We find that $r_1<0$ and $r_2>1$, and $\overline{\xi}(a) \ge 0$,
so $\overline{\xi}$ is non-negative in $[a,1]$, hence
$\xi$ is increasing in $[a,1]$.
Thus the supremum in~\eq{xxdev} happens at $r=1$ and its value is $f(a)$. This completes the proof of \eq{xxdev} and the lemma.
\end{proof}

\begin{proof}[Proof of Lemma~\ref{thm:large_deviation_upper}.]
(a)
First, the case $a=0$ is obvious since $g_U(0) = 1/2$, and
the case $a=1$ is easy since
$\p{\hat{X}_1+\dots+\hat{X}_m > m} = 0$. So we may assume that $a\in(0,1)$.

Letting $k$ of the $\hat{X}_i$'s being equal to $X_i$ and the rest equal to zero, we get
\begin{align*}
\p{\hat{X}_1+\dots+\hat{X}_m > am} & = \sum_{k=am}^m \binom{m}{k} 2^{-m} \p{X_1+\dots+X_k > am} \\
& \le m \sup \left\{ \binom{m}{rm} 2^{-m} \p{X_1+\dots+X_{rm} > am} :r\in[a,1] \right\} \:.
\end{align*}
For a given $r \in [a,1]$, Lemma~\ref{lem:x_large_dev} gives
$$ \p{X_1+\dots+X_{rm}> am} \le C(rm)^2 h(a/r)^{rm} \le C m^2 h(a/r)^{rm} \:.$$
Moreover, by Stirling's approximation
$$ \binom{m}{rm} = O\left( \frac{1}{r^{rm} (1-r)^{(1-r)m}} \right) \:.$$
So, we find {that}
$$\p{\hat{X}_1+\dots+\hat{X}_m > am} \le
 C' m^3 \left [ \sup \left\{ \frac{h(a/r)^r}{2 r^{r} (1-r)^{1-r}} :r\in[a,1] \right\} \right] ^m \:.$$
Thus to {complete the proof of part (a)} we just need to show
\begin{equation}
\label{ntsgU}
g_U(a) = \inf \left\{ \frac{\zeta-a}{\zeta} \left(\frac{a}{(\zeta-a)h(\zeta)}\right)^{a/\zeta}  : \zeta\in[a,1] \right\} \:,
\end{equation}
where we have used the change of variable $\zeta = a/r$.
For analysing this infimum we define the two variable function
\begin{equation*}
\psi (a,\zeta) = \frac{\zeta-a}{\zeta} \left(\frac{a}{(\zeta-a)h(\zeta)}\right)^{a/\zeta} \end{equation*}
with domain $\{(a,\zeta) : 0 < a < 1, a \le \zeta \le 1 \}$,
and consider two cases depending on the value of $p$.

\noindent\textbf{Case 1: $p \ge 1/2$.}
By the definition of $h$ in \eq{h_def} we have
$$
\psi(a,\zeta) =
\begin{cases}
\frac{\zeta-a}{\zeta} \left(\frac{a}{\zeta-a}\right)^{a/\zeta}  & \mathrm{\ if\ }a \le \zeta \le 2 - p^{-1} \\
\frac{\zeta-a}{\zeta} \left(\frac{a}{(\zeta-a)f(\zeta)}\right)^{a/\zeta}
& \mathrm{\ otherwise}\:,
\end{cases}
$$
where $f$ is defined in \eq{f_def}.
Since $f(2-p^{-1}) = 1$, $\psi$ is continuous here.
Let us define $\psi_1(\zeta) = \frac{\zeta-a}{\zeta} \left(\frac{a}{\zeta-a}\right)^{a/\zeta}$
and
$\psi_2(\zeta)=
\frac{\zeta-a}{\zeta} \left(\frac{a}{(\zeta-a)f(\zeta)}\right)^{a/\zeta}
$.

The derivative of $\log \psi_1(\zeta)$ is
$$ a \log \left( \frac{\zeta-a}{a} \right) / \zeta^2 \:,$$
which is negative for $\zeta<2a$ and positive for $\zeta>2a$.
This implies $\psi_1(\zeta)$ is decreasing when $\zeta \le 2a$ and
increasing when $\zeta \ge 2a$.
So $\psi_1$ achieves its minimum at $\zeta = 2a$, and its minimum value is $1/2$.

The derivative of $\log \psi_2(\zeta)$ is
$$ \frac{a}{\zeta^2} \left[\log \left ( p(1-p)(2-\zeta)^2 (\zeta-a) \right)  - \log \left( a(1-\zeta) \right) \right ]\:.
$$
Comparing with \eq{Phi} we find that this derivative has the same sign as $\Phi(a,\zeta)$.
So by Lemma~\ref{lem:s}(a) it vanishes at a unique point $\zeta = \phi(a)$.
Also at $\zeta=\phi(a)$ we have $\partial \Phi / \partial \zeta > 0$ (see \eq{ntsphi}), which implies $\Phi(a,\zeta)$ is non-positive
when $\zeta \le \phi(a)$
and non-negative when $\zeta \ge \phi(a)$.
Thus $\psi_2$ achieves its minimum at $\phi(a)$, and
its minimum value is
$$\psi_2(\phi(a)) = \frac{\phi(a)-a}{\phi(a)} \left(\frac{a}{(\phi(a)-a)f(\phi(a))}\right)^{a/\phi(a)}
 = \frac{\phi(a)-a}{\phi(a)} \left( \frac{(1-p) (2-\phi(a))}{1-\phi(a)} \right)^a$$
by Lemma~\ref{lem:s}(b).

We conclude that:

(i) If $2a \le 2 - 1/p$, then the infimum of $\psi$ occurs at $\zeta=2a$ and its value is $\psi(a,2a) = \psi_1(2a)=1/2$.
The reason is that on $[a,2-1/p]$, $\psi = \psi_1$
achieves its minimum at $2a$,
and on $[2-1/p,1]$, $\psi = \psi_2$ is increasing since
$\Phi(a,2-1/p) \ge 0$.

(ii) If $a \le 2 - 1/p$ and $2a > 2 - 1/p$, then the infimum occurs at $\zeta =\phi(a)$ and its value is
$\frac{\phi(a)-a}{\phi(a)} \left( (1-p) (2-\phi(a)) / (1-\phi(a)) \right)^a$.
The reason is that on $[a,2-1/p]$, $\psi = \psi_1$
is decreasing, and on $[2-1/p,1]$, $\psi = \psi_2$
achieves its minimum at $\phi(a)$ since $\Phi(a,2-1/p)\le0$ and $\Phi(a,1)\ge0$.

(iii) If $a > 2 - 1/p$, then the infimum occurs at $\zeta =\phi(a)$ and its value is equal to
$\frac{\phi(a)-a}{\phi(a)} \left( (1-p) (2-\phi(a)) / (1-\phi(a)) \right)^a$.
The reason is that on $[a,1]$,
$\psi=\psi_2$ achieves its minimum  at $\phi(a)$ since $\Phi(a,a)\le0$ and $\Phi(a,1)\ge0$.

\noindent\textbf{Case 2: $p < 1/2$.}
By the definition of $h$ in \eq{h_def} we have
$$
\psi(a,\zeta) =
\begin{cases}
\left(\frac{1-p}{p}\right)^{a}
\frac{\zeta-a}{\zeta} \left(\frac{a}{\zeta-a}\right)^{a/\zeta}
& \mathrm{\ if\ } a \le \zeta \le \frac{1-2p}{1-p} \\
\frac{\zeta-a}{\zeta} \left(\frac{a}{(\zeta-a)f(\zeta)}\right)^{a/\zeta}
& \mathrm{\ otherwise.}
\end{cases}
$$
The function $\psi$ is continuous here since $$f\left(\frac{1-2p}{1-p}\right)=\left(\frac{p}{1-p}\right)^{\frac{1-2p}{1-p}} \:.$$
Let us define
$\psi_3(\zeta) = \left(\frac{1-p}{p}\right)^{a}
\frac{\zeta-a}{\zeta} \left(\frac{a}{\zeta-a}\right)^{a/\zeta}
$.
Since $\psi_3(\zeta) = \left(\frac{1-p}{p}\right)^{a} \psi_1(\zeta)$,
the function $\psi_3(\zeta)$ is decreasing when $\zeta \le 2a$ and increasing when $\zeta\ge 2a$.
So $\psi_3$ achieves its minimum at $\zeta = 2a$ and its minimum value is $\left(\frac{1-p}{p}\right)^{a}/2$.
We conclude that

(iv) If $a \le 1 - p/(1-p)$ and $2a \le 1 - p/(1-p)$,
then the infimum in \eq{ntsgU} occurs
at $\zeta=2a$ and at this point we have
$\psi(a,\zeta) = \left(\frac{1-p}{p}\right)^a/2$.
The reason is that
on $[a, 1 - p/(1-p)]$, $\psi = \psi_3$ achieves its minimum at $2a$,
and on $[1-p/(1-p),1]$, $\psi = \psi_2$ is increasing since
$\Phi(a,1 - p/(1-p))\ge 0 $.

(v) If $a \le 1 - p/(1-p)$ and $2a > 1 - p/(1-p)$,
then the infimum in \eq{ntsgU} occurs
at $\zeta =\phi(a)$ and its value is equal to
$\frac{\phi(a)-a}{\phi(a)} \left( (1-p) (2-\phi(a)) / (1-\phi(a)) \right)^a$.
The reason is that on $[a, 1 - p/(1-p)]$, $\psi = \psi_3$ is decreasing,
and on $[1-p/(1-p),1]$, $\psi = \psi_2$ achieves its minimum at $\phi(a)$
since $\Phi(a,1 - p/(1-p))\le0$ and $\Phi(a,1)\ge0$.

(vi) If $a > 1 - p/(1-p)$,
then the infimum in \eq{ntsgU} occurs
at $\zeta =\phi(a)$ and its value is equal to
$\frac{\phi(a)-a}{\phi(a)} \left( (1-p) (2-\phi(a)) / (1-\phi(a)) \right)^a$.
The reason is that on $[a,1]$, $\psi = \psi_2$ achieves its minimum
at $\phi(a)$ since
$\Phi(a,a)\le 0 $ and $\Phi(a,1) \ge 0$.

In all cases we proved that $g_U(a)$ actually gives the value of the infimum in \eq{ntsgU}, and this concludes the proof of \eq{ntsgU} and of part (a).

(b)
Consider the definition of $g_U$ in \eq{gu}.
The formulae in \eq{gUder} for the cases  
`$p\ge 1/2$ and $0 < a \le 1 - 1/2p$'
and 
`$p < 1/2$ and $0 < a \le \frac{1-2p}{2-2p}\:$'
are clearly true, so we assume that $a$ is in the `otherwise' case.
We use the equality~\eq{ntsgU}.
Note that as proved in part (a), the infimum in~\eq{ntsgU} occurs at the point $\zeta=\phi(a)$ that has $\left.\frac{\partial \psi}{\partial \zeta}\right|_{(a,\phi(a))} = 0$. This implies for every $a_0$,
\begin{align*}
\frac{\mathrm{d} g_U}{\mathrm{d} a}(a_0) & =
\frac{\partial \psi}{\partial a} {(a_0,\phi(a_0))}
+
\frac{\partial \psi}{\partial \zeta} {(a_0,\phi(a_0))} \times
\frac{\mathrm{d} \phi}{\mathrm{d} a}({a_0}) \\
&
=\frac{\partial \psi}{\partial a} {(a_0,\phi(a_0))}
=\left.
\frac{\partial}{\partial a}
\left[
\frac{\zeta-a}{\zeta} \left(\frac{a}{(\zeta-a)f(\zeta)}\right)^{a/\zeta}\right]
\right|_{(a_0,\phi(a_0))} \:,
\end{align*}
and \eq{gUder} follows from computing this partial derivative
and putting $\zeta = \phi(a_0)$.

We next prove the continuity of $g_U$ and its derivative.
Note that by Lemma~\ref{lem:s}(a), if $a\in(0,1)$ then $\phi(a)\in(0,1)$.
First, $g_U$ is continuous at $a=1$ since
\begin{align*}
\lim_{a\to 1}
\frac{\phi(a)-a}{\phi(a)} \left( \frac{(1-p) (2-\phi(a))}{1-\phi(a)} \right)^a &
=
\lim_{a\to 1}
\frac{\phi(a)-a}{\phi(a)} \left( \frac{a}{p(2-\phi(a))(\phi(a)-a)} \right)^a \\
&
= \lim_{a\to 1}
\frac{(\phi(a)-a)^{1-a}}{\phi(a)} \left( \frac{a}{p(2-\phi(a))} \right)^a
=
\frac{1}{p} \:.
\end{align*}

For $p\ge 1/2$, the only discontinuity for $g_U$ can possibly occur at $b = 1 - 1/2p$. However at this point we have $\phi(b) = 2b = 2 - p^{-1}$
so that $(1-p)(2-\phi(b)) = 1-\phi(b)$.
Hence the left and right limits of $g_U$ equal $1/2$,
and the left and right limits of $g'_U$ equal 0.
Therefore, both $g_U$ and $g'_U$ are continuous at $b$.

For $p < 1/2$, the only discontinuity for $g_U$ can possibly occur at $c = (1-2p)/(2-2p)$. However at this point $\phi(c) = 2c = (1-2p)/(1-p)$
so that $\frac{(1-p)(2-\phi(c))}{1-\phi(c)} = \frac{1-p}{p}$.
Hence the left and right limits of $g_U$ equal $\left(p^{-1}-1\right)^{c}/2$,
and the left and right limits of $g'_U$ equal $\log\left(p^{-1}-1\right)\left(p^{-1}-1\right)^{c}/2$.
Therefore, both $g_U$ and $g'_U$ are continuous at $c$.

(c)
Note that $g_U$ is positive everywhere, so $\log(g_U)$ is (strictly) increasing
if and only if $g_U$ is (strictly) increasing.
By the formulae for $g'_U$ in part (b),
it is easy to see that $g'_U$ is always non-negative,
and is positive when $g_U(a) > 1/2$.
To show $\log (g_U)$ is convex, we need to show its derivative, i.e. $g'_U/g_U$ is increasing.
This also follows from part (b), noting that $\phi$ is increasing by Lemma~\ref{lem:s}(c).
\end{proof}

\end{document}